\newcounter{magicrownumbers}
\newtheorem{remark}{\bf Remark}[section]
\newtheorem{lemma}{\bf Lemma}[section]
\newtheorem{theorem}{\bf Theorem}[section]
\newtheorem{result}{\bf Result}[section]
\newtheorem{illustration}{\bf Illustration}[section]
\theoremstyle{remark}
\newtheorem{example}{\bf Case}
\begin{document}
	\thispagestyle{empty}

\pagenumbering{arabic}
\newpage

\title{Optimal designs for some bivariate cokriging models}

\author{{\small Subhadra Dasgupta$^1$, Siuli Mukhopadhyay$^{2}$ and Jonathan Keith$^3$}\\%\vspace{0.4cm} 
	{\small \it $^1$ IITB-Monash Research Academy, India}\\%\vspace{0.4cm}
	{\small \it $^2$ Department of Mathematics, Indian Institute of Technology Bombay, India}\\%\vspace{0.4cm}
	{\small \it $^3$ School of Mathematics, Monash University, Australia}}

\date{}

\maketitle

\noindent {\bf Abstract}

This article focuses on the estimation and design aspects of a bivariate collocated cokriging experiment. For a large class of covariance matrices, a linear dependency criterion is identified, which allows the best linear unbiased estimator of the primary variable in a bivariate collocated cokriging setup to reduce to a univariate kriging estimator. Exact optimal designs for efficient prediction for such simple and ordinary reduced cokriging models with one-dimensional inputs are determined. Designs are found by minimizing the maximum and the integrated prediction variance, where the primary variable is an Ornstein-Uhlenbeck process. For simple and ordinary cokriging models with known covariance parameters, the equispaced design is shown to be optimal for both criterion functions. The more realistic scenario of unknown covariance parameters is addressed by assuming prior distributions on the parameter vector, thus adopting a Bayesian approach to the design problem. The equispaced design is proved to be the Bayesian optimal design for both criteria. The work is motivated by designing an optimal water monitoring system for an Indian river.

\vspace{.5em}
Keywords: Cross-covariance, Equispaced designs, Exponential Covariance , Gaussian Processes, Mean squared error of prediction

	\section{Introduction}

	Kriging is a method for estimating a variable of interest, known as the \textit{primary variable}, at unknown input sites. When multiple responses are collected, multivariate kriging, also known as cokriging, is a related method for estimating the variable of interest at a specific location using measurements of this variable at other input sites along with the measurements of
%we cokrig the variable of interest at a specific location with data about itself and about 
\textit{auxiliary/secondary variables}, which may provide useful information about the primary variable \citep{Myers_1983, Myers_1991,  Book_Wackernagel,  Book_Chiles_Delfiner}. For example, consider a water quality study in which a geologist is interested in estimating pH levels (primary response) at several unsampled locations along a river, but auxiliary information such as phosphate concentration or amount of dissolved oxygen may facilitate more accurate estimates of pH levels. We may also consider a computer experiment,  where the engineering code produces the primary response and its partial derivatives. The partial derivatives (secondary variables) provide valuable information about the response \citep{Book_Santner}. This scenario is typical when the responses measured are correlated, both non-spatially (at the same input sites) and spatially (over different sites, particularly those close to each other).

Very little is known about designs for such cokriging models.  \cite{Cokriging_Li_Zimmerman_2015}, \cite{madani2019comparison}, \cite{Bueso_etal_1999}, \cite{Le_Zidek_1994}, \cite{CASELTON_ZIDEK_1984} developed optimal designs for multivariate kriging models or multivariate spatial processes, however the designs were all based on numerical simulations. 
The key difficulty in using such multivariate models is specifying the cross-covariance between the different random processes. Unlike direct covariance matrices, cross-covariance matrices need not be symmetric; indeed, these matrices must be chosen in such a way that the second-order structure always yields a non-negative definite covariance matrix \citep{Genton_Kleiber_2015, subramanyam2004equivalence}. A broad list of valid covariance structures for multivariate kriging models has been proposed by \cite{Cokriging_Li_Zimmerman_2015}.%In this article, we focus on a wide variety of valid covariance structures used for cokriging models (see Table \ref{TableCovarianceStructures}). Some of these structures have been used before for multivariate kriging models by \cite{Cokriging_Li_Zimmerman_2015}. 
%We derive the validity of the 'Generalized Markov' type covariance structure, which is an improvement of 'Markov type' covariance structure, as outlined in \cite{Almeida_Journel_1994}, \cite{Journel_1999}, \cite[Chapter~5]{Book_Chiles_Delfiner}. \\

In this article, we address two issues for bivariate cokriging experiments, (i) estimation of the primary variable and (ii) determining optimal designs by minimizing the mean squared error of the estimation. In the first couple of sections, we discuss simple and ordinary bivariate collocated cokriging models, the various covariance functions available in the literature for such models, and their estimation aspects. Specifically, we consider two stationary and isotropic random functions, $Z_1$ and $Z_2$ over $\mathcal{D} \subseteq \mathbb{R} $, where $Z_1$ is the primary variable and $Z_2$ is the secondary/auxiliary variable. 
Our main interest is in the prediction of  $Z_1$, at a single location, say $x_0$, in the region of interest. For defining covariance matrices for the bivariate responses, we mainly utilize two families of stationary covariances, namely the generalized Markov-type and the proportional covariance functions. The generalized Markov-type covariance, an extended version of Markov-type covariance, is a new function proposed in this article.  Along with the generalized Markov-type and proportional covariances, the other covariance types mentioned by \cite{Cokriging_Li_Zimmerman_2015} are also studied. We prove a linear dependency condition under which the best linear unbiased predictor (BLUP) of $Z_1(x_0)$ in a bivariate cokriging model is shown to be equivalent to the BLUP in a univariate kriging setup. A wide class of covariance functions is identified which allows this reduction. 

In the later part of the article, we determine optimal designs for some cokriging models, particularly those for which the reduction holds true. We consider the maximum and the integrated cokriging variance of $Z_1(x_0)$ as the two design criterion functions. The primary variable is assumed to have an isotropic exponential covariance, that is, it satisfies $Cov[Z_{1}(x), Z_{1}(x^\prime)] = \sigma_{11} \; {e}^{-\theta |x-x^\prime| }$ with marginal variance $\sigma_{11} > 0 $ and the exponential parameter $\theta> 0 $. Note, $Z_1(x_0)$ is also called an Ornstein–Uhlenbeck process \citep{Kriging_Antognini_Zagoraiou_2010}. For known covariance parameters in simple and ordinary cokriging models, we prove that the equispaced design minimizes the maximum and integrated prediction variance, that is, it is both G-optimal and I-optimal. In real life, however, covariance parameters are most likely unknown. To address the dependency of the design selection criterion on the unknown covariance parameters, we assume prior distributions on the parameter vector and instead determine pseudo-Bayesian optimal designs. The equispaced design is also proved to be the Bayesian I- and G-optimal design.

%{\textcolor{red}{Though many researchers have studied theoretical exact D- and I- optimal designs for univariate kriging experiments with an exponential covariance structure,  no such work is available in the context of bivariate cokriging experiments. Also, this is the first work to determine theoretical Bayesian I- and G- optimal designs for  unknown covariance parameters. }}
The original contributions of this article include (i) a linear dependency condition for reduction of collocated bivariate kriging estimators to a kriging estimator, (ii) the generalized Markov-type covariance, {{(iii) G-optimal designs for known covariance parameters and G-optimal Bayesian designs, for such simple and ordinary reduced bivariate cokriging models and (iv) I-optimal Bayesian designs. }}

 {{We stress that our sole objective is to find theoretical, exact optimal designs, not numerical designs, for bivariate cokriging models. For this reason, we consider only the exponential covariance structure for the primary variable $Z_1$. Note no theoretical exact optimal designs for covariance structures other than the exponential covariance are currently available in the statistical literature. }}

Many researchers have studied  D- and I-optimal designs for univariate kriging experiments with an exponential covariance structure.
%Exact optimal designs for the location scale model were considered by \cite{Boltze_Nather_1982}, \cite{Nather_1985}, \cite[chapter~4]{Nather_1985_Book}, \cite{Muller_Pazman_2001}, \cite{Muller_Pazman_2003} and \cite{kriging_Zimmerman_2006}. 
For single responses and one-dimensional inputs, \cite{kiselak2008equidistant}, \cite{Kriging_Zagoraiou_Antognini_2009},  \cite{Kriging_Antognini_Zagoraiou_2010} proved that equispaced designs are optimal for trend parameter estimation with respect to average prediction error minimization and the D-optimality criterion. For the information gain (entropy criterion) also, the equispaced design was proved to be optimal by \cite{Kriging_Antognini_Zagoraiou_2010}. \cite{kriging_Zimmerman_2006} studied designs for universal kriging models and showed how the optimal design differs depending on whether covariance parameters are known or estimated using numerical simulations on a two-dimensional grid. \cite{diggle} proposed Bayesian geostatistical designs focusing on efficient spatial prediction while allowing the parameters to be unknown. Exact optimal designs for linear and quadratic regression %universal kriging 
models with one-dimensional inputs and error structure of the autoregressive of order one form were determined by \cite{Dette_et_al_2008}. This work was further extended by \cite{Dette_et_al_2013} to a broader class of covariance kernels, where they also showed that the arcsine distribution is universally optimal for the polynomial regression model with correlation structure defined by the logarithmic potential. \cite{baran2013optimal}  and \cite{Baran_Stehlik_2015} investigated optimal designs for parameters of shifted Ornstein-Uhlenbeck sheets for two input variables. 
%\cite{baran2013optimal} showed that for a Gaussian response, optimal designs corresponding to entropy maximization is equispaced, however, the optimal design minimizing the  integrated mean square error (IMSPE) criterion may not be equispaced. For trend parameter estimation, \cite{Baran_Stehlik_2015} showed that the equispaced design is D- optimal. The superiority of K-optimal designs  over D-optimal designs for Ornstein Uhlenbeck was discussed by \cite{baran2017k}. 
More recently, \cite{sikolya2019optimal} worked with the prediction of a complex Ornstein-Uhlenbeck process and derived the optimal design with respect to the entropy maximization criterion. 
%A new approach to design correlated responses by considering continuous time models were proposed recently by \cite{Dette2017new,Dette2016optimal}. 
%\cite{Dette2017new} gave a new approach for constructing an efficient estimator for single parameter and multiparameter; continuous time, regression models with correlated responses. While, \cite{Dette2016optimal} introduced a more generalized; signed least square estimator with respect to which the optimal designs are obtained by minimizing the variance of the estimator for a broad class of covariance kernels.
%For multivariate geostatistical models, optimal designs based on minimization of the mean squared error or the entropy function have been studied. Designs in the presence of unknown covariance parameters were considered by  \cite{Cokriging_Li_Zimmerman_2015}.
%However, as mentioned earlier most of the literature on designs  in a multivariate setting
% \citep{Cokriging_Li_Zimmerman_2015, Bueso_etal_1999, Le_Zidek_1994, CASELTON_ZIDEK_1984} 
%propose numerical optimal designs. To the best of our knowledge, this is the first article which theoretically determines exact optimal designs for bivariate cokriging models. 

In Sections \ref{Model} and \ref{SectionCovarianceFunction} we introduce bivariate cokriging models and the related functions, respectively. The linear dependency condition which allows the BLUP of a cokriging model to reduce to the BLUP of a kriging model is discussed in Section \ref{SectionModelReduction}.  In Section \ref{OptimalDesign_KnownParameter}, we discuss optimal designs for some cokriging models with known and unknown parameters. An illustration using a water quality data set is provided in Section \ref{Illustration_cokriging_known_parameter}. Concluding remarks are given in Section \ref{SectionConcludingRemarks}.
	
\section{Cokriging models and their estimation} \label{Model}
	
	In this section, multivariate kriging models along with their direct covariance and cross-covariance structures are defined. Our focus is on bivariate processes with one-dimensional inputs. Consider two simultaneous random functions $Z_{1}(\cdot)$ and $Z_{2}(\cdot)$, where $Z_{1}(\cdot)$ is the primary response and $Z_{2}(\cdot)$ the secondary response. 

	We assume both  responses are observed over the region $\mathcal{D} \subseteq \mathbb{R} $. In multivariate studies, usually the sets of points at which different random functions are observed might not coincide, but in the case that it does, the design is said to be completely collocated or simply collocated \citep{Cokriging_Li_Zimmerman_2015}. In this article, we work with a completely collocated design and consider that $Z_1(\cdot)$ and $Z_2(\cdot)$ are both sampled at the same set of points $ \mathcal{S} =  \{ x_{1}, x_{2},\ldots, x_{n} \}$, where $\mathcal{S} \subseteq \mathcal{D} \subseteq \mathbb{R} $. 
		We consider $\VectZi$ to be the $n \times 1$ vector of all observations for the random function $Z_{i}(\cdot)$ for $i=1,2$. These random functions are characterized by their mean and covariance structures, with $E[Z_{i}(x)] = m_{i}(x)$ and $Cov(Z_{i}(x), Z_{j}(x')) = \mathcal{C}_{ij}(x,x'), \text{ for } x,x' \in \mathcal{D} \text{ and }i,j = 1,2$. The underlying linear model is given by: 
	\begin{align}
	\begin{pmatrix}
	\VectZOne \\
	\VectZTwo
	\end{pmatrix} &= 
	\begin{pmatrix}
	\MatrixFOne & \pmb{0}\\
	\pmb{0} & \MatrixFTwo
	\end{pmatrix}   
	\begin{pmatrix}
	\VectPiOne\\
	\VectPiTwo
	\end{pmatrix} 
	+
	\begin{pmatrix}
	\VectEpsilonOne \\
	\VectEpsilonTwo
	\end{pmatrix}, \label{LinearModelFrom1}
	\end{align} 
	where $\MatrixFI$ is the $n \times p_{i}$ matrix, with its $k^{th}$ row given by $\Vectfi^{T}(x_{k})$, $\Vectfi(x)$ is the $p_{i} \times  1 $ vector of known basis drift functions $f_{i}^l(.)$ for $l=0,\ldots,p_{i}$ and $\VectPiI$ is the $p_{i} \times 1$ vector of parameters. From equation \eqref{LinearModelFrom1} we see $m_{i}(x) = \Vectfi^{T}(x) \VectPiI$ for  $i=1,2 \text{ and } x \in \mathcal{D}$. We assume $\VectEpsilonI$ to be a zero mean column vector of length $n$ corresponding to the random variation of $\VectZi$. The error covariance is $Cov(\epsilon_{i}(x), \epsilon_{j}(x')) = Cov(Z_{i}(x), Z_{j}(x')) = \mathcal{C}_{ij}(x,x'), \text{ for } x,x' \in \mathcal{D} \text{ and }i,j = 1,2$. 
Using matrix notation, the model in equation \eqref{LinearModelFrom1} can be rewritten as:
	\begin{align}
	\VectZ = \MatrixF \VectPi + \VectEpsilon \label{LinearModelForm2},
	\end{align}
		where $\VectZ = (\VectZOne^T , \VectZTwo^T )^T$ is a $2n \times 1$ vector, $\VectEpsilon = (\VectEpsilonOne^T , \VectEpsilonTwo^T)^T$, $\VectPi = (\VectPiOne^T , \VectPiTwo^T)^T$, and $\MatrixF = 		
	\begin{pmatrix}
	\MatrixFOne & \pmb{0}\\
	\pmb{0} & \MatrixFTwo
	\end{pmatrix}$. 
	%%   & Cov(Z_{i}(x), Z_{j}(x')) = \mathcal{C}_{ij}(x,x'), \text{ for } x,x' \in \mathcal{D} \text{ and }i,j = 1,2. \label{CovarianceFunctionRandomProcess}
	%%   \end{align}		
	% Suppose we are interested in predicting  the value of the primary random function $Z_{1}(\cdot)$ at $x_{0} \in \mathcal{D}$, where the true value of $Z_1(x_0)$ is denoted by $ Z_{0}$, that is $Z_{1}(x_{0}) \equiv Z_{0}$ using a $BLUP$. 
	We are interested in predicting  the value of the primary random function $Z_{1}(\cdot)$ at $x_{0} \in \mathcal{D}$, using the best linear unbiased predictor (BLUP). The true value of $Z_1(x_0)$ is denoted by $ Z_{0}$, that is, $Z_{1}(x_{0}) \equiv Z_{0}$. 
	%We consider the cokriging estimator of $Z_0$, as suggested by \cite[Chapter~5]{Book_Chiles_Delfiner}, an affine function of all available information on $Z_1(\cdot)$ and $Z_2(\cdot)$ at the $n$ sample points, given by: 
	A cokriging estimator of $Z_0$, as given by \citet[Chapter~5]{Book_Chiles_Delfiner}, is an affine function of all available information on $Z_1(\cdot)$ and $Z_2(\cdot)$ at the $n$ sample points, given by
	$\displaystyle{\sum_{i = 1,2} \VectLambdaI^{T}  \VectZi = \sum_{i = 1,2} \sum_{j=1}^{n}  \lambda_{i j } Z_{i}(x_{j}),  
}$ where %	: 
%: 
%	\begin{align*}
%	Z^{\ast \ast}_{ck} & = \sum_{i = 1,2} \sum_{j=1}^{n}  \lambda_{i j } Z_{i}(x_{j})  
%	= \sum_{i = 1,2} \VectLambdaI^{T}  \VectZi. 
%	\end{align*}	
	%A kriging estimator of $Z_0$, as suggested by \cite[Chapter~3]{Book_Chiles_Delfiner}, is an affine function of available information on $Z_1(\cdot)$ only at the $n$ sample points, and is given by: 
	%\begin{align*}
	%Z^{\ast }_{k} & = \sum_{j=1}^{n}  \lambda_{1 j } Z_{1}(x_{j})  
	%= \VectLambdaOne^{T} \VectZOne, 
	%\end{align*}
	$\VectLambdaI =(\lambda_{i1}, \lambda_{i2},\ldots,\lambda_{in})^T$ is an $n \times 1$ vector of weights for $i=1,2$. The cokriging %and kriging 
	estimators can be shown to be the BLUP of $Z_{0}$  \citep[see][for more details]{Hoef_1993}.
	%  The weights are estimated by setting the bias to zero and minimizing the mean squared prediction error ($MSPE$),
	%    \begin{align*}
	%   E(Z^{\ast \ast} - Z_{0})^{2} & = \sum_{i} \sum_{j} \VectLambdaI^{T} \MatrixCIJ \VectLambdaJ 
	%   - 2 \sum_{i} \VectLambdaI^{T}  \sigmaIO 
	%   + \sigma_{00},
	%   \end{align*}
	
	% Some of the notations that we would use throughout the paper are $ \sigmaIO  = Cov(\VectZi, Z_{0}) \text{ for } i=1,2$, $\sigmaNot^{T} = (\sigmaOneO^{T}, \sigmaTwoO^{T})$  and $ \sigma_{00} = Var(Z_{0},Z_{0})$. Also, the covariance of $\VectZ$ is denoted by $ 
	% \covMat =
	% \begin{bmatrix}
	% \MatrixCOneOne  & \MatrixCOneTwo \\ 
	% \MatrixCTwoOne  & \MatrixCTwoTwo \\
	% \end{bmatrix}.  $
	Some notations we use throughout the paper are:
	$ \sigmaIO  = Cov(\VectZi, Z_{0}) $ for $i=1,2$, $\sigmaNot = (\sigmaOneO^{T}, \sigmaTwoO^{T})^{T}$ and $\sigma_{00} = Cov(Z_{0},Z_{0})$. The covariance matrices are denoted $Cov(\VectZi,\VectZj) = \MatrixCIJ$ for $i,j = 1,2$, and the covariance of the entire vector $\VectZ$ is denoted $\covMat = 
	\begin{bmatrix}
	\MatrixCOneOne  & \MatrixCOneTwo \\ 
	\MatrixCTwoOne  & \MatrixCTwoTwo \\
	\end{bmatrix}$. Note, $\covMat$ is a $2n \times 2n$ matrix. 
	
%	In the rest of the article we have denoted the covariance function by $\CovFunctIJ(\cdot)$ and the covariance matrices by $\MatrixCIJ$ for all $i,j= 1,2$. 
	
	%   \begin{align}
	%  % \text{Define, }  
	%  %  Cov(\mathbf{Z_{i}}, \mathbf{Z_{j}}) &= \MatrixCIJ, \text{ for } i,j = 1,2 \\
	%  \;\; \covMat &=
	%  \begin{bmatrix}
	%  \MatrixCOneOne  & \MatrixCOneTwo \\ 
	%  \MatrixCTwoOne  & \MatrixCTwoTwo \\
	%  \end{bmatrix}.  \label{CovarianceMatrixSigma}
	%  \end{align} 
	%   , where  
	%   \begin{align*}
	%   \VectLambdaI =& (\lambda_{i1}, \lambda_{i2},...,\lambda_{iN})^T. 
	%   \end{align*}
	%   
	%   \noindent Then the unbiased affine \textbf{Cokriging estimator} for estimating $Z_{0}$ is taken as (Chiles and Delfiner \cite{Book_Chiles_Delfiner}), 
	%  
	%   The mean squared prediction error ($MSPE$) is given by (\cite{Book_Chiles_Delfiner})
	% 
	%   As stated earlier, the bias of the estimator is set to 0 (ie. the estimator is unbiased). This provides a set of constraints. Then the variance \textbf{MSPE} is minimized subject to these constraints.\\
	%   In this paper we identify optimal designs for {\em simple} and {\em ordinary} co-kriging, which we briefly define and describe in the following section.
	
	%\subsection{Simple Cokriging and Kriging Estimation} \label{Section_SimpleCokrigingModel}
	\subsection{Estimation in simple cokriging models} 
	\label{Section_SimpleCokrigingModel}

	In a simple cokriging %and kriging 
model, the {means} $m_{i}(x)$ are taken to be constant and known. Thus, without loss of generality, we may assume in such cases that the $Z_{i}$'s are zero mean processes for $ i=1,2$ and therefore in this case $\VectPi = (0, 0 )^{T} $. For known covariance parameters \citep[Chapter~5]{Book_Chiles_Delfiner} the cokriging estimator of $Z_{0}$, denoted by $Z^{\ast \ast }_{sck}$, and the cokriging variance, denoted by $ \sigma^{2}_{sck}(x_{0})$, which is also the mean squared prediction error ($MSPE$) at $x_{0}$, are given by: 
	\begin{align} 
	Z^{\ast \ast }_{sck} &= \sigmaNot^{T} \covMat^{-1} \VectZ, \label{SimpleCokrigingEstimator} \\
	\sigma^{2}_{sck}(x_{0}) &= \sigma_{00} - \sigmaNot^{T}  \covMat^{-1} \sigmaNot. \label{SimpleCokrigingMSPE}
	\end{align}
	
	%Where as for known covariance parameters \citep[Chapter~ 3]{Book_Chiles_Delfiner} kriging estimator of $Z_{0}$, denoted by $Z^{\ast }_{sk}$ and kriging variance, denoted by $ \sigma^{2}_{sk}(x_{0})$, which is also the mean squared prediction error (MSPE) at $x_{0}$, is given by: 
	%\begin{align} 
	%Z^{\ast }_{sk} &= \sigmaOneO^{T} \MatrixCOneOne^{-1} \VectZOne \label{SimplekrigingEstimator} \\
	%\sigma^{2}_{sk}(x_{0}) &= \sigma_{00} - \sigmaOneO^{T}  \MatrixCOneOne^{-1} \sigmaOneO. \label{SimplekrigingMSPE}
	%\end{align}
	
	%\subsection{Ordinary Cokriging and Kriging Estimation} \label{SectionLabelOrdinaryCoKriging}
	\subsection{Estimation in ordinary cokriging models} \label{SectionLabelOrdinaryCoKriging}
	\noindent Another popular model known as ordinary cokriging arises when the means are assumed to be constant but unknown, that is, $m_{i}(x) = \mu_{i}, i=1,2$. In this case $\VectPi = (\mu_{1}, \mu_{2})^{T} $ and the basis drift functions are given by $f_{i}^0(x) = 1$ for $i=1,2$. Hence, $
	\MatrixF = \begin{bmatrix}
	\VectOneN & \VectZeroN \\
	\VectZeroN & \VectOneN
	\end{bmatrix}, $ 	where 	$\VectOneN = (1,1,\ldots,1)_{n \times 1}^{T}$, and $\VectZeroN = (0,0,\ldots,0)_{n \times 1}^{T}$. For known covariance parameters \citep[][Chapter~5]{Hoef_1993,Book_Chiles_Delfiner} the ordinary cokriging estimator of $Z_{0}$, denoted by $Z^{\ast \ast }_{ock}$ and the cokriging variance, denoted by $ \sigma^{2}_{ock}(x_{0})$, which is also the mean squared prediction error ($MSPE$) at $x_{0}$, are given by: 
	\begin{align}
	Z^{\ast \ast}_{ock} &= \sigmaNot^{T} \covMat^{-1} \VectZ + (\VectfNot^{T} - \sigmaNot^{T} \covMat^{-1} \MatrixF) (\MatrixF^{T} \covMat^{-1} \MatrixF)^{-1} \MatrixF^{T} \covMat^{-1} \VectZ, \label{OrdinaryCokrigingEstimatorForm1} \\
	\sigma^{2}_{ock}(x_{0})  &= \sigma_{00} - \sigmaNot^{T} \covMat^{-1} \sigmaNot + (\VectfNot  -  \MatrixF^{T} \covMat^{-1}  \sigmaNot)^{T} (\MatrixF^{T} \covMat^{-1} \MatrixF)^{-1} (\VectfNot  -  \MatrixF^{T} \covMat^{-1}  \sigmaNot), \label{OrdinaryCokrigingMSPEForm1}
	\end{align} where $\VectfNot = (1, \; 0 )^{T}$.

%		$\sigmaNot $ ,$\pmb \sigma_0$, $\pmb C_0$ $\pmb {C_0}$

	\section{Bivariate covariance functions} \label{SectionCovarianceFunction}
		In Section~\ref{Model}, we noted the dependency of the cokriging estimators and their variances on the covariance functions. In this article, we consider only isotropic covariance functions, that is, $\CovFunctIJ(x,x') $ is taken as $ \CovFunctIJ(\norm{x-x'})$ for $ x,x' \in \mathcal{D} $, where $\norm{\cdot}$ is some norm function over $\mathcal{D}$. 
		
		We focus on two families of bivariate covariance functions, namely, i) the generalized Markov-type covariance and ii) the proportional covariance (see \cite{journel1999markov}, \citet[Chapter~5]{Book_Chiles_Delfiner}, \citet[Chapter~9]{Book_sudipto_banerjee2014}). Note, that both of these families allow the primary variable to assume any valid covariance. Therefore we can generate a large number of covariance functions from these two families. Also, we will see that the most popularly used covariances belong to either one of these families. Optimal designs based on some of these covariance functions are discussed later.  
	
		The first family of bivariate covariance functions that we discuss is, the newly proposed generalized Markov-type covariance function. This is an extended form of the Markov-type covariance function mentioned in \citet[Chapter~5]{Book_Chiles_Delfiner} and \cite{journel1999markov}. Suppose the two random functions $Z_1(\cdot)$ and $Z_2(\cdot)$ have respective variances $\sigma_{11}$ and $\sigma_{22}$, where $\sigma_{11},\sigma_{22}>0$ and correlation coefficient  $\rho$, $|\rho|<1$. If $( \sigma_{22}-\rho^{2}  \sigma_{11}) > 0$, then the generalized Markov-type function is given as follows: the cross-covariance function $\CovFunctOneTwo(\cdot)$ is considered to be proportional to $\CovFunctOneOne(\cdot)$ that is, $\CovFunctOneTwo(h)=\rho \CovFunctOneOne(h)$, and the direct covariance for the secondary variable is given by $\CovFunctTwoTwo(h)=\rho^2 \CovFunctOneOne(h)+(\sigma_{22}-\rho^2 \sigma_{11})\CovFunctR(h)$ for some valid correlogram $\CovFunctR(.)$ and for $h \in \mathbb{R}$. Thus, the covariance matrix for the bivariate vector $\VectZ$ under the generalized Markov-type structure has the form:
	\begin{align}
		\covMat &=
		\begin{bmatrix} 
			\MatrixMOne & \rho \MatrixMOne\\
			\rho \MatrixMOne &  \rho^{2} \MatrixMOne+ ( \sigma_{22}-\rho^{2}  \sigma_{11}) \MatrixMR
		\end{bmatrix}, \label{GeneralizedMarkovMatrix}  
	\end{align}
	where $(\MatrixMOne)_{ij} = \CovFunctOneOne(|x_{i}-x_{j}|) $ and $(\MatrixMR)_{ij} = \CovFunctR(|x_{i}-x_{j}|)$ for $i,j = 1,\ldots,n$. The validity of the proposed generalized Markov-type covariance function is discussed in details in \ref{GenMarkovValidity}.

	In the case of proportional covariances function, the direct covariance and cross-covariance of the random functions $Z_{1}(\cdot)$ and $Z_{2}(\cdot)$ are proportional to a single underlying covariance function, say $\CovFunctQ(\cdot)$, that is, $\CovFunctIJ(h) = \sigma_{ij} \CovFunctQ(h)$ for $i,j =1,2$ (see \citet[Chapter~5]{Book_Chiles_Delfiner}, \citet[Chapter~9]{Book_sudipto_banerjee2014}). If, %$[\sigma_{ij}]$
	$\begin{bmatrix}
	\sigma_{11} & \sigma_{12} \\
	\sigma_{21} & \sigma_{22} 
	\end{bmatrix}$ is a positive definite matrix, \cite{Book_Chiles_Delfiner} states that $\CovFunctIJ(\cdot)$ is a valid covariance function and hence $ \covMat $ is a valid covariance matrix. Thus, under the proportional covariance model, 
	\begin{align}
		\covMat = 
		\begin{bmatrix}
			\sigma_{11} \MatrixQ & \sigma_{12} \MatrixQ \\
			\sigma_{21} \MatrixQ & \sigma_{22} \MatrixQ
		\end{bmatrix},  \label{isotropic_proportional_structure} \text{ where } (\MatrixQ)_{ij} = \CovFunctQ(|x_{i}-x_{j}|). 
	\end{align}
	
	Some of the covariance functions, popularly used for bivariate cokriging models are Mat(0.5), Mat(1.5), Mat($\infty$), NS1, NS2, NS3 (listed in Table~\ref{TableCovarianceStructures}) (\cite{Cokriging_Li_Zimmerman_2015}). Note that in fact Mat(0.5), Mat(1.5) and Mat($\infty$) belong to the proportional covariance family while covariance function NS1 belongs to the generalized Markov-type covariance family. Details are given in  Table~\ref{TableCovarianceStructures}.

		\begin{table}[H]
	
		\begin{center}
			{\scriptsize
				\begin{tabular}{ l l p{5cm} m{10cm}   }
					%\multicolumn{5}{c}{} \\
					\hline
					\hline
					& &Bivariate covariance function & Specifications    \\
					\hline
										\hline
					A. &   & Generalized Markov-Type & $	\CovFunctOneOne(0) = \sigma_{11} 	$ \\
					& &$|\rho|<1$				   & $	\CovFunctTwoTwo(\norm{\xOne-\xTwo}) = \rho^{2} \CovFunctOneOne(\norm{\xOne-\xTwo}) + ( \sigma_{22}-\rho^{2}  \sigma_{11}) \CovFunctR(\norm{\xOne-\xTwo})$ \\
					&& $( \sigma_{22}-\rho^{2}  \sigma_{11}) > 0$ & $	\CovFunctTwoOne(\norm{\xOne-\xTwo}) = \rho \CovFunctOneOne(\norm{\xOne-\xTwo})	$ \\
					&&	$\sigma_{11}, \sigma_{22}   > 0$    & $	\CovFunctOneTwo(\norm{\xOne-\xTwo}) = \CovFunctTwoOne(\norm{\xOne-\xTwo})	$ \\
					\hline
					    &a.&  NS1 			& $	\CovFunctOneOne(\norm{\xOne-\xTwo}) = \sigma_{11} \lambda^{\norm{\xOne-\xTwo}}	$ \\
					&	&			& $	\CovFunctTwoTwo(\norm{\xOne-\xTwo}) = \sigma_{22} \lambdac^{2}  \lambda^{\norm{\xOne-\xTwo}} + 
					\sigma_{22} (1-\lambdac^{2})  \lambda^{2\norm{\xOne-\xTwo}}$ \\
					&			&			& $	\CovFunctOneTwo(\norm{\xOne-\xTwo}) = (\sigma_{11} \sigma_{22})^{1/2} \lambdac \lambda^{\norm{\xOne-\xTwo}} $ \\
					&		&		& $	\CovFunctOneTwo(\norm{\xOne-\xTwo}) = \CovFunctTwoOne(\norm{\xOne-\xTwo})	$ \\
					& \multicolumn{3}{c}{(Taking, $\CovFunctOneOne(\norm{\xOne-\xTwo}) = \sigma_{11} \lambda^{\norm{\xOne-\xTwo}}$, $\rho = (\sigma_{11}/ \sigma_{22})^{1/2} \lambdac$,  and $\CovFunctR(\norm{\xOne-\xTwo}) =   \lambda^{2\norm{\xOne-\xTwo}}$ in A.)}   \\
					\hline
					\hline
					
					B. &   & Proportional Covariance & $\CovFunctOneOne(\norm{\xOne-\xTwo}) = \sigma_{11} \CovFunctQ(\norm{\xOne-\xTwo})	$ \\
					& &$(\sigma)_{ij}$ is a positive definite matrix & $ \CovFunctTwoTwo(\norm{\xOne-\xTwo}) = \sigma_{22} \CovFunctQ(\norm{\xOne-\xTwo})$ \\
					& &$\CovFunctQ(\cdot)$ is any valid covariance function											  & $ \CovFunctOneTwo(\norm{\xOne-\xTwo}) = \sigma_{12} \CovFunctQ(\norm{\xOne-\xTwo})$ \\
					&			&								  & $ \CovFunctTwoOne(\norm{\xOne-\xTwo}) = \sigma_{21} \CovFunctQ(\norm{\xOne-\xTwo})$ \\
					\hline
					
					  &b.  & Mat(0.5) & $	\CovFunctOneOne(\norm{\xOne-\xTwo}) = \sigma_{11} \lambda^{\norm{\xOne-\xTwo}}	$ \\
					&&& $	\CovFunctTwoTwo(\norm{\xOne-\xTwo}) = \sigma_{22} \lambda^{\norm{\xOne-\xTwo}}$ \\
					&&& $	\CovFunctOneTwo(\norm{\xOne-\xTwo}) = (\sigma_{11} \sigma_{22})^{1/2} \lambdac \lambda^{\norm{\xOne-\xTwo}}	$ \\
					&&& $	\CovFunctOneTwo(\norm{\xOne-\xTwo}) = \CovFunctTwoOne(\norm{\xOne-\xTwo})		$ \\
						& \multicolumn{3}{c}{(Taking, $\CovFunctQ(\norm{\xOne-\xTwo})	= \lambda^{\norm{\xOne-\xTwo}}	
							$ and $\sigma_{12} = (\sigma_{11} \sigma_{22})^{1/2} \lambdac $ in B.)} \\
					\hline
					  &c.& Mat(1.5) & $	\CovFunctOneOne(\norm{\xOne-\xTwo}) = \sigma_{11} [1-\norm{\xOne-\xTwo}log(\lambda)]\lambda^{\norm{\xOne-\xTwo}}	$ \\
					&		& & $	\CovFunctTwoTwo(\norm{\xOne-\xTwo}) = \sigma_{22} [1-\norm{\xOne-\xTwo}log(\lambda)]\lambda^{\norm{\xOne-\xTwo}}$ \\
					&		& & $	\CovFunctOneTwo(\norm{\xOne-\xTwo}) = (\sigma_{11} \sigma_{22})^{1/2} \lambdac [1-\norm{\xOne-\xTwo}log(\lambda)]\lambda^{\norm{\xOne-\xTwo}}$ \\
					&		& & $	\CovFunctOneTwo(\norm{\xOne-\xTwo}) = \CovFunctTwoOne(\norm{\xOne-\xTwo})		$ \\
					&\multicolumn{3}{c}{					(Taking, $\CovFunctQ(\norm{\xOne-\xTwo})	= [1-\norm{\xOne-\xTwo}log(\lambda)]\lambda^{\norm{\xOne-\xTwo}}	
						$ and $\sigma_{12} = (\sigma_{11} \sigma_{22})^{1/2} \lambdac  $ in B.)} \\
					\hline				
					 & d.  & Mat($\infty$) & $\CovFunctOneOne(\norm{\xOne-\xTwo}) = \sigma_{11} \lambda^{\norm{\xOne-\xTwo}^{2}}	$ \\
					&		&	  & $\CovFunctTwoTwo(\norm{\xOne-\xTwo}) = \sigma_{22} \lambda^{\norm{\xOne-\xTwo}^{2}}$ \\
					&		&	  & $\CovFunctOneTwo(\norm{\xOne-\xTwo}) = (\sigma_{11} \sigma_{22})^{1/2} \lambdac \lambda^{\norm{\xOne-\xTwo}^{2}}$ \\
					&		&	  &$ \CovFunctOneTwo(\norm{\xOne-\xTwo}) = \CovFunctTwoOne(\norm{\xOne-\xTwo})	$
					\\
					&\multicolumn{3}{c}{(Taking, $\CovFunctQ(\norm{\xOne-\xTwo})	= \lambda^{\norm{\xOne-\xTwo}^{2}}	
						$ and $\sigma_{12} = (\sigma_{11} \sigma_{22})^{1/2} \lambdac  $ in B.)}
						\\
					\hline\hline
					C.    & &	NS2  			& $	\CovFunctOneOne(\norm{\xOne-\xTwo}) = \sigma_{11} \lambda^{\norm{\xOne-\xTwo}}	$ \\
					&&& $	\CovFunctTwoTwo(\norm{\xOne-\xTwo}) = \sigma_{22} \lambda^{\norm{\xOne-\xTwo}}$ \\
					&&& $	\CovFunctOneTwo(\norm{\xOne-\xTwo}) = (\sigma_{11} \sigma_{22})^{1/2} \lambdac \lambda^{\alpha \norm{\xOne-\xTwo}} $ \\
					&&& $	\CovFunctOneTwo(\norm{\xOne-\xTwo}) = \CovFunctTwoOne(\norm{\xOne-\xTwo})	$ \\
					&&& where $\alpha  = 0.5, 0.75, 0.9 $ according to whether $ \lambdac= 0.2, 0.5, 0.8$  \\
					\hline		
					D.    &&  NS3 			& $	\CovFunctOneOne(\norm{\xOne-\xTwo}) = \sigma_{11} \lambda^{\norm{\xOne-\xTwo}}	$ \\
					&&& $	\CovFunctTwoTwo(\norm{\xOne-\xTwo}) = \sigma_{22} [1 -\norm{\xOne-\xTwo} log(\lambda) + \norm{\xOne-\xTwo}^{2}(log(\lambda))^{2}/3] \lambda^{\norm{\xOne-\xTwo}}$ \\
					&&& $	\CovFunctOneTwo(\norm{\xOne-\xTwo}) = (\sigma_{11} \sigma_{22})^{1/2} \lambdac [1 -\norm{\xOne-\xTwo} log(\lambda) ] \lambda^{\norm{\xOne-\xTwo}} $ \\
					&&& $	\CovFunctOneTwo(\norm{\xOne-\xTwo}) = \CovFunctTwoOne(\norm{\xOne-\xTwo})	$ \\
					\hline
					\hline
				\end{tabular}				
			}
		\end{center}
			\caption{Various bivariate covariance functions. Note, that $0 < \lambda < 1$, $|\lambdac| < 1$ and $\sigma_{11}, \sigma_{22}   > 0$ .	}
		\label{TableCovarianceStructures}
	\end{table}

	\section{Reduction of cokriging estimators to kriging } \label{SectionModelReduction}
	In this section, we discuss conditions under which the cokriging BLUP for the primary variable is reduced to a kriging BLUP. From Sections \ref{Section_SimpleCokrigingModel} and \ref{SectionLabelOrdinaryCoKriging}, it is not apparent that the cokriging and kriging estimators may be similar, particularly given the potentially non-zero correlation suggesting dependency between $Z_{1}(\cdot)$ and $Z_{2}(\cdot)$. However, in Lemma~\ref{ReductionCokrigToKrig}, we show that a linear dependency condition allows this reduction. Some covariance functions for which the reduction does not hold are also discussed. 
	
	We know that kriging is the univariate version of cokriging. Denoting the simple and ordinary kriging estimator of $Z_0$ by $Z^{\ast }_{sk} $ and $Z^{\ast }_{ok} $, respectively, and the respective variances ($MSPE$) at $x_{0}$ by $ \sigma^{2}_{sk}(x_{0})$ and $ \sigma^{2}_{ok}(x_{0})$, from \cite{Book_Chiles_Delfiner} we have, 
	\begin{align} 
	Z^{\ast }_{sk} &= \sigmaOneO^{T} \MatrixCOneOne^{-1} \VectZOne \label{SimplekrigingEstimator}, \\
	\sigma^{2}_{sk}(x_{0}) &= \sigma_{00} - \sigmaOneO^{T}  \MatrixCOneOne^{-1} \sigmaOneO, \label{SimplekrigingMSPE} \\
	Z^{\ast }_{ok} &= \sigmaOneO^{T} \MatrixCOneOne^{-1} \VectZOne + \dfrac{(1-\sigmaOneO^{T} \MatrixCOneOne^{-1} \VectOneN)(\VectOneN^{T} \MatrixCOneOne^{-1} \VectZOne)}{\VectOneN^{T} \MatrixCOneOne^{-1} \VectOneN} \label{OrdinarykrigingEstimator},\\
%	\sigma^{2}_{ok}(x_{0}) &= \begin{pmatrix}
%	1 & \sigmaOneO^{T}
%	\end{pmatrix}
%	\begin{pmatrix}
%	0 & \VectOneN^{T}\\
%	\VectOneN & \MatrixCOneOne 
%	\end{pmatrix}^{-1} \begin{pmatrix}
%	1 \\ \sigmaOneO
%	\end{pmatrix} \nonumber \\
	\sigma^{2}_{ok}(x_{0}) &= \sigma_{00} - \sigmaOneO^{T} \MatrixCOneOne^{-1} \sigmaOneO + \dfrac{(1-\sigmaOneO^{T} \MatrixCOneOne^{-1} \VectOneN)^2}{\VectOneN^{T} \MatrixCOneOne^{-1} \VectOneN}. \label{OrdinarykrigingMSPE}
	\end{align}

	\begin{lemma} \label{ReductionCokrigToKrig}
		For a collocated bivariate cokriging problem with isotropic covariance structures, if the covariance functions $\mathcal{C}_{11}(.)$ and $\mathcal{C}_{12}(.)$ are linearly dependent; 
		$Z^{\ast \ast }_{sck}$ \eqref{SimpleCokrigingEstimator} is equal to $Z^{\ast }_{sk} $ \eqref{SimplekrigingEstimator} and $Z^{\ast \ast}_{ock}$ \eqref{OrdinaryCokrigingEstimatorForm1} is equal to $Z^{\ast }_{ok}$ \eqref{OrdinarykrigingEstimator}. Consequently, $\sigma^{2}_{sck}(x_{0})$ \eqref{SimpleCokrigingMSPE} and $\sigma^{2}_{ock}(x_{0}) $ \eqref{OrdinaryCokrigingMSPEForm1} are equal to $\sigma^{2}_{sk}(x_{0}) $ \eqref{SimplekrigingMSPE} and $\sigma^{2}_{ok}(x_{0}) $ \eqref{OrdinarykrigingMSPE}, respectively.
	\end{lemma}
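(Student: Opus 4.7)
The plan is to exploit the hypothesis $\CovFunctOneTwo(\cdot) = \rho\,\CovFunctOneOne(\cdot)$ for some scalar $\rho$; at the matrix level this gives $\MatrixCOneTwo = \rho \MatrixCOneOne$ and $\sigmaTwoO = \rho\,\sigmaOneO$, so that
\[
\covMat \;=\; \begin{pmatrix} \MatrixCOneOne & \rho \MatrixCOneOne \\ \rho \MatrixCOneOne & \MatrixCTwoTwo \end{pmatrix}, \qquad \sigmaNot \;=\; (\sigmaOneO^{T},\,\rho\,\sigmaOneO^{T})^{T}.
\]
Set $S := \MatrixCTwoTwo - \rho^{2}\MatrixCOneOne$, which is invertible because $\covMat$ is a valid positive definite covariance matrix. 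A standard block inversion then yields
\[
\covMat^{-1} \;=\; \begin{pmatrix} \MatrixCOneOne^{-1} + \rho^{2} S^{-1} & -\rho S^{-1} \\ -\rho S^{-1} & S^{-1} \end{pmatrix}.
\]

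For the simple cokriging assertion, I would left-multiply $\covMat^{-1}$ by $\sigmaNot^{T}$. In the first block column the two $\rho^{2}\sigmaOneO^{T}S^{-1}$ contributions cancel and leave $\sigmaOneO^{T}\MatrixCOneOne^{-1}$; in the second block column the terms $-\rho\,\sigmaOneO^{T}S^{-1}$ and $+\rho\,\sigmaOneO^{T}S^{-1}$ cancel exactly, leaving $\mathbf{0}^{T}$. Hence $\sigmaNot^{T}\covMat^{-1} = (\sigmaOneO^{T}\MatrixCOneOne^{-1},\; \mathbf{0}^{T})$, and substituting this into \eqref{SimpleCokrigingEstimator} and \eqref{SimpleCokrigingMSPE} immediately recovers \eqref{SimplekrigingEstimator} and \eqref{SimplekrigingMSPE}.

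For the ordinary case, the same cancellation pattern implies that $\MatrixF^{T}\covMat^{-1}\sigmaNot$ has a zero second coordinate, so $\VectfNot - \MatrixF^{T}\covMat^{-1}\sigmaNot$ reduces to $(1 - \sigmaOneO^{T}\MatrixCOneOne^{-1}\VectOneN,\; 0)^{T}$. A parallel block computation shows that $\MatrixF^{T}\covMat^{-1}\MatrixF$ is the $2\times 2$ matrix with diagonal entries $a + \rho^{2}b$ and $b$ and off-diagonal entry $-\rho b$, where $a = \VectOneN^{T}\MatrixCOneOne^{-1}\VectOneN$ and $b = \VectOneN^{T}S^{-1}\VectOneN$; its determinant is $ab$, so the $(1,1)$-entry of its inverse is $1/a$. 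Because the correction vector is supported on its first coordinate, only this $(1,1)$-entry contributes, and the correction terms in \eqref{OrdinaryCokrigingEstimatorForm1} and \eqref{OrdinaryCokrigingMSPEForm1} collapse to exactly those of \eqref{OrdinarykrigingEstimator} and \eqref{OrdinarykrigingMSPE}.

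The real work is purely bookkeeping, specifically verifying the two cross-block cancellations that wipe out every term containing $S^{-1}$. Once that structural fact is in hand, both the simple and the ordinary reductions follow by direct substitution without needing any further property of $\MatrixCTwoTwo$; as a result, the lemma applies uniformly to every covariance family in which $\CovFunctOneTwo$ is a scalar multiple of $\CovFunctOneOne$, including the generalized Markov-type, NS1 and proportional structures of Section~\ref{SectionCovarianceFunction}.
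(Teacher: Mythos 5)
Your proposal is correct and follows essentially the same route as the paper: block (Schur-complement) inversion of $\covMat$ after using linear dependence and isotropy to write $\MatrixCOneTwo=\MatrixCTwoOne=\rho\,\MatrixCOneOne$ and $\sigmaTwoO=\rho\,\sigmaOneO$, then direct substitution; you simply make explicit the cancellations that the paper dismisses as ``simple matrix calculations.'' One small imprecision: in the ordinary-cokriging \emph{estimator} (as opposed to the variance), the right-hand factor $\MatrixF^{T}\covMat^{-1}\VectZ$ is not supported on its first coordinate, so it is not literally true that only the $(1,1)$-entry of $(\MatrixF^{T}\covMat^{-1}\MatrixF)^{-1}$ contributes; you need the full first row $(1/a,\;\rho/a)$ and one further cancellation showing that $(1/a)\,\VectOneN^{T}(\MatrixCOneOne^{-1}+\rho^{2}S^{-1})\VectZOne-(\rho/a)\,\VectOneN^{T}S^{-1}\VectZTwo+(\rho/a)\,\VectOneN^{T}(-\rho S^{-1}\VectZOne+S^{-1}\VectZTwo)$ collapses to $\VectOneN^{T}\MatrixCOneOne^{-1}\VectZOne/a$ --- which it does, so the conclusion stands.
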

	\begin{proof}
		Consider $\covMat^{-1}$, which can be written as:
		\begin{align}
		\covMat^{-1} &= 
		\begin{bmatrix}
		\MatrixCOneOne^{-1} + \MatrixCOneOne^{-1} \MatrixCOneTwo (\MatrixCTwoTwo - \MatrixCTwoOne  \MatrixCOneOne^{-1} \MatrixCOneTwo  )^{-1} \MatrixCTwoOne \MatrixCOneOne^{-1}    & - \MatrixCOneOne^{-1} \MatrixCOneTwo (\MatrixCTwoTwo - \MatrixCTwoOne  \MatrixCOneOne^{-1} \MatrixCOneTwo  )^{-1}\\ 
		- (\MatrixCTwoTwo - \MatrixCTwoOne  \MatrixCOneOne^{-1} \MatrixCOneTwo  )^{-1} \MatrixCTwoOne \MatrixCOneOne^{-1}  & (\MatrixCTwoTwo - \MatrixCTwoOne  \MatrixCOneOne^{-1} \MatrixCOneTwo  )^{-1} \\
		\end{bmatrix}. \nonumber
		\end{align}
		From the isotropy assumption we have $\mathcal{C}_{12}(\cdot) = \mathcal{C}_{21}(\cdot)$, and from the assumption of linear dependence of $\mathcal{C}_{12}(\cdot)$ and $\mathcal{C}_{11}(\cdot)$, we have $\mathcal{C}_{12}(\cdot) = c \; \mathcal{C}_{11}(\cdot)$ for some $c \in \mathbb{R}$. Since our designs are collocated, we may write $\MatrixCOneTwo =  \MatrixCTwoOne$ and $ \MatrixCOneTwo= c \; \MatrixCOneOne  $, which implies $\MatrixCOneTwo\; \MatrixCOneOne^{-1}= c \pmb I_{n}   $ %, so $\MatrixCOneTwo ^{-1}=   \; \MatrixCOneOne^{-1}/c$. 
		Also, note that $\sigmaTwoO = c \; \sigmaOneO $. Hence,
		\begin{align}
		\covMat^{-1} & = \begin{bmatrix}
		\MatrixCOneOne^{-1} + c^{2}(\MatrixCTwoTwo - \MatrixCTwoOne  \MatrixCOneOne^{-1} \MatrixCOneTwo  )^{-1}     & - c (\MatrixCTwoTwo - \MatrixCTwoOne  \MatrixCOneOne^{-1} \MatrixCOneTwo  )^{-1}\\ 
		-c (\MatrixCTwoTwo - \MatrixCTwoOne  \MatrixCOneOne^{-1} \MatrixCOneTwo  )^{-1}  & (\MatrixCTwoTwo - \MatrixCTwoOne  \MatrixCOneOne^{-1} \MatrixCOneTwo  )^{-1} \\
		\end{bmatrix} \label{LemmaEq1} 
		\end{align}
		and 
		\begin{align}
		\sigmaNot^{T} & = (\sigmaOneO^{T}, c \sigmaOneO^{T}) \label{LemmaEq2}.
		\end{align}
		
		For simple cokriging models, substituting \eqref{LemmaEq1} and \eqref{LemmaEq2} in \eqref{SimpleCokrigingEstimator} and \eqref{SimpleCokrigingMSPE}, and after some simple matrix calculations, we note that the expressions for the estimator $Z^{\ast \ast }_{sck}$ and variance $\sigma^{2}_{sck}(x_{0})$ are the same as that of a simple kriging estimator $Z^{\ast }_{sk} $ and its variance $\sigma^{2}_{sk}(x_{0}) $, respectively. 

		Following similar steps for the ordinary cokriging model case, we substitute \eqref{LemmaEq1} and \eqref{LemmaEq2} in \eqref{OrdinaryCokrigingEstimatorForm1} and \eqref{OrdinaryCokrigingMSPEForm1}. The ordinary cokriging estimator and variance can similarly be shown to be the same as that of the ordinary kriging estimator and its variance, respectively. 

	\end{proof}	
	
We study the various covariance functions in Table~\ref{TableCovarianceStructures} and identify for which functions the cokriging estimation problem reduces to a kriging problem, that is, the linear dependency condition is fulfilled. For simplicity and uniformity of notations, from this point on we take $\MatrixP$ as an $n \times n $ matrix and $\sigmaPNot$ as an $n \times 1 $ vector corresponding to any  covariance function $\CovFunctP(\cdot)$. Then, $(\MatrixP)_{ij} = \CovFunctP(|x_{i} - x_{j}|)$ and $ (\sigmaPNot)_{i} = \CovFunctP(|x_{i}-x_{0}|)$ for $i,j = 1, \ldots, n$. We consider, $\MatrixCOneOne = \sigma_{11} \MatrixP$ and $\sigmaOneO  = \sigma_{11} \sigmaPNot$. Using these notations, the kriging expressions in equations \eqref{SimpleCokrigingEstimator}, \eqref{SimpleCokrigingMSPE}, \eqref{OrdinaryCokrigingEstimatorForm1}, and \eqref{OrdinaryCokrigingMSPEForm1} become: 
	\begin{align}
	Z^{\ast }_{sk} &= \sigmaPNot^{T} \MatrixP^{-1} \VectZOne, \label{EQ1}\\
	Z^{\ast }_{ok} &= \sigmaPNot^{T} \MatrixP^{-1} \VectZOne + \dfrac{(1-\sigmaPNot^{T} \MatrixP^{-1} \VectOneN)(\VectOneN^{T} \MatrixP^{-1} \VectZOne)}{\VectOneN^{T} \MatrixP^{-1} \VectOneN} \label{EQ2},\\
	MSPE_{sk}(x_{0}) &= \sigma_{11}  
	\Big( 1 - \sigmaPNot^{T} \MatrixP^{-1}\sigmaPNot  \Big), \label{EQ3}\\
	MSPE_{ok}(x_{0})	&= \sigma_{11}  
	\Bigg( 1 - \sigmaPNot^{T} \MatrixP^{-1}\sigmaPNot   
	+ \dfrac{\Big( 1 -  \VectOneN^{T} \MatrixP^{-1} \sigmaPNot \Big)^{2}}{\VectOneN^{T} \MatrixP^{-1} \VectOneN} 
	\Bigg). \label{EQ4}
	\end{align}
Considering the covariance functions from Table~\ref{TableCovarianceStructures} in details we see:

	\begin{example}  \label{ExampleMarkovType}
		Generalized Markov-Type: %\\
		Here we note $\mathcal{C}_{12}(\cdot)$ and $\mathcal{C}_{11}(\cdot)$ are linearly dependent, that is, $\mathcal{C}_{12}(\cdot) = \rho \mathcal{C}_{11}(\cdot)$. From \eqref{GeneralizedMarkovMatrix}, we may write the cross-covariance matrix as, \\$\covMat =
		\begin{bmatrix} 
		\MatrixMOne & \rho \MatrixMOne\\
		\rho \MatrixMOne & 
		\rho^{2} \MatrixMOne+ ( \sigma_{22}-\rho^{2}  \sigma_{11}) \MatrixMR
		\end{bmatrix}  $ and 
		$\sigmaNot = \begin{bmatrix}
		\sigmaOneO\\
		\sigmaTwoO
		\end{bmatrix}
		=
		\begin{bmatrix}
		\sigmaOneO\\
		\rho \; \sigmaOneO 
		\end{bmatrix} $. Considering $\MatrixP$ and $\sigmaPNot$ to be specified by any valid covariance function $\CovFunctP(\cdot)$, the simple and ordinary cokriging estimators and variances are as in equations \eqref{EQ1}, \eqref{EQ2}, \eqref{EQ3} and \eqref{EQ4}. 
		Thus, for the generalized Markov-type covariance given in Table~\ref{TableCovarianceStructures}, the cokriging estimation reduces to kriging estimation. 
	\end{example}
	
	\begin{example} \label{ExampleIsotropic}
		Proportional covariances: %\\
		In this case the underlying covariance function is given by $\CovFunctQ(\cdot)$ in equation \eqref{isotropic_proportional_structure}. Consider $\CovFunctP(\cdot)= \CovFunctQ(\cdot)$, then from equation \eqref{isotropic_proportional_structure} we obtain,
		$\covMat = 
		\begin{bmatrix}
		\sigma_{11} \MatrixP & \sigma_{12} \MatrixP\\
		\sigma_{21} \MatrixP & \sigma_{22} \MatrixP
		\end{bmatrix}$ and 
		$ \sigmaNot 
		=
		\begin{bmatrix}
		\sigma_{11} \sigmaPNot\\
		\sigma_{12} \sigmaPNot
		\end{bmatrix}$. Here, we have $\sigma_{12}$ = $\sigma_{21}$, due to the isotropy of the covariance function. Since $\mathcal{C}_{12}(\cdot) $ and $\mathcal{C}_{11}(\cdot)$ are linearly dependent, the simple and ordinary cokriging estimators and variances are as in equations \eqref{EQ1}, \eqref{EQ2}, \eqref{EQ3} and \eqref{EQ4}. Thus, for isotropic proportional covariances also, the cokriging estimation reduces to kriging estimation. 		
	\end{example}

So, in particular, we can say that the equivalency of the kriging and cokriging estimation also holds good for Mat(0.5), Mat(1.5), and Mat($\infty$) (as they belong to the proportional covariance family) and NS1 (as it belongs to the generalized Markov type covariance family). However, this reduction does not always hold true for a collocated experiment.  
	
	\begin{example} \label{ExampleNS2}
		NS2 covariance function: In this case, we see that the cokriging estimation is not the same as the kriging estimation. \\
		Consider $\CovFunctP(\norm{h})= \lambda^{ \norm{h}}$ and $\CovFunctPTwo(\norm{h})= \lambda^{\alpha \norm{h}}$. From Table~\ref{TableCovarianceStructures}, we get $\mathcal{C}_{11}(\norm{h}) = \sigma_{11} \CovFunctP(\norm{h}) $, $\mathcal{C}_{12}(\norm{h}) = (\sigma_{11}\sigma_{22})^{1/2} \lambdac  \CovFunctPTwo(\norm{h}) $ and $\mathcal{C}_{22}(\norm{h}) = \sigma_{22} \CovFunctP(\norm{h})$. The $n \times n$ matrices $\MatrixP$, $\MatrixPAlpha$ are given as $(\MatrixP)_{ij} = \lambda^{\norm{x_{i}-x_{j}}}$, $(\MatrixPAlpha)_{ij} = \lambda^{\alpha \norm{x_{i}-x_{j}}}$ and the $n \times 1$ vectors $\sigmaPNot$, $\sigmaPAlpha$ are $(\sigmaPNot)_{i} = \lambda^{\norm{x_{i}-x_{0}}}$, $(\sigmaPAlpha)_{i} = \lambda^{\alpha \norm{x_{i}-x_{0}}}$ for all $i,j= 1,\ldots,n$. This gives rise to the bivariate covariance matrix $\covMat = 
		\begin{bmatrix}
		\sigma_{11} \MatrixP & (\sigma_{11}\sigma_{22})^{1/2} \lambdac \MatrixPAlpha \\
		(\sigma_{11}\sigma_{22})^{1/2} \lambdac \MatrixPAlpha & \sigma_{22} \MatrixP
		\end{bmatrix}$ and 
		$ \sigmaNot 
		=
		\begin{bmatrix}
		\sigma_{11} \sigmaPNot\\
		(\sigma_{11}\sigma_{22})^{1/2} \lambdac \sigmaPAlpha
		\end{bmatrix}$. In this case, 
		\begin{align*}
		Z^{\ast }_{sck} &= \sigmaPNot^{T} \MatrixP^{-1} \VectZOne \\
		& + \lambdac^{2} 
		\big[  \NSTwoTOne   -   \sigmaPAlpha  \big]^{T} \NSTwoTTwo \MatrixPAlpha \MatrixP^{-1} \VectZOne \\
		& - \lambdac (\dfrac{\sigma_{11}}{\sigma_{22}})^{1/2} 
		\big[   \NSTwoTOne - \sigmaPAlpha \big]^{T} \NSTwoTTwo \VectZTwo , \\
		%Z^{\ast }_{ok} &= \sigmaPNot^{T} \MatrixP^{-1} \VectZOne + \dfrac{(1-\sigmaPNot^{T} \MatrixP^{-1}, \VectOneN)(\VectOneN^{T} \MatrixP^{-1} \VectZOne)}{\VectOneN^{T} \MatrixP^{-1} \VectOneN} \label{EQ2},\\
		\text{while } MSPE_{sck}(x_{0}) &= \sigma_{11}   \Big( 1 - \sigmaPNot^{T} \MatrixP^{-1}\sigmaPNot  \Big) \\
		&+ \sigma_{11}  \lambdac^{2}  \Big[  -(\NSTwoTOne)^{T} \NSTwoTTwo \NSTwoTOne  \\
		&+ 2 (\NSTwoTOne)^{T} \NSTwoTTwo \sigmaPAlpha - \sigmaPAlpha^{T} \NSTwoTTwo \sigmaPAlpha \Big]. 
		% MSPE_{ok}(x_{0})	&= \sigma_{11}   \Bigg( 1 - \sigmaPNot^{T} \MatrixP^{-1}\sigmaPNot   
		%+ \dfrac{\Big( 1 -  \VectOneN^{T} \MatrixP^{-1} \sigmaPNot \Big)^{2}}{\VectOneN^{T} \MatrixP^{-1} \VectOneN} 
		%\Bigg).
		\end{align*}
	\end{example}
Similarly, in the case of an NS3 covariance function, it can be shown that the cokriging estimation differs from the kriging estimation.%	Similarly, in case of the covariance structure NS3, it can be shown that the cokriging estimation is different from the kriging estimation. %This ends the first part of this paper. In the following sections we will derive optimal designs for kriging experiments, particularly for the covariance structures where the reduction of cokriging to kriging holds true.
	
	\section{Optimal designs} \label{OptimalDesign_KnownParameter}
	In this section and the following ones, we prove various results for optimally designing collocated bivariate cokriging experiments. The set on which the random functions $Z_{1}(\cdot)$ and $Z_{2}(\cdot)$ are observed is a connected subset of $\mathbb{R}$, denoted by $\mathcal{D}$, while the set on which they are sampled is denoted by $\mathcal{S} =  \{ x_{1}, \ldots, x_{n} \}$, where $\mathcal{S} \subseteq \mathcal{D}$.
	
	%Under certain conditions on covariance structure these designs could be used to design cokriging experiments as well. All the results point that an equispaced design is optimal. 
In the context of finding a design, we are essentially interested in choosing a set of distinct points $\{ x_1,\ldots,x_n \}$ which maximizes the prediction accuracy of the primary response $Z_1(\cdot)$. To choose such a design, the supremum of $MSPE$, denoted as $SMSPE$, where
	\begin{align}
	SMSPE &= \sup_{x_{0} \in D} MSPE(x_{0}),  \label{SMSPE}
	\end{align} 
	or alternatively, an integrated version of $MSPE$ denoted by $IMSPE$, where 
	\begin{align}
	{IMSPE } &= \int\limits_{x_0\in D} MSPE(x_{0}) d(x_{0}), \label{IMSPE}
	\end{align} are minimized.

	Since replications are not allowed, the points are assumed to be ordered, that is, $x_{i} < x_{j}$  for $i < j$, and the distance between two consecutive points is denoted by $d_{i} = x_{i+1} - x_{i}$ for $ i= 1,\ldots, n-1$. For kriging models, since extrapolation should be treated with caution \citep{sikolya2019optimal}, we take an approach similar to \cite{sikolya2019optimal} and \cite{Kriging_Antognini_Zagoraiou_2010}. The starting and end points of the design, $x_{1} $ and $x_{n}$ are considered to be known and given by the extreme ends of the area under observation. This approach in fact reduces the number of variables in the design problem and makes it more simplified. Hence, $\mathcal{D}= [x_{1}, x_{n}] $ and $\sum_{i=1}^{n-1} d_{i} = x_{n} - x_{1}$. We equivalently denote the design by the vector $\xibold= (x_{1},d_{1},d_{2},\ldots,d_{n-1},x_{n})$ in terms of the starting point, consecutive distances between the points, and the end point. 
	
	In this article, for the purpose of finding optimal designs we consider simple and ordinary bivariate collocated cokriging models, with isotropic random functions. The covariance functions belongs to generalized Markov-type or proportional covariance family. For these families of covariance functions, we have seen in the earlier sections that the cokriging to kriging reduction holds true. We also consider that the primary variable $Z_{1}(\cdot)$ is an Ornstein–Uhlenbeck process with exponential parameter $\theta > 0 $ and variance $\sigma_{11}>0$. Hence, $\CovFunctP(|h|)=  {e}^{-\theta |h| }$ would mean $\mathcal{C}_{11}(|h|) = \sigma_{11} \CovFunctP(|h|) $ and the matrix $\MatrixP$ and vector $\sigmaPNot$ are given by $(\MatrixP)_{ij} = {e}^{-\theta|x_{i}-x_{j}|}$ and $(\sigmaPNot)_{i} = {e}^{-\theta|x_{i}-x_{0}|}$ for all $i,j= 1,\ldots,n$ and $x_{0} \in \mathcal{D}$.
		
Note, the optimal designs found in this paper are applicable in particular, to collocated cokriging experiments with Mat($0.5$) or NS1 covariance function as well (as they belong to proportional type and generalized Markov-type family, respectively and for both of these functions, the primary variable has an exponential covariance with exponential parameter $\theta = - log(\lambda)$ as per Table~\ref{TableCovarianceStructures}).

	 %(Note in Table \ref{TableCovarianceStructures} the primary variable for Mat($0.5$), Mat($1.5$) and NS1 has a exponential covariance structure.)
	\subsection{Optimal design results} \label{Paper_I_Section_Optimal design}
	We will show that optimal designs obtained for either criterion (\textit{SMSPE/IMSPE}), for both known and unknown covariance parameters, are equispaced. The following lemma gives the mathematical forms of $MSPE_{sck}(.)$ and $MSPE_{ock}(.)$, and are used in all the results in this article. 	
	\begin{lemma} \label{LemmaSimpleAndOrdinary}
		Consider simple and ordinary bivariate collocated cokriging models, with isotropic random functions. The bivariate covariance functions could be generalized Markov-type or proportional type with the primary variable $Z_{1}(\cdot)$ having an exponential structure, such that $\CovFunctOneOne(h) = \sigma_{11} \; e^{-\theta|h|}$ for  $\sigma_{11}, \theta > 0$. Then, the expressions for MSPE at point $x_{0} \in [x_{i}, x_{i+1}]$ for some $i = 1,\ldots,n-1$ are: 
		\begin{align*}
		MSPE_{sck}(x_{0}) 
		&= \sigma_{11} \dfrac{\left( 1   - e^{-2 \theta a}\right) \left( 1   - e^{-2 \theta (d_{i} - a)}\right)}{\left( 1   - e^{-2 \theta d_{i}}\right)}
		\end{align*} 
		and 
		\begin{align*}
		MSPE_{ock}(x_{0}) 
		&= \sigma_{11}\Bigg[ \dfrac{\left( 1   - e^{-2 \theta a}\right) \left( 1   - e^{-2 \theta (d_{i} - a)}\right)}{\left( 1   - e^{-2 \theta d_{i}}\right)} + \dfrac{1}{\qZero} \Big(1 -  \dfrac{e^{- \theta a } + e^{- \theta( d_{i} - a )  } }{ 1 + e^{- \theta d_{i}  }} \Big)^2 \Bigg], 	
		\end{align*}
		where $ a = x_{0} - x_{i}$ and $\qZero  = \VectOneN^{T} \MatrixP^{-1} \VectOneN.$
	\end{lemma}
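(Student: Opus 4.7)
The plan is to invoke Lemma~\ref{ReductionCokrigToKrig} to convert each cokriging MSPE into the corresponding kriging MSPE, and then exploit the Markov property of the Ornstein--Uhlenbeck process to collapse the $n$-dimensional quadratic forms to a two-dimensional calculation involving only the two observations at $x_i$ and $x_{i+1}$ bracketing $x_0$.

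Since the generalized Markov-type and proportional covariance families satisfy the linear-dependency hypothesis of Lemma~\ref{ReductionCokrigToKrig}, the simple and ordinary cokriging MSPEs reduce to the kriging expressions \eqref{EQ3} and \eqref{EQ4}, with $(\MatrixP)_{jk} = e^{-\theta|x_j-x_k|}$ and $(\sigmaPNot)_j = e^{-\theta|x_j-x_0|}$. It therefore suffices to evaluate the two quadratic forms $\sigmaPNot^T \MatrixP^{-1}\sigmaPNot$ and $\VectOneN^T\MatrixP^{-1}\sigmaPNot$, since $\VectOneN^T\MatrixP^{-1}\VectOneN$ is kept symbolic as $\qZero$ in the statement.

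The crucial step is the observation that $Z_1(\cdot)$ is Gauss--Markov: for $x_0\in[x_i,x_{i+1}]$, the best linear predictor of $Z_1(x_0)$ based on all $n$ observations coincides with the one based on the two nearest observations $Z_1(x_i)$ and $Z_1(x_{i+1})$. Uniqueness of simple kriging weights then forces $(\MatrixP^{-1}\sigmaPNot)_j = 0$ for $j\notin\{i,i+1\}$, and both quadratic forms reduce to expressions in the two nonvanishing weights $\lambda_i$ and $\lambda_{i+1}$. Setting $u = e^{-\theta a}$ and $v = e^{-\theta(d_i-a)}$, these weights solve a $2\times 2$ linear system with unit diagonal, off-diagonal entry $uv = e^{-\theta d_i}$, and right-hand side $(u, v)^T$, yielding $\lambda_i = u(1-v^2)/(1-u^2 v^2)$ and $\lambda_{i+1} = v(1-u^2)/(1-u^2 v^2)$.

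The remainder is elementary algebra. Substituting gives $1-\sigmaPNot^T\MatrixP^{-1}\sigmaPNot = 1 - u\lambda_i - v\lambda_{i+1}$, which collapses via the identity $1 - u^2 - v^2 + u^2 v^2 = (1-u^2)(1-v^2)$ to the claimed form of $MSPE_{sck}(x_0)$. For the ordinary-kriging correction, $\VectOneN^T\MatrixP^{-1}\sigmaPNot = \lambda_i + \lambda_{i+1}$, and factoring $(u+v)(1-uv)$ out of the numerator reduces this sum to $(u+v)/(1+uv)$; subtracting from $1$ and substituting into \eqref{EQ4} produces the stated $MSPE_{ock}(x_0)$. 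The only delicate point is justifying the collapse from $n$ to $2$ dimensions; I would argue this via Markovianity and the uniqueness of the best linear predictor rather than inverting the full tridiagonal matrix $\MatrixP$ explicitly, since the probabilistic route is short, transparent, and applies uniformly to every interior interval $[x_i,x_{i+1}]$.
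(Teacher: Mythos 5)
Your proposal is correct, and it reaches the stated formulas by a genuinely different route from the paper. The paper first reduces cokriging to kriging via Lemma~\ref{ReductionCokrigToKrig} (as you do), but then evaluates the quadratic forms $\sigmaPNot^{T}\MatrixP^{-1}\sigmaPNot$ and $\VectOneN^{T}\MatrixP^{-1}\sigmaPNot$ by brute force: it factorizes $\MatrixP=\MatrixL\MatrixD\MatrixL^{T}$, writes out the tridiagonal inverse explicitly, and decomposes $\sigmaPNot$ into pieces supported to the left and right of $x_0$ (Appendices~B and~C, equations \eqref{decomposition3}--\eqref{decomposition4}). You instead exploit the screening property of the exponential kernel to collapse the normal equations to a $2\times 2$ system in $u=e^{-\theta a}$, $v=e^{-\theta(d_i-a)}$; your weights $\lambda_i=u(1-v^2)/(1-u^2v^2)$, $\lambda_{i+1}=v(1-u^2)/(1-u^2v^2)$ and the resulting identities reproduce \eqref{decomposition3} and \eqref{decomposition4} exactly. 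Your route is shorter and more transparent for this lemma; the paper's explicit inverse has the side benefit of also yielding the closed form of $\qZero=\VectOneN^{T}\MatrixP^{-1}\VectOneN$ needed later (Appendix~A), which your argument leaves symbolic (as does the lemma statement, so this is not a defect). One small caution: the paper never assumes $Z_1$ is Gaussian, and the Markov property is a distributional statement, so rather than appealing to ``Markovianity of the best linear predictor'' you should justify the collapse algebraically --- verify that the weight vector supported on $\{i,i+1\}$ satisfies $\MatrixP\pmb{\lambda}=\sigmaPNot$ (a one-line check using $e^{-\theta|x_j-x_0|}=e^{-\theta|x_j-x_i|}e^{-\theta a}$ for $x_j\le x_i$ and its mirror image) and invoke uniqueness from the positive definiteness of $\MatrixP$. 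With that substitution your argument is fully rigorous and applies verbatim to every interior interval.
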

	 
	\begin{proof}
		Note that from Lemma~\ref{ReductionCokrigToKrig}, for the above two families of covariance function (the generalized Markov-type covariance and the proportional covariance) the cokriging estimation reduces to a kriging estimation. Using equation \eqref{decomposition3} from \ref{AppendixC}, in equation \eqref{EQ3} and doing simple algebraic computations gives the above expression of $MSPE_{sck}(x_{0})$ (same as $MSPE_{sk}(x_{0})$ in this case). Similarly, using equations \eqref{decomposition3} and \eqref{decomposition4} from \ref{AppendixC}, in equation \eqref{EQ4} gives the above expression of $MSPE_{ock}(x_{0})$ (same as $MSPE_{ok}(x_{0})$ in this case).   
	\end{proof}
	
	\noindent	Note: The $MSPE$ expressions are the same as in Lemma \ref{LemmaSimpleAndOrdinary} when the covariance functions are Mat($0.5$) or NS1 (in that case $\theta = - log(\lambda)$). \\
	
	To reduce the computational complexity, we further claim that a random process over $[x_{1}, x_{n}]$ could be viewed as a process over $[0, 1]$. 
	
	\begin{remark}
		Consider the reduced bivariate collocated cokriging models as in Lemma~\ref{LemmaSimpleAndOrdinary}, defined over $[x_{1}, x_{n}]$ and recorded at $\{x_{1}, \ldots, x_{n}\}$. From the expressions of $MSPE_{sck}$ and $MSPE_{ock}$, we can say that $Z_{1}(\cdot)$ is equivalent to an isotropic process with exponential parameter $(x_{n}- x_{1})\theta$ over $[0, 1]$ and recorded at $\{(x_{i} - x_{1})/(x_{n} - x_{1}), i = 1, \ldots, n\}$. 
	\end{remark}
	\begin{proof}
		We have the design vector $\pmb \xi$ = $(x_{1}, d_{1}, \ldots, d_{n-1}, x_{n})$, where $d_{i} = x_{i+1}- x_{i}$ for $i = 1, \ldots, n-1.$ Then, for $x_{0} \in [x_{i},  x_{i+1}]$ for some $i= 1, \ldots, n-1$, and using Lemma \ref{LemmaSimpleAndOrdinary},
		\begin{align*}
		MSPE_{sk}(x_{0};\pmb  \xi, \theta, \sigma_{11}) 
		&= \sigma_{11} \dfrac{\left( 1   - e^{-2 \theta a}\right) \left( 1   - e^{-2 \theta (d_{i} - a)}\right)}{\left( 1   - e^{-2 \theta d_{i}}\right)}.
		\end{align*} 
		Define a mapping $\chi(\cdot)$ over $[x_1, x_{n}]$ to $[0,1]$, such that, for any point $x \in [x_{1}, x_{n}]$, $\chi(x) = (x-x_{1})/(x_{n}-x_{1})$. Let, $y_{i} = \chi(x_{i}) $ for $ i = 1, \ldots, n$. If we take $g_{i} = d_{i}/(x_{n}-x_{1})$, then the design $\xibold^\ast = (0,g_{1},\ldots, g_{n-1},1)$ specifies the design or the set of points $\{y_{i}: i = 1, \ldots, n\}$, where $y_{1} = 0 $ and $y_{n}=1$. Consider the point $\chi(x_{0}) = (x_{0}-x_{1})/(x_{n}-x_{1}) \in [y_{i}, y_{i+1}]$, then we have
		\begin{align}
		 MSPE_{sk}(x_{0};\pmb \xi, \theta, \sigma_{11}) 
		&= \sigma_{11} \dfrac{\left( 1   - e^{-2 (x_{n}- x_{1})\theta a/(x_{n}- x_{1})}\right) \left( 1   - e^{-2 (x_{n}- x_{1})\theta (d_{i} - a)/(x_{n}- x_{1})}\right)}{\left( 1   - e^{-2 (x_{n}- x_{1})\theta d_{i}/(x_{n}- x_{1})}\right)} \nonumber \\
		& 		= \sigma_{11} \dfrac{\left( 1   - e^{-2 \phi b}\right) \left( 1   - e^{-2 \phi (g_{i} - b)}\right)}{\left( 1   - e^{-2 \phi g_{i}}\right)} = MSPE_{sk}(\chi(x_{0}); \pmb \xi^\ast, \phi, \sigma_{11}), \label{remarkEq2}
		\end{align} 
		where $\phi = (x_{n}- x_{1})\theta$ and $b = \chi(x_{0}) - y_{i} = (x_{0}-x_{i})/(x_{n}-x_{1}) = a/(x_{n}-x_{1})$. From equation \eqref{remarkEq2} and as $\chi(\cdot)$ is a bijective function, we can assert our claim. 
		
		Similar proof can be given for ordinary cokriging.
	\end{proof}
	Hence, if we need to find an $n$ point optimal design with fixed end points for an exponential process with parameter $\theta$ defined over $[x_{1}, x_{n}]$, we can equivalently find the $n$ point optimal design with fixed end points for the exponential process with parameter $\theta(x_{n} - x_{1})$ and defined over $[0,1]$. 
	
	Conversely, if an (optimal) design over $[0,1]$ is given by $\{y_{1}, \ldots, y_{n} \}$, where $y_{1}=0 $ and $y_{n} =1$, we can get the equivalent (optimal) design over $[x_{1}, x_{n}]$ by taking the transformation $x_{i} = (x_{n}-x_{1}) y_{i} + x_{1}$ for $i=1,\ldots,n$. 
	
	So, from now onwards since $\mathcal{D} \subseteq \mathbb{R}$ is connected, without loss of generality we assume $\mathcal{D} = [0,1]$ with $x_{1}=0$ and $x_{n}=1$, which gives $\displaystyle{\sum_{i=1}^{n-1} d_{i} = 1}$ and the design denoted by $\xibold = (d_{1},d_{2},\ldots	d_{n-1})$.

	\subsection{Optimal designs for reduced bivariate simple cokriging model with known parameters} \label{OptimalDesign_SimpleKrigingKnownParameter}
	In this section, we determine optimal designs for a simple cokriging model in Theorems~\ref{Theorem_SK_SMSPE} and \ref{Theorem_SK_IMSPE}.

	\begin{theorem} \label{Theorem_SK_SMSPE} 
		Consider the reduced bivariate simple cokriging models as in Lemma~\ref{LemmaSimpleAndOrdinary}, with the covariance parameters of the primary response, $\theta$ and $\sigma_{11}$, being known. An \textbf{equispaced} design minimizes the $SMSPE_{sck}$. Thus, the equispaced design is the G-optimal design.
	\end{theorem}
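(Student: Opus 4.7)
The plan is to split the optimization into two independent problems: first, localize the supremum of $MSPE_{sck}(x_0)$ on each subinterval $[x_i, x_{i+1}]$ as a function only of the gap $d_i$; second, reduce the overall design question to the classical min-max problem of minimizing $\max_i d_i$ subject to $\sum d_i = 1$.

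First I would fix an index $i$ and treat $MSPE_{sck}$ as a function of $a = x_0 - x_i$ on $[0, d_i]$. From Lemma~\ref{LemmaSimpleAndOrdinary} the denominator does not depend on $a$, so it suffices to differentiate the numerator $(1-e^{-2\theta a})(1-e^{-2\theta(d_i-a)})$. A routine differentiation gives $2\theta\bigl(e^{-2\theta a} - e^{-2\theta(d_i-a)}\bigr)$, which vanishes uniquely at $a = d_i/2$. Since $MSPE_{sck}$ equals $0$ at the endpoints $a=0$ and $a=d_i$ (where $x_0$ coincides with a sampled site) and is strictly positive in the open interval, this interior critical point must be the maximum. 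Substituting $a=d_i/2$ and simplifying yields
\begin{equation*}
\sup_{x_0 \in [x_i, x_{i+1}]} MSPE_{sck}(x_0) \;=\; \sigma_{11}\,\frac{(1-e^{-\theta d_i})^2}{1-e^{-2\theta d_i}} \;=\; \sigma_{11}\,\frac{1-e^{-\theta d_i}}{1+e^{-\theta d_i}},
\end{equation*}
a quantity that depends on the design only through $d_i$.

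Next, since the map $d \mapsto (1-e^{-\theta d})/(1+e^{-\theta d})$ is strictly increasing on $(0,\infty)$ for any $\theta>0$, the global supremum over $\mathcal{D}=[0,1]$ becomes
\begin{equation*}
SMSPE_{sck}(\xibold) \;=\; \sigma_{11}\,\frac{1-e^{-\theta\, d_{\max}}}{1+e^{-\theta\, d_{\max}}}, \qquad d_{\max} := \max_{1 \le i \le n-1} d_i.
\end{equation*}
Minimizing $SMSPE_{sck}$ over $\xibold$ is therefore equivalent to minimizing $d_{\max}$ subject to the constraints $d_i > 0$ and $\sum_{i=1}^{n-1} d_i = 1$. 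This is a classical min-max problem whose solution is immediate: by averaging, $d_{\max} \ge 1/(n-1)$, with equality if and only if $d_1 = d_2 = \cdots = d_{n-1} = 1/(n-1)$, i.e., the equispaced design.

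I do not expect any genuine obstacle here; the argument is essentially a one-variable calculus exercise followed by a pigeonhole-type inequality. The only small subtlety worth spelling out is why the unique interior critical point of the subinterval problem is a maximum rather than a minimum, which is handled cleanly by the vanishing of $MSPE_{sck}$ at the two endpoints of each $[x_i,x_{i+1}]$. With that observation, monotonicity of the hyperbolic-tangent-type expression in $d_i$ reduces G-optimality to minimizing the largest gap, and equispacing follows.
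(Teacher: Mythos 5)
Your proposal is correct and follows essentially the same route as the paper: locate the per-interval supremum at the midpoint $a=d_i/2$ (the paper does this in its Appendix~\ref{AppendixD} via the second-derivative test, whereas you use the equivalent observation that the MSPE vanishes at the endpoints), obtain $\sup_{x_0\in[x_i,x_{i+1}]}MSPE_{sck}=\sigma_{11}(1-e^{-\theta d_i})/(1+e^{-\theta d_i})$, and use monotonicity of this expression in $d_i$ to reduce the problem to minimizing $\max_i d_i$, which forces the equispaced design. No gaps; your explicit averaging argument for $d_{\max}\ge 1/(n-1)$ is a slightly more complete justification of the final step than the paper provides.
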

	\begin{proof}
		Consider a point $x_{0} \in \mathcal{D}$, such that $x_0 \in [x_i, x_{i+1}]$ for some $i = 1,\ldots,n-1$, then from Lemma~\ref{LemmaSimpleAndOrdinary},    
		\begin{align}
		MSPE_{sck}(x_{0}) 
		& = \sigma_{11} \dfrac{\left( 1   - e^{-2 \theta a}\right) \left( 1   - e^{-2 \theta (d_{i} - a)}\right)}{\left( 1   - e^{-2 \theta d_{i}}\right)}. \nonumber % , \;\; \text{ where } a = x_{0} - x_{i} . \nonumber %\label{Thm1_eq1}	
		\end{align}  		
%		Since, $x_0 \in [x_i, x_{i+1}]$, therefore $a \in [0,d_{i}]$ for $i=1,\ldots, n-1$. Now, consider the function:  		
%		\begin{flalign}
%		W_{i}:& \;\; [0,d_{i}] \; \to \mathbb{R} \;\;\;\;\;\;\;\;\;\;\;\;\;  \nonumber \\%\mbox{ such that }	 \\
%		a & \mapsto	  \dfrac{\left( 1   - e^{-2 \theta a}\right) \left( 1   - e^{-2 \theta (d_{i} - a)}\right)}{\left( 1   - e^{-2 \theta d_{i}}\right)}.  \nonumber	 
%		\end{flalign}
%		We have, 
%		\begin{align}
%		& \dfrac{d W_{i}(a)}{ d a }  = \dfrac{2\theta\left( e^{-2 \theta a}   - e^{-2 \theta (d_{i} - a)}\right)}{\left( 1   - e^{-2 \theta d_{i}}\right)},   \nonumber	   
%		\end{align}
%		where, 
%		\begin{align}
%		& \dfrac{d W_{i}(a)}{ d a }\Bigg|_{a = d_{i}/2} = 0, \label{Thm1_eq2} 
%		\end{align}  
%		and 
%		\begin{align}
%		\dfrac{d^{2} W_{i}(a)}{ d a^{2} }   &=  \dfrac{ -4 \theta^2	 \left( e^{-2 \theta a}   + e^{-2 \theta (d_{i} - a)}\right)}{\left( 1   - e^{-2 \theta d_{i}}\right)} < \; 0.   \label{Thm1_eq3}
%		\end{align} 
		From \ref{AppendixD}, we see that for $x_0 \in [x_i, x_{i+1}]$, $MSPE_{sck}(x_{0})$ is maximized at $x_{0} = x_{i} + \dfrac{d_{i}}{2}$, which is the mid-point of the interval $[x_i, x_{i+1}]$. From equation \eqref{Thm1_eq4} we have,
		\begin{align}
		\sup_{x_0 \in [x_{i},x_{i+1}]} MSPE_{sck}(x_{0}) = & \; \sigma_{11} \frac{1 - e^{-\theta d_i}}{1 + e^{-\theta d_i}}. \nonumber	
		\end{align}
		Consider, $W_{{sup}}(\cdot)$ to be a function defined on $[0,1]$, such that $W_{{sup}}(d) = \dfrac{1 - e^{-\theta d}}{1 + e^{-\theta d}}$. Then $W_{{sup}}(d)$ is an increasing function in $d$, as $W^{\prime}_{{sup}}(d) = \dfrac{2 \theta  e^{- \theta d}}{1 + e^{- \theta d}} > 0 .$ Hence,
		\begin{align}
		SMPSE_{sck} &= \sup_{x_0 \in [0,1]} MSPE_{sck}(x_{0}) \nonumber \\
		&= 		  \max_{i=1, \ldots, n-1} \; \sup_{x_0 \in [x_{i},x_{i+1}]} MSPE_{sck}(x_{0}) \nonumber \\ 
		&=\sigma_{11} \; \max_{i=1, \ldots, n-1} W_{{sup}}(d_{i}) \;\;\;\;\;\;\;\; (\text{from equation } \eqref{Thm1_eq4}) \nonumber \\ 
		%&= \sigma_{11} \; \frac{1 - e^{-\theta \max_i d_i}}{1 + e^{-\theta \max_i d_i}}  \nonumber	\\
		&= \sigma_{11} \; W_{{sup}}(\max_i \; d_i) \label{Thm1_SMSPE_eq} 
		\end{align}
		From equation \eqref{Thm1_SMSPE_eq}, for known $\theta$ and $\sigma_{11}$, the $SMSPE_{sck}$ is a function of $\smash{\displaystyle \max_{i} } \; d_i$. Since $W_{{sup}}(d)$ is an increasing function, therefore $SMSPE_{sck}$ is minimized when $\smash{\displaystyle \max_{i} }\; d_i$ is minimized, which occurs for an equispaced partition.  
	\end{proof}

	\begin{theorem} \label{Theorem_SK_IMSPE}
			Consider the reduced bivariate simple cokriging models as in Lemma~\ref{LemmaSimpleAndOrdinary}, with known covariance parameters $\theta$ and $\sigma_{11}$. An \textbf{equispaced} design minimizes the $IMSPE_{sck}$. Thus, the equispaced design is the I-optimal design.
		\end{theorem}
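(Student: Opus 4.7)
The plan is to reduce the integrated criterion to a sum of one-variable functions of the gap lengths $d_i$, then apply a convexity/Jensen argument.

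First, I would split the integral over $[0,1]$ into the $n-1$ pieces $[x_i, x_{i+1}]$ and, on each piece, substitute $a = x_0 - x_i$ so that Lemma~\ref{LemmaSimpleAndOrdinary} gives
\begin{align*}
\int_{x_i}^{x_{i+1}} MSPE_{sck}(x_0)\,dx_0
= \frac{\sigma_{11}}{1-e^{-2\theta d_i}} \int_0^{d_i} \bigl(1-e^{-2\theta a}\bigr)\bigl(1-e^{-2\theta(d_i-a)}\bigr)\,da.
\end{align*}
Expanding the product and evaluating the three elementary integrals of $1$, $e^{-2\theta a}$, $e^{-2\theta(d_i-a)}$ and $e^{-2\theta d_i}$ gives, after collecting terms, a closed form of the shape
\begin{align*}
\int_{x_i}^{x_{i+1}} MSPE_{sck}(x_0)\,dx_0 = \sigma_{11}\Bigl[\, d_i\,\frac{1+e^{-2\theta d_i}}{1-e^{-2\theta d_i}} \;-\; \tfrac{1}{\theta}\Bigr].
\end{align*}
Writing $f(d) := d\,\coth(\theta d)$ (using $(1+e^{-2\theta d})/(1-e^{-2\theta d}) = \coth(\theta d)$), summing over $i$, and using $\sum_i d_i = 1$, we get
\begin{align*}
IMSPE_{sck}(\xibold) = \sigma_{11}\sum_{i=1}^{n-1} f(d_i) - \frac{(n-1)\sigma_{11}}{\theta}.
\end{align*}

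Next I would minimize $\sum_{i=1}^{n-1} f(d_i)$ over the simplex $\{d_i > 0,\ \sum d_i = 1\}$. If $f$ is strictly convex on $(0,1)$, Jensen's inequality immediately yields $\sum f(d_i) \ge (n-1)\, f\!\bigl(1/(n-1)\bigr)$ with equality iff all $d_i$ are equal, giving the equispaced design as the unique minimizer.

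The main obstacle is thus the convexity of $f(d) = d\,\coth(\theta d)$. I would verify this by direct differentiation: one finds
\begin{align*}
f''(d) = 2\theta\,\operatorname{csch}^2(\theta d)\bigl[\,\theta d\,\coth(\theta d) - 1\,\bigr],
\end{align*}
and the bracket is positive for all $d>0$ because $x\coth(x) > 1$ whenever $x>0$ (equivalently $\sinh x < x\cosh x$, which follows by comparing Taylor coefficients or by noting that $g(x) = x\cosh x - \sinh x$ satisfies $g(0)=0$ and $g'(x) = x\sinh x > 0$). Hence $f$ is strictly convex and the argument closes. A backup route, should the $\coth$-manipulation be cumbersome, is to apply the Lagrange multiplier condition $f'(d_i) = \mu$ and observe that a strictly convex $f$ makes $f'$ strictly monotone, forcing all $d_i$ to coincide.
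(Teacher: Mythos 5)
Your proposal is correct and follows essentially the same route as the paper: the same interval-by-interval integration yields the identical closed form $\sigma_{11}\bigl[\,d_i\,\tfrac{1+e^{-2\theta d_i}}{1-e^{-2\theta d_i}}-\tfrac{1}{\theta}\bigr]$, and your $f(d)=d\coth(\theta d)$ equals $d+2\phi(d)$ with the paper's $\phi(d)=d/(e^{2\theta d}-1)$, so verifying $f''>0$ is the same key step as the paper's check that $\phi$ is convex. The only cosmetic difference is that you close with Jensen while the paper packages the same convexity as Schur-convexity via Marshall--Olkin; for a separable symmetric criterion on the simplex these are interchangeable.
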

	\begin{proof}
%		Consider a point $x_{0} \in \mathcal{D}$, from Lemma~\ref{LemmaSimpleAndOrdinary}, if $x_0 \in [x_i, x_{i+1}], $ for some $i = 1,...,n-1$ then,   
From Lemma~\ref{LemmaSimpleAndOrdinary}	we can write,  	
%\begin{align*}
%		MSPE_{sk}(x_{0}) 
%		& = \sigma_{11} \dfrac{\left( 1   - e^{-2 \theta a}\right) \left( 1   - e^{-2 \theta (d_{i} - a)}\right)}{\left( 1   - e^{-2 \theta d_{i}}\right)}, \;\; \text{ where } a = x_{0} - x_{i} .	
%		\end{align*}  
		\begin{align}
		IMPSE_{sck} 		& = \sigma_{11}  \int\limits_{0}^{1}   MSPE_{sck}(x_{0}) d(x_{0}) \nonumber \\
		& = \sigma_{11} \sum_{i=1}^{n-1} \int\limits_{x_{i}}^{x_{i+1}}   MSPE_{sck}(x_{0}) d(x_{0}) \nonumber \\
		%&= \int\limits_{[0,1]}   MSPE(x_{0}) d(x_{0}) \nonumber \\
		&= 		 \sigma_{11} \sum_{i=1}^{n-1} \int\limits_{x_{i}}^{x_{i+1}}   \dfrac{\left( 1   - e^{-2 \theta a}\right) \left( 1   - e^{-2 \theta (d_{i} - a)}\right)}{\left( 1   - e^{-2 \theta d_{i}}\right)}d(x_{0})  \nonumber \\ %\;\; \text{, where } a = x_{0} - x_{i}   \nonumber \\ 
 	%	&= 		 \sigma_{11} \sum_{i=1}^{n-1} \int\limits_{0}^{d{i}}   \dfrac{\left( 1   - e^{-2 \theta a}\right) \left( 1   - e^{-2 \theta (d_{i} - a)}\right)}{\left( 1   - e^{-2 \theta d_{i}}\right)}d(a)   \nonumber \\ 
			&= \sigma_{11} \sum_{i=1}^{n-1}  \Big[ \dfrac{1 + e^{-2 \theta d_{i}}}{1 - e^{-2 \theta d_{i}}}d_{i} -\dfrac{1}{\theta} \Big]   \nonumber \\
		&= \sigma_{11}  \Big[  1 - \dfrac{n-1}{\theta} + 2 \;\Phi(\xibold)  \Big], %\;\;\;\;\;\;\;\;\;\;\;\;\;\;\; (\text{See \eqref{IMSPE_OCK_Main_termI} in Appendix \ref{AppendixB}})
		\end{align}
		where $\displaystyle{\Phi(\xibold) = \sum_{i=1}^{n-1} \phi(d_{i})}$ and $\phi(d) = \dfrac{d}{e^{2\theta d } - 1 }.$			\\
		
Using \ref{AppendixE}, we can say that $IMSPE_{sck}$ is a Schur-convex function and hence it is minimized for an equispaced design, that is, $d_i=\frac{1}{n-1}$ for all $i=1,\ldots, n-1$.%		To show that the equispaced design minimizes the IMSPE, we prove that the IMSPE is a Schur-convex function. First note, IMSPE is a symmetric function, that is, it is  permutation invariant in the $d_{i}$'s. Next we show $\dfrac{\partial IMSPE}{\partial d_{i}} $ is an increasing function in $d_{i}$ for $i= 1,\ldots,n$. We have, 
	\end{proof}
	%%%%%%%%%%%%%%%%%%%%%%%%%%%%%%%%%%%%%%%
	%\subsubsection{Optimal designs based on SMSPE}
	
	\subsection{Optimal designs for reduced bivariate simple cokriging models with unknown parameters} \label{OptimalDesign_unKnownParameter}
	%	\subfile{Sections/5_Parameter_Interval_Estimate}
	
	In real life, while designing an experiment, the exponential covariance parameters $\theta$ and $\sigma_{11}$ are usually unknown with very little prior information. In this section, we discuss optimal designs for simple cokriging models with the primary response having an exponential covariance structure but with unknown parameters. To address the dependency of the design selection criterion on the unknown covariance parameters, we assume prior distributions on the parameter vector and instead propose pseudo-Bayesian optimal designs. %; a related approach is discussed in \cite{stehlik2015robust}, which gives a compound criterion for finding optimal designs for trend parameter and integrated prediction variance accounting for the unknown covariance parameter. 
	The prior distributions on the covariance parameters are incorporated into the optimization criteria by integrating over these distributions. This approach is known as the pseudo-Bayesian approach to optimal designs and has been used previously by \cite{Chaloner_Larntz_1989}, \cite{Dette_Sperlich_1996}, \cite{Woods_Ven_2011}, \cite{Mylona_Goos_Jones_2014}, \cite{Singh_Mukhopadhyay_2016} and \cite{Singh_Mukhopadhyay_2019}. The Bayesian approach has been seen to yield more robust optimal designs which are less sensitive to fluctuations of the unknown parameters than locally optimal designs.
	
	We start by assuming  $\theta$ and $\sigma_{11}$ are independent and their respective distributions are $r(\cdot)$ and $t(\cdot)$. 
	%			We have seen (in section 4) that for bivariate co-kriging problems with generalized markov type or proportional covariance structure, when the covariance parameters are known, an equispaced design is optimal. The parameters $\theta$ and $\sigma_{11}$ are usually unknown. However,	in a real scenario we replace the parameters by their estimates in the expression for \textbf{IMSPE}. In section 5 we use maximum likelihood estimation to obtain the parameter estimates. It would be important to understand the nature of optimal design and the design criterion when the parameters are unknown. 
	A very high value of $\theta$ would mean that the covariance matrix for $Z_{1}(\cdot)$ is approximately an identity matrix, implying zero dependence among neighboring points. Since this is not reasonable for such correlated data, we assume $0<\theta_{1}<\theta<\theta_{2}<\infty$. 
	
	Using a pseudo-Bayesian approach as in \cite{Chaloner_Larntz_1989} we define risk functions corresponding to each design criterion as:
	\begin{align}
			\mathcal{R}_{1}(\xibold) & = E[SMSPE_{sck}(\theta, \sigma_{11}, \xibold)] \text{ and } \label{risk_II} \\
		\mathcal{R}_{2}(\xibold) & = E[IMSPE_{sck}(\theta, \sigma_{11}, \xibold)].\label{risk_I}
	\end{align}
	
	Our objective is to select the designs that minimize these risks.
	
%			We take a look at \cite{Kriging_Zagoraiou_Antognini_2009, Kriging_Antognini_Zagoraiou_2010, Baran_Stehlik_2015}, which finds optimal designs using Fisher information matrix for constant unknown trend parameter, with the random function having an exponential covariance with known parameter. It is shown, that for the same problem the strategy optimal design for covariance parameter is either conflicting with that of trend parameter estmation or there is no admissible design. From these finding it seems reasonable to guess that the optimal Bayesian designs might be different from equispaced. However, in this section and also in Section~\ref{SectionLabelOrdinaryCoKriging} we will see the optimal design for known parmeter and unknown paramter case is the same and, hence we could say the designs found in this paper are robust when prediction accuracy of the primary random process is the main objective. 
		
	\begin{theorem} \label{Theorem_SK_SMSPE_UnknownParameter} 
	Consider the reduced bivariate simple cokriging models as in Lemma~\ref{Theorem_SK_SMSPE}. The parameters $\theta$ and $\sigma_{11}$ are assumed to be unknown and independent with prior probability density functions $r(\cdot)$ and $t(\cdot)$, respectively. The support of $r(\cdot)$ is of the form $(\theta_{1},\theta_{2})$, where $0 < \theta_{1} < \theta_{2} < \infty $. Then, an equispaced design is optimal with respect to the risk function $\mathcal{R}_{1}(\xibold)$.
	\end{theorem}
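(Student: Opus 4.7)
The plan is to leverage Theorem \ref{Theorem_SK_SMSPE} directly and then push the monotonicity through the Bayesian integration. From the proof of Theorem \ref{Theorem_SK_SMSPE}, for any design $\xibold = (d_1,\ldots,d_{n-1})$ with $\sum_i d_i = 1$, we already have the closed form
\[
SMSPE_{sck}(\theta, \sigma_{11}, \xibold) \;=\; \sigma_{11}\, W_{sup}\!\left(\max_{i} d_i\right), \qquad W_{sup}(d) = \frac{1 - e^{-\theta d}}{1 + e^{-\theta d}}.
\]
So the first step is simply to substitute this expression into the risk $\mathcal{R}_1(\xibold) = E[SMSPE_{sck}(\theta,\sigma_{11},\xibold)]$.

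Next, using independence of $\theta$ and $\sigma_{11}$ under the joint prior $r(\cdot) t(\cdot)$, I would factor the expectation as
\[
\mathcal{R}_1(\xibold) \;=\; E[\sigma_{11}] \int_{\theta_1}^{\theta_2} \frac{1 - e^{-\theta\, m(\xibold)}}{1 + e^{-\theta\, m(\xibold)}}\, r(\theta)\, d\theta,
\]
where $m(\xibold) := \max_i d_i$. (Implicitly I would assume $E[\sigma_{11}] < \infty$, which is the mild integrability assumption consistent with the set-up.) The key monotonicity observation is that for each fixed $\theta > 0$, the function $d \mapsto (1 - e^{-\theta d})/(1 + e^{-\theta d})$ has a strictly positive derivative (already computed in the proof of Theorem \ref{Theorem_SK_SMSPE}), hence is strictly increasing in $d$. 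Since $r(\theta) \ge 0$ on $(\theta_1, \theta_2)$, the integrand is pointwise increasing in $m(\xibold)$, so the integral itself is a strictly increasing function of $m(\xibold)$.

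Consequently, minimizing $\mathcal{R}_1(\xibold)$ over the simplex $\{\xibold : d_i > 0, \sum_{i=1}^{n-1} d_i = 1\}$ reduces to minimizing $\max_i d_i$ subject to $\sum_i d_i = 1$. This is a standard constrained optimization: by a simple averaging argument, $\max_i d_i \ge 1/(n-1)$ with equality if and only if $d_1 = d_2 = \cdots = d_{n-1} = 1/(n-1)$, i.e.\ the equispaced design. This gives the claim.

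The main obstacle, if any, is largely notational rather than conceptual: one must verify that the support restriction $0 < \theta_1 < \theta < \theta_2 < \infty$ is actually used only to guarantee that the strict monotonicity in $m(\xibold)$ survives integration (which it does, provided $r$ is a genuine density on $(\theta_1,\theta_2)$ and not degenerate), and that no measurability/integrability issue arises for $\sigma_{11}$. Unlike Theorem \ref{Theorem_SK_IMSPE}, no Schur-convexity machinery is needed here because the $SMSPE_{sck}$ criterion depends on $\xibold$ only through the scalar $\max_i d_i$, making the optimization essentially one-dimensional once the expectation is taken.
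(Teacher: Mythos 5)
Your proposal is correct and follows essentially the same route as the paper's own proof: substitute the closed form $SMSPE_{sck}=\sigma_{11}W_{sup}(\max_i d_i)$ from Theorem~\ref{Theorem_SK_SMSPE}, factor the expectation using independence of $\theta$ and $\sigma_{11}$ into $E_t[\sigma_{11}]\int_{\theta_1}^{\theta_2}W_{sup}(\theta,\max_i d_i)\,r(\theta)\,d\theta$, and conclude from the monotonicity of $W_{sup}$ in $d$ that minimizing the risk reduces to minimizing $\max_i d_i$, which the equispaced design achieves. Your added remark that no Schur-convexity machinery is needed because the criterion depends on the design only through the scalar $\max_i d_i$ is consistent with how the paper argues.
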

	\begin{proof}				
		From \eqref{Thm1_SMSPE_eq} we can write,
		\begin{align}
		SMPSE_{sck} &= \sigma_{11} \; W_{{sup}}(\max_i \; d_i).  \nonumber 
		\end{align}
		Thus,
		\begin{align}
		\mathcal{R}_{1}(\xibold) &= \int\limits_{0}^{\infty} \int\limits_{\theta_{1}}^{\theta_{2}} 
		\sigma_{11}  \; W_{{sup}}(\theta, \max_i d_i)  \; \;
		r(\theta) \; t(\sigma_{11}) \;\;
		d(\sigma_{11})\;d(\theta)	\nonumber \\
		&= \int\limits_{0}^{\infty} \sigma_{11}  \; t(\sigma_{11}) \; d(\sigma_{11})
		\int\limits_{\theta_{1}}^{\theta_{2}} \; W_{{sup}}(\theta, \max_i d_i)  \;
		r(\theta) 
		\;d(\theta)	\nonumber \\
		&= 	E_{t}[\sigma_{11}] \; 	\int\limits_{\theta_{1}}^{\theta_{2}} \; W_{{sup}}(\theta, \max_i d_i) \;
		r(\theta) 
		\;d(\theta).		\label{SMSPEPseudoBayesRiskMinimiztionEquation} 
		%&= 	E_{g}[\sigma_{11}] \; 	 \int_{m_{1}}^{m_{2}} \;  W_{{sup}}(\theta, \max_i d_i)
		%\mathbf{f}(\theta) 
		%\;d(\theta).	\nonumber \\
		%&= 	E_{g}[\sigma_{11}] \; 	 \int_{m_{1}}^{m_{2}} \;  W_{{sup}}(\theta, \max_i d_i)
		%\mathbf{f}(\theta)  
		%\;d(\theta).	
		\end{align}
		As $W_{{sup}}(\theta, d)$ is an increasing function of  $d$, equation \eqref{SMSPEPseudoBayesRiskMinimiztionEquation} shows $\mathcal{R}_{1}$ is minimized for an equispaced design, since $\smash{\displaystyle \max_{i} }\; d_i$ is minimized for an equispaced design.
	\end{proof}
	
	\begin{theorem} \label{Theorem_SK_IMSPE_UnknownParameter}
			Consider the reduced bivariate simple cokriging models as in Lemma~\ref{Theorem_SK_SMSPE}. The parameters $\theta$ and $\sigma_{11}$ are assumed to be unknown and independent with prior probability density functions $r(\cdot)$ and $t(\cdot)$, respectively. The support of $r(\cdot)$ is of the form $(\theta_{1},\theta_{2})$, where $0 < \theta_{1} < \theta_{2} < \infty $. Then, an equispaced design is optimal with respect to the risk function $\mathcal{R}_{2}(\xibold)$.
	\end{theorem}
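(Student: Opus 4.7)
The plan is to mirror the structure of the proof of Theorem~\ref{Theorem_SK_SMSPE_UnknownParameter}, but now using the closed-form expression for $IMSPE_{sck}$ that was derived in the proof of Theorem~\ref{Theorem_SK_IMSPE}. Specifically, from that result we have, for any fixed $\theta,\sigma_{11}>0$ and design $\xibold=(d_1,\ldots,d_{n-1})$ with $\sum_i d_i = 1$,
\begin{align*}
IMSPE_{sck}(\theta,\sigma_{11},\xibold) = \sigma_{11}\Big[\,1 - \tfrac{n-1}{\theta} + 2\,\Phi(\xibold;\theta)\Big], \quad \Phi(\xibold;\theta) = \sum_{i=1}^{n-1}\frac{d_i}{e^{2\theta d_i}-1}.
\end{align*}
Substituting this into the definition of $\mathcal{R}_{2}(\xibold)$ in equation \eqref{risk_I} and applying Fubini's theorem (justified below), the $\sigma_{11}$ and $\theta$ integrations separate, yielding
\begin{align*}
\mathcal{R}_{2}(\xibold) = E_{t}[\sigma_{11}]\int_{\theta_{1}}^{\theta_{2}}\Big[\,1 - \tfrac{n-1}{\theta} + 2\,\Phi(\xibold;\theta)\Big]\,r(\theta)\,d\theta.
\end{align*}

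The key observation is that the design $\xibold$ enters this expression only through the term $\Phi(\xibold;\theta)$; the additive constants $1$ and $-(n-1)/\theta$ are independent of $\xibold$ and so drop out of the minimization. By the Schur-convexity argument already invoked in the proof of Theorem~\ref{Theorem_SK_IMSPE} (see \ref{AppendixE}), for each fixed $\theta>0$ the function $\xibold\mapsto\Phi(\xibold;\theta)$ is Schur-convex on the simplex $\{\sum_i d_i=1,\,d_i\ge0\}$, and hence is minimized at the unique equispaced vector $d_i = 1/(n-1)$ for all $i$. Crucially, this minimizer does \emph{not} depend on $\theta$, so the equispaced design minimizes the integrand pointwise in $\theta\in(\theta_1,\theta_2)$. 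Integrating the pointwise inequality against the non-negative weight $r(\theta)$ and multiplying by the constant $E_t[\sigma_{11}]\ge 0$ preserves the inequality, so the equispaced design also minimizes $\mathcal{R}_{2}(\xibold)$.

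The only technical step to check is that the interchange of integration and the finiteness of the risk are justified. Since the support of $r(\cdot)$ is bounded away from zero ($\theta\ge\theta_1>0$) and from infinity ($\theta\le\theta_2<\infty$), the functions $1/\theta$ and $\Phi(\xibold;\theta)$ are uniformly bounded on the support of $r$ for any feasible $\xibold$ (in fact $0\le \Phi(\xibold;\theta)\le (n-1)/(e^{2\theta_1/(n-1)}-1)$ on the simplex). Provided $E_{t}[\sigma_{11}]<\infty$ — a standard assumption on the prior $t(\cdot)$ — Fubini applies and the separation above is legitimate. I expect no significant obstacle: the nontrivial analytic work (the Schur-convexity of $\Phi$) has already been carried out in Theorem~\ref{Theorem_SK_IMSPE}, and the Bayesian extension is essentially a pointwise-dominance argument riding on top of that earlier result, exactly parallel to how Theorem~\ref{Theorem_SK_SMSPE_UnknownParameter} extended Theorem~\ref{Theorem_SK_SMSPE}.
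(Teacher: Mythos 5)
Your proof is correct, and it takes a genuinely different (and lighter) route than the paper. The paper establishes Schur-convexity of the \emph{integrated} risk $\mathcal{R}_{2}(\xibold)$ directly: it writes $q_{1}(\theta,\xibold)=1-\tfrac{n-1}{\theta}+2\Phi(\xibold)$, uses Leibniz's rule to differentiate under the integral sign, and verifies the Schur--Ostrowski condition $\tfrac{\partial \mathcal{R}_{2}}{\partial d_{l}}-\tfrac{\partial \mathcal{R}_{2}}{\partial d_{k}}\geq 0$ for $d_{l}\geq d_{k}$ by reducing it to the inequality $\tfrac{\partial\phi(d_{l})}{\partial d_{l}}-\tfrac{\partial\phi(d_{k})}{\partial d_{k}}>0$ from \eqref{SchurConvex_SimpleKriging_III}. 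You instead exploit the fact that the pointwise (in $\theta$) minimizer of the integrand is the equispaced design for \emph{every} $\theta$, and then integrate the pointwise inequality against the non-negative prior; no differentiation under the integral is needed. Both arguments ultimately rest on the same Schur-convexity of $\Phi(\cdot;\theta)$ from \ref{AppendixE}, but your pointwise-dominance version is more elementary and would survive even without smoothness of the risk in the $d_{i}$'s, whereas the paper's version yields the slightly stronger structural conclusion that $\mathcal{R}_{2}$ itself is Schur-convex (hence monotone along majorization chains, not merely minimized at the equispaced point). Two cosmetic caveats: your claim that the equispaced vector is the \emph{unique} minimizer requires strict Schur-convexity, which neither you nor the paper verifies (and which the theorem statement does not require); and the finiteness assumption $E_{t}[\sigma_{11}]<\infty$ that you flag explicitly is also implicit in the paper's own manipulation, so you are not assuming anything extra.
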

	\begin{proof}				
		Consider $\mathcal{R}_{2}: \mathcal{I}^{n-1} \longrightarrow \mathbb{R}$, where $\mathcal{I} = [0,1]$. 
		%We consider $\mathbf{f}(\cdot)$ and $\mathbf{g}(\cdot)$ to have finite support. The risk
		$\mathcal{R}_{2}(\cdot)$ is symmetric on $\mathcal{I}^{n-1}$ as $IMSPE_{sck}$ is symmetric on $\mathcal{I}^{n-1}$, that is $\mathcal{R}_{2}$ is permutation invariant in $d_{i}$. If we can show $	\dfrac{\partial \mathcal{R}_{2}(\xibold)}{\partial d_{l}} -  \dfrac{\partial \mathcal{R}_{2}(\xibold)}{\partial d_{k}} \geq 0  $, for any $d_{l} \geq d_{k}$, where $k,l = 1,\ldots,n-1$, then as before in Theorem~\ref{Theorem_SK_IMSPE} using the Schur-convexity of  $\mathcal{R}_{2}$ we will prove the equispaced design is optimal.	\\			  
		Let $q_{1}(\theta, \; \xibold)   = \{ 1 - \dfrac{n-1}{\theta} + 2 \; \Phi(\xibold)  \}$, then $\mathcal{R}_{2}(\xibold) = \int\limits_{0}^{\infty} \int\limits_{\theta_{1}}^{\theta_{2}} 
		\sigma_{11}  \;q_{1}(\theta, \; \xibold) \;\;
		r(\theta) \; t(\sigma_{11}) \;\;
		d(\sigma_{11})\;d(\theta) $.	
		Consider, 				
		\begin{align}
		%	 \text{Let, } q_{1}(\theta, \; \xi)   &= \{ 1 - \dfrac{n-1}{\theta} + 2 \Phi(\xi)  \}. \nonumber \\
		%	  \text{Then, } \mathcal{R}_{1}(\xi) &= \int_{0}^{\infty} \int_{\theta_{1}}^{\theta_{2}} 
		%				 \sigma_{11}  \;q_{1}(\theta, \; \xi) \;\;
		%				 r(\theta) \; t(\sigma_{11}) \;\;
		%				 d(\sigma_{11})\;d(\theta).	\nonumber \\
		%					\text{Consider, } 
		\Delta &=
		\dfrac{\partial \mathcal{R}_{2}(\xibold)}{\partial d_{l}} 
		-  \dfrac{\partial \mathcal{R}_{2}(\xibold)}{\partial d_{k}} \label{IMSPEPseudoBayesRiskMinimiztionEquation} \\
		&=  
		\dfrac{\partial }{\partial d_{l}} 
		\int\limits_{0}^{\infty} \int\limits_{\theta_{1}}^{\theta_{2}} 
		\sigma_{11}  \;q_{1}(\theta, \; \xibold) \;\;
		r(\theta) \; t(\sigma_{11}) \;\;
		d(\sigma_{11})\;d(\theta)     \nonumber \\
		&%\;\;\;\;\;\;\;
		-
		\dfrac{\partial }{\partial d_{k}} 
		\int\limits_{0}^{\infty} \int\limits_{\theta_{1}}^{\theta_{2}} 
		\sigma_{11}  \; q_{1}(\theta, \; \xibold) \;\;
		r(\theta) \; t(\sigma_{11}) \;\;
		d(\sigma_{11})\;d(\theta)               	       \nonumber   \\
		&=  \int\limits_{0}^{\infty} \sigma_{11}   \; t(\sigma_{11}) \; d(\sigma_{11}) 
		\Bigg[ 	 
		\int\limits_{\theta_{1}}^{\theta_{2}}  
		\Big(
		\dfrac{\partial  q_{1}(\theta, \; \xibold) }{\partial d_{l}} 
		- 	\dfrac{\partial q_{1}(\theta, \; \xibold) }{\partial d_{k}} 
		\Big) \;\; 
		r(\theta)     \;d(\theta)     	    		       
		\Bigg] \nonumber \\
		& %\;\;\;\;\;\;\;\;\;\;\;\;\;\;\;\; \;\;\;\;\;\;\;\; 
		(\text{         Using Leibniz's Rule as in \citet[chapter~8]{Book_Protter2012intermediate}}, \nonumber \\
			& \text{which allows changing the order of differentiation and integration}) \nonumber \\
		&=  
		E_{t}[\sigma_{11}] \;
		\Bigg( 	 
		2
		\int\limits_{\theta_{1}}^{\theta_{2}}  
		\Big(
		\dfrac{\partial  \Phi(\xibold) }{\partial d_{l}} 
		- 	\dfrac{\partial  \Phi(\xibold) }{\partial d_{k}} 
		\Big) \; 
		r(\theta)     \;d(\theta)     	    		       
		\Bigg) \nonumber \\
		&=   
		E_{t}[\sigma_{11}] \;
		\Bigg(
		2 	 
		\int\limits_{\theta_{1}}^{\theta_{2}}  
		\Big(
		\dfrac{\partial \phi(d_{l}) }{\partial d_{l}} \; 
		- 	\dfrac{\partial \phi(d_{k})}{\partial d_{k}} \; 
		\Big) \;\; 
		r(\theta)     \;d(\theta)     	    		       
		\Bigg). \nonumber 
		\end{align}
		
		For $d_{l} \geq d_{k}$, the quantity $\Delta$ in \eqref{IMSPEPseudoBayesRiskMinimiztionEquation} is positive, since from \eqref{SchurConvex_SimpleKriging_III} we have $\dfrac{\partial \phi(d_{l}) }{\partial d_{l}} - \dfrac{\partial \phi(d_{k}) }{\partial d_{k}} \; > 0 $ for any $d_{l} > d_{k}$. Thus, $\mathcal{R}_{2}(\xibold)$ is Schur-convex and is minimized for an equispaced design.
	\end{proof}
		Thus, we have proved the equispaced design is both locally and Bayesian optimal with respect to the $SMSPE$ and $IMSPE$ criteria for simple cokriging models. Note, for the Bayesian designs we have assumed prior distribution of covariance parameter $\theta$ with bounded support not containing zero. So, our results are true for any prior of $\theta$ with support as mentioned before.

	\subsection{Optimal designs for reduced bivariate ordinary cokriging models} \label{OptimalDesign_OrdinaryKrig}
	%	\subfile{Sections/6_Optimal_Design_Ordinary_Cokriging}
In this section, we discuss optimal designs for ordinary cokriging models with exponential covariance structures. The mean of the random function $Z_{1}(\cdot)$ is assumed to be unknown and constant (for details see Section \ref{SectionLabelOrdinaryCoKriging}). Taking a similar approach as before, in this section, we prove in Theorem~\ref{Theorem_OK_SMSPE} that the equispaced design is the G-optimal design. Though it has already been shown by \citet{Kriging_Antognini_Zagoraiou_2010} that for kriging models with unknown trend and known covariance parameter an equispaced design is I-optimal, we state the same result in Theorem~\ref{Theorem_OK_IMSPE}, since we provide an alternative way of calculating $MSPE_{ock}(x_{0})$ with simpler matrix calculations, which could be useful in the future. Also, in Theorems~\ref{Theorem_OK_SMSPE_Unknown_Parameter} and \ref{Theorem_OK_IMSPE_Unknown_Parameter} we again are able to show that the equispaced design is both locally and Bayesian I- and G-optimal.  %for covariance parameters being unknown., and hence the design is robust robust. 
	\begin{theorem} \label{Theorem_OK_SMSPE}
		Consider the reduced bivariate ordinary cokriging models as in Lemma~\ref{LemmaSimpleAndOrdinary}, where the covariance parameters, $\theta$ and $\sigma_{11}$, are known. An equispaced design minimizes the $SMSPE_{ock}$. Thus, the equispaced design is the G-optimal design.
	\end{theorem}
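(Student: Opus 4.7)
The plan is to follow the strategy of Theorem~\ref{Theorem_SK_SMSPE}: first localize to each sub-interval $[x_i,x_{i+1}]$, identify the worst-case prediction point there, and then aggregate across intervals. By Lemma~\ref{LemmaSimpleAndOrdinary}, for $x_0 = x_i + a$ with $a \in [0, d_i]$ we have
\begin{equation*}
MSPE_{ock}(x_0) = \sigma_{11}\Big[T_1(a,d_i) + \frac{1}{\qZero}\, T_2(a,d_i)\Big],
\end{equation*}
where $T_1(a,d_i) = (1-e^{-2\theta a})(1-e^{-2\theta(d_i-a)})/(1-e^{-2\theta d_i})$ is the simple-cokriging kernel already treated in Theorem~\ref{Theorem_SK_SMSPE}, and $T_2(a,d_i) = (1-g(a))^2$ with $g(a) = (e^{-\theta a}+e^{-\theta(d_i-a)})/(1+e^{-\theta d_i})$. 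The piece $T_1(\cdot,d_i)$ is maximized on $[0,d_i]$ at the midpoint $a = d_i/2$ by the argument in \ref{AppendixD}. For $T_2$, the inner function $g$ is symmetric about $a = d_i/2$, equals $1$ at both endpoints, and has its unique critical point at $a = d_i/2$ where $g < 1$; hence $T_2 = (1-g)^2$ also attains its maximum on $[0,d_i]$ at $a = d_i/2$. Evaluating both at the midpoint gives
\begin{equation*}
\sup_{x_0 \in [x_i,x_{i+1}]} MSPE_{ock}(x_0) = \sigma_{11}\Big[F_1(d_i) + \frac{F_2(d_i)}{\qZero}\Big],
\end{equation*}
with $F_1(d) = (1-e^{-\theta d})/(1+e^{-\theta d})$ and $F_2(d) = (1-e^{-\theta d/2})^4/(1+e^{-\theta d})^2$, both strictly increasing in $d>0$ by direct differentiation.

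Because $\qZero$ depends on the full design $\xibold$ but not on $i$, and $F_1,F_2$ are monotone, the outer maximum over intervals is attained at $\arg\max_i d_i$, yielding
\begin{equation*}
SMSPE_{ock}(\xibold) = \sigma_{11}\Big[F_1(d_{\max}) + \frac{F_2(d_{\max})}{\qZero(\xibold)}\Big], \qquad d_{\max} := \max_{i} d_i.
\end{equation*}
The equispaced design minimizes $d_{\max}$ over the simplex $\{d_i \geq 0,\ \sum_i d_i = 1\}$, attaining $d_{\max} = 1/(n-1)$, so both $F_1(d_{\max})$ and $F_2(d_{\max})$ are simultaneously minimized there. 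It therefore suffices to show that $\qZero(\xibold)$ is \emph{maximized} at the equispaced design.

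This last step is the main obstacle, since $\qZero(\xibold) = \VectOneN^T \MatrixP^{-1} \VectOneN$ couples all coordinates of $\xibold$. I would exploit the tridiagonal structure of $\MatrixP^{-1}$ for the AR(1)-type exponential covariance (either via the decompositions recorded in \ref{AppendixC} or by a direct induction on $n$ using a block-matrix inversion) to establish the closed form
\begin{equation*}
\qZero(\xibold) = 1 + \sum_{i=1}^{n-1}\frac{1-e^{-\theta d_i}}{1+e^{-\theta d_i}} = 1 + \sum_{i=1}^{n-1}\tanh\!\left(\frac{\theta d_i}{2}\right).
\end{equation*}
Since $\tanh$ is strictly concave on $[0,\infty)$, the right-hand side is a symmetric concave, hence Schur-concave, function of $(d_1,\ldots,d_{n-1})$ on the simplex $\sum_i d_i = 1$, and is thus maximized precisely at $d_i = 1/(n-1)$ for every $i$. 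Combined with the simultaneous minimization of $F_1(d_{\max})$ and $F_2(d_{\max})$ at the same design, every term of $SMSPE_{ock}(\xibold)$ is minimized at the equispaced design, which is therefore G-optimal.
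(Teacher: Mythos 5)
Your proposal is correct and follows essentially the same route as the paper's proof: the same decomposition of $MSPE_{ock}$ into the simple-cokriging kernel plus a squared term divided by $\qZero$, the same observation that both pieces peak at the midpoint of each subinterval (so the suprema add), the same reduction to $\max_i d_i$ via monotonicity of the two increasing functions, and the same Schur-majorization step for the design-dependent factor $\qZero$. The only cosmetic difference is that you argue $\qZero = 1+\sum_{i}\tanh(\theta d_i/2)$ is Schur-concave directly from concavity of $\tanh$, whereas the paper (\ref{AppendixA}) derives the identical closed form and instead verifies Schur-convexity of $1/\qZero$ via the Schur--Ostrowski derivative criterion.
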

	\begin{proof}
		We calculate $\displaystyle{\sup_{x_0 \in [0,1]}} MSPE_{ock}(x_{0})$ and minimize it with respect to $\xibold$. From Lemma~\ref{LemmaSimpleAndOrdinary} we have,
		\begin{align*}
		SMSPE_{ock} &= \sup_{x_0 \in [0,1]} MSPE_{ock}(x_{0}) \nonumber	\\
		     &= \smash{\displaystyle \max_{i=1,\ldots,n-1} } 	\sup_{x_0 \in [x_{i},x_{i+1}]} MSPE_{ock}(x_{0}) \nonumber \\
		&=  \sigma_{11}  \smash{\displaystyle \max_{i=1,\ldots,n-1} } \sup_{x_0 \in [x_{i},x_{i+1}]}
		\Bigg( 1 - \sigmaPNot^{T} \MatrixP^{-1}\sigmaPNot   
		+ \dfrac{1}{\qZero} \Big( 1 -  \VectOneN^{T} \MatrixP^{-1} \sigmaPNot \Big)^{2}
		\Bigg). \nonumber
		\end{align*}
		From \ref{AppendixE} and \ref{AppendixF}, we can say that $\smash{\displaystyle \sup_{x_0 \in [x_{i},x_{i+1}]} } \Big( 1 - \sigmaPNot^{T} \MatrixP^{-1}\sigmaPNot \Big) $ and $\smash{\displaystyle \sup_{x_0 \in [x_{i},x_{i+1}]} } \Big( 1 -  \VectOneN^{T} \MatrixP^{-1} \sigmaPNot \Big)^{2} $ are attained at $x_{0} = x_{i} + \dfrac{d_{i}}{2}$, which is the mid-point of the interval $[x_{i},x_{i+1}]$. Also, from \ref{AppendixF} equation \eqref{EqA} we have 
		\begin{align*}
		\sup_{x_0 \in [x_{i},x_{i+1}]} \Big( 1 -  \VectOneN^{T} \MatrixP^{-1} \sigmaPNot \Big)^{2} 
		%&= U_{i}( \dfrac{d_{i}}{2}) \\
		&= \Big( 1- \dfrac{2e^{- \theta	d_{i}/2}}{1+ e^{- \theta	d_{i}}}\Big)^2 
		\end{align*}		
		Define $U_{{sup}}(\cdot)$ on $[0,1]$ such that $U_{{sup}}(d) =  \Big( 1- \dfrac{2e^{- \theta	d/2 }}{1+ e^{- \theta	d}}\Big)^2$, then $U_{{sup}}(\cdot)$ is an increasing function in $d$ as $U_{{sup}}^{\prime}(d) = 2 \theta e^{-\theta d/2} \dfrac{(1-e^{-\theta d/2})^2 (1- e^{-\theta d})}{(1+e^{-\theta d})^3} >0 $.
		
		Usually, suprema are not additive. However, if two functions $f_{1}, f_{2}: \mathcal{D}_{1} \mapsto \mathcal{D}_{2} $, where $\mathcal{D}_{1}, \mathcal{D}_{2} \subseteq \mathbb{R} $, both attain their suprema at the same point $x_{1} \in \mathcal{D}_{1}$, then we will have $\sup_{x \in \mathcal{D}_{1}} f_{1}(x) + f_{2}(x) = \sup_{x \in \mathcal{D}_{1}} f_{1}(x) + \sup_{x \in \mathcal{D}_{1}}  f_{2}(x)$.
		%We proved above that $\sup_{x_0 \in [x_{i},x_{i+1}]} \Big( 1 -  \VectOneN^{T} \MatrixP^{-1} \sigmaPNot \Big)^{2} $ is attained at  $x_{0} = x_{i} + \dfrac{d_{i}}{2}$ and is equal to $U_{{sup}}(d_{i})$. From the proof of Theorem~\ref{Theorem_SK_SMSPE} we have already seen $\sup_{x_0 \in [x_{i},x_{i+1}]} \Big( 1 - \sigmaPNot^{T} \MatrixP^{-1}\sigmaPNot  \Big)$ is attained at $x_{0} = x_{i} + \dfrac{d_{i}}{2}$. Thus,
		Thus, we write,
		\begin{align}
		\sup_{x_0 \in [x_{i},x_{i+1}]} MSPE_{ock}(x_{0}) &=  \sigma_{11}  \sup_{x_0 \in [x_{i},x_{i+1}]}
		\Bigg( 1 - \sigmaPNot^{T} \MatrixP^{-1}\sigmaPNot   
		+ \dfrac{1}{\qZero} \Big( 1 -  \VectOneN^{T} \MatrixP^{-1} \sigmaPNot \Big)^{2}
		\Bigg) \nonumber	\\  
		&=  \sigma_{11} \Bigg(  \sup_{x_0 \in [x_{i},x_{i+1}]} \Big( 1 - \sigmaPNot^{T} \MatrixP^{-1}\sigmaPNot  \Big) + \dfrac{1}{\qZero} \sup_{x_0 \in [x_{i},x_{i+1}]} \Big( 1 -  \VectOneN^{T} \MatrixP^{-1} \sigmaPNot \Big)^{2}
		\Bigg) \nonumber	\\ 	 
		&= \sigma_{11} \Big(  W_{{sup}}(d_{i}) +  \dfrac{U_{{sup}}(d_{i})}{\qZero} \Big). \label{Thm_ock_eq5}	
		\end{align}
		Hence,
		\begin{align} SMSPE_{ock} &= \sigma_{11} \; \smash{\displaystyle \max_{i=1,\ldots,n-1} } \Big(  W_{{sup}}(d_{i}) +  \dfrac{U_{{sup}}(d_{i})}{\qZero} \Big) \nonumber \\
		%\end{align}
		%As, $\Fxi$ permutation invariant, 		
		%	\begin{align}
		&= \sigma_{11} \; \Big(  W_{{sup}}(\max_i \; d_{i}) +  \dfrac{U_{{sup}}(\max_{i} \; d_{i})}{\qZero} \Big),   \label{Thm_ock_eq6} %\\
		%& (\text{As, $\Fxi$ permutation invariant}) \nonumber
		\end{align}
		as $W_{{sup}}(\cdot)$ and $U_{{sup}}(\cdot)$ are increasing functions and $\Fxi$ is permutation invariant. Since, $\smash{\displaystyle \max_{i} } \;  d_{i}$ is minimized for an equispaced partition, $W_{{sup}}( \smash{\displaystyle \max_{i} } \; d_{i})$ and $U_{{sup}}(\smash{\displaystyle \max_{i} } \; d_{i})$ are minimized for an equispaced partition. Also, $\dfrac{1}{\qZero}$ is minimized for an equispaced partition (\ref{AppendixA}). So, we have proved that the equispaced design for known $\theta$ and $\sigma_{11}$, minimizes $SMSPE_{ock}$ and therefore is G-optimal. 
	\end{proof}

	\begin{theorem} \label{Theorem_OK_IMSPE}
	Consider the reduced bivariate ordinary cokriging models as in Lemma~\ref{LemmaSimpleAndOrdinary}, with covariance parameters of the primary response, $\theta$ and $\sigma_{11}$, being known. An equispaced design minimizes the $IMSPE_{ock}$. Thus, the equispaced design is the I-optimal design.
	\end{theorem}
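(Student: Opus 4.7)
The plan is to mimic the decomposition used in Theorem~\ref{Theorem_OK_SMSPE}, but with integration replacing the supremum. Starting from the expression for $MSPE_{ock}(x_{0})$ in Lemma~\ref{LemmaSimpleAndOrdinary} and splitting the integral over $[0,1]$ into the subintervals $[x_i, x_{i+1}]$, the change of variable $a = x_0 - x_i$ gives
\begin{equation*}
IMSPE_{ock} \;=\; IMSPE_{sck} \;+\; \dfrac{\sigma_{11}}{\qZero} \; G(\xibold), \qquad G(\xibold) \;=\; \sum_{i=1}^{n-1} g(d_i),
\end{equation*}
where $g(d) = \int_0^d \bigl(1 - (e^{-\theta a}+e^{-\theta(d-a)})/(1+e^{-\theta d})\bigr)^2 \, da$. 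The simple cokriging term $IMSPE_{sck}$ is already known to be minimized by the equispaced design by Theorem~\ref{Theorem_SK_IMSPE}, so the whole task reduces to proving that the correction term $\sigma_{11} G(\xibold)/\qZero$ is minimized there as well.

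For this correction term, I would argue as follows. The scalar $1/\qZero$ is shown in \ref{AppendixA} (and re-used in Theorem~\ref{Theorem_OK_SMSPE}) to be non-negative and minimized by the equispaced design. It therefore suffices to show separately that $G(\xibold) \ge 0$ is also minimized at the equispaced design, since the equispaced design would then minimize both non-negative factors simultaneously and hence their product, giving the desired conclusion when combined with the $IMSPE_{sck}$ piece. Because $G$ is a symmetric sum of identical univariate functions on the simplex $\{\mathbf{d}:\sum_{i=1}^{n-1} d_i = 1,\; d_i \ge 0\}$, it will be Schur-convex as soon as $g$ is convex on $[0,1]$, and a Schur-convex function on this simplex attains its minimum at the centroid $d_1 = \cdots = d_{n-1} = 1/(n-1)$.

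Convexity of $g$ is the main obstacle. After multiplying through by $(1+e^{-\theta d})^2$ and computing elementary antiderivatives of $e^{-\theta a}$, $e^{-2\theta a}$ and constants on $[0,d]$, one can obtain $g$ explicitly as a rational combination of $d$, $e^{-\theta d}$ and $e^{-2\theta d}$. Verifying $g''(d) \ge 0$ on $(0,1]$ then becomes a calculus exercise that I expect will require carefully grouping the resulting exponential polynomials and checking non-negativity by Taylor expansion near $d=0$ combined with a monotonicity argument for larger $d$. Should pointwise convexity prove too delicate, a clean fallback is to establish Schur-convexity of $G$ directly by showing $\partial G/\partial d_l - \partial G/\partial d_k \ge 0$ whenever $d_l \ge d_k$, in exactly the spirit of the arguments used in the proofs of Theorems~\ref{Theorem_SK_IMSPE} and \ref{Theorem_SK_IMSPE_UnknownParameter}, which would again deliver minimality at the equispaced design.
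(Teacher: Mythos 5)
Your decomposition $IMSPE_{ock} = IMSPE_{sck} + \sigma_{11}\,G(\xibold)/\qZero$ is exactly the closed form the paper derives (its $g(d) = d + \tfrac{3(1-e^{2\theta d})+2\theta d e^{\theta d}}{\theta(1+e^{\theta d})^2}$ is precisely your integral evaluated), and your completion --- both non-negative factors $G(\xibold)$ and $1/\qZero$ are separately Schur-convex, hence separately minimized at the equispaced design, hence so is their product --- is the standard argument; the paper itself stops at the closed form and defers this last step to Theorem 4.2 of \cite{Kriging_Antognini_Zagoraiou_2010}. The one step you flag as delicate, convexity of $g$, does go through cleanly: with $t=\theta d/2$ one has $\theta g = 2t - 3\tanh t + t\,\mathrm{sech}^2 t$, so $\tfrac{d}{dt}(\theta g) = 2\tanh t\,(\tanh t - t\,\mathrm{sech}^2 t)$ and $\tfrac{d^2}{dt^2}(\theta g) = 2\,\mathrm{sech}^2 t\,\bigl[(\tanh t - t\,\mathrm{sech}^2 t) + 2t\tanh^2 t\bigr] > 0$ for $t>0$, since $\tanh t - t\,\mathrm{sech}^2 t$ vanishes at $t=0$ and has positive derivative $2t\,\mathrm{sech}^2 t\tanh t$. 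So your proposal is correct and is essentially the paper's proof with the cited optimization step filled in.
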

	\begin{proof}
				This result has been derived and proved in Theorem 4.2 by \cite{Kriging_Antognini_Zagoraiou_2010}. However, we still derive $IMSPE_{ock}$ in this paper, as we have used a different matrix approach for calculating $IMSPE_{ock}$. The approach used here is much simpler. % and, we state the final expressions as they would be needed to prove Theorem~\ref{Theorem_OK_IMSPE_Unknown_Parameter}, which supports the robustness of this design.				
		Consider a point $x_{0} \in \mathcal{D}$ and $x_0 \in [x_i, x_{i+1}], $ for some $i = 1,...,n-1$, then from Lemma~\ref{LemmaSimpleAndOrdinary}, 
		\begin{align*}
		MSPE_{ock}(x_{0}) 
		& = \sigma_{11}\Bigg[ \dfrac{\left( 1   - e^{-2 \theta a}\right) \left( 1   - e^{-2 \theta (d_{i} - a)}\right)}{\left( 1   - e^{-2 \theta d_{i}}\right)} + \dfrac{1}{\qZero} \Big(1 -  \dfrac{e^{- \theta a } + e^{- \theta( d_{i} - a )  } }{ 1 + e^{- \theta d_{i}  }} \Big)^2 \Bigg], 
		\end{align*}
		Using, \begin{align}
		IMPSE_{ock} &= \int\limits_{0}^{1}   MSPE_{ock}(x_{0}) d(x_{0}) \nonumber \\
		           & = \sum_{i=1}^{n-1} \int\limits_{x_{i} }^{x_{i+1}}   MSPE_{ock}(x_{0}) d(x_{0}) \nonumber \\
			           & = 	\sigma_{11} \sum_{i=1}^{n-1}  \Big[ \dfrac{1 + e^{-2 \theta d_{i}}}{1 - e^{-2 \theta d_{i}}}d_{i} -\dfrac{1}{\theta} \Big]   + \dfrac{\sigma_{11}}{\qZero} \sum_{i=1}^{n-1}\nonumber 
				               \Big[ d_{i} + \dfrac{-3 (1 - e^{-2 \theta d_{i}}) + 2 d_{i} \theta e^{-\theta d_{i}}}{\theta	 (1 + e^{-  \theta d_{i}})^2}\Big] \nonumber 
		\end{align}
		After doing some careful calculations, we obtain the expression for $IMSPE_{ock}$. 
		\begin{align}
		IMSPE_{ock} &= \sigma_{11} 
		\Big(
		1 - \dfrac{n-1}{\theta} + 2 \Phi(\xibold) + \dfrac{G(\xibold)}{\Fxi }
		\Big) , \label{IMSPE_OCK_Closedform}
		\end{align}
		where		
		\begin{align}
		\Phi(\xibold) &= \sum_{i=1}^{n-1} \phi(d_{i}), \;\;\;\;  \phi(d) = \dfrac{d}{e^{2 \theta d} - 1}, \nonumber \\
		G(\xibold) &= \sum_{i=1}^{n-1} g(d_{i}), \;\;\;\;  g(d) = d + \dfrac{3 (1- e^{2 \theta  d}) + 2 \theta d e^{\theta d} }{\theta	 (1 + e^{\theta d} )^2}, \nonumber \\
		\Fxi &= \sum_{i=1}^{n-1} \omega(d_{i}), \;\;\;\;  \omega(d) = d + \dfrac{e^{\theta d } - 1}{e^{\theta d} + 1}. \nonumber 	
		\end{align}
		Now using similar steps as in Theorem 4.2 of \cite{Kriging_Antognini_Zagoraiou_2010}, it can be shown that $IMSPE_{ock}$ is I-optimal.%minimized for an equispaced design. We mention the results again for Hence, equispaced design is   	
	\end{proof}
	Theorems~\ref{Theorem_OK_SMSPE} and \ref{Theorem_OK_IMSPE} both deal with the scenario in which the covariance parameters are known. To address the situation of unknown covariance parameters, we take a similar approach as in Section~\ref{OptimalDesign_unKnownParameter}. The prior distributions of $\theta$ and $\sigma_{11}$ are assumed to be known. We minimize the expected value of $SMSPE_{ock}$ and $IMSPE_{ock}$ of ordinary cokriging denoted by:
	\begin{align}
	\mathcal{R}_{3}(\xibold) & = E[SMSPE_{ock}(\theta, \sigma_{11}, \xibold )] \text{ and }  \\
	\mathcal{R}_{4}(\xibold) & = E[IMSPE_{ock}(\theta, \sigma_{11}, \xibold )].
	\end{align}  
	\begin{theorem} \label{Theorem_OK_SMSPE_Unknown_Parameter}
		Consider the reduced bivariate ordinary cokriging model as in Lemma~\ref{LemmaSimpleAndOrdinary}. The parameters $\theta$ and $\sigma_{11}$ are assumed to be unknown and independent with prior probability density functions $r(\cdot)$ and $t(\cdot)$, respectively. The support of $r(\cdot)$ is of the form $(\theta_{1},\theta_{2})$, where $0 < \theta_{1} < \theta_{2} < \infty $. Then, an equispaced design is optimal with respect to the risk function $\mathcal{R}_{3}(\xibold)$.
			\end{theorem}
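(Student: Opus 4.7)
The plan is to leverage Theorem~\ref{Theorem_OK_SMSPE} essentially pointwise under the prior on $\theta$. First, I would substitute the closed form from equation \eqref{Thm_ock_eq6}, namely
\[
SMSPE_{ock}(\theta, \sigma_{11}, \xibold) = \sigma_{11} \Big( W_{sup}(\theta, \max_i d_i) + \tfrac{U_{sup}(\theta, \max_i d_i)}{\qZero(\theta, \xibold)} \Big),
\]
into the definition of $\mathcal{R}_{3}$. Using the independence of $\theta$ and $\sigma_{11}$ together with Fubini's theorem, the risk factorizes as
\[
\mathcal{R}_{3}(\xibold) = E_{t}[\sigma_{11}] \int_{\theta_{1}}^{\theta_{2}} \Big( W_{sup}(\theta, \max_i d_i) + \tfrac{U_{sup}(\theta, \max_i d_i)}{\qZero(\theta, \xibold)} \Big) r(\theta)\,d(\theta).
\]
Since $E_{t}[\sigma_{11}]$ is a positive constant independent of $\xibold$, the task reduces to minimizing the integral on the right over admissible designs.

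Next I would argue that the integrand is minimized by the equispaced design \emph{for each fixed} $\theta \in (\theta_1,\theta_2)$. This is exactly the content of Theorem~\ref{Theorem_OK_SMSPE}: for each $\theta>0$, (i) $W_{sup}(\theta, \cdot)$ and $U_{sup}(\theta, \cdot)$ are increasing in $d$, so $W_{sup}(\theta,\max_i d_i)$ and $U_{sup}(\theta,\max_i d_i)$ are minimized when $\max_i d_i$ is minimized, which occurs at the equispaced design $d_i = 1/(n-1)$; and (ii) the quantity $1/\qZero(\theta,\xibold)$ is minimized by the equispaced design, as shown in~\ref{AppendixA}. Since the nonnegative integrand is pointwise minimized by the same design $\xibold^{*} = (1/(n-1), \ldots, 1/(n-1))$ and $r(\theta)\geq 0$, monotonicity of the Lebesgue integral yields
\[
\int_{\theta_{1}}^{\theta_{2}} \Big( W_{sup}(\theta, \max_i d_i^{*}) + \tfrac{U_{sup}(\theta, \max_i d_i^{*})}{\qZero(\theta, \xibold^{*})} \Big) r(\theta)\,d(\theta) \leq \int_{\theta_{1}}^{\theta_{2}} \Big( W_{sup}(\theta, \max_i d_i) + \tfrac{U_{sup}(\theta, \max_i d_i)}{\qZero(\theta, \xibold)} \Big) r(\theta)\,d(\theta)
\]
for every competing design $\xibold$, which establishes the optimality of the equispaced design with respect to $\mathcal{R}_{3}$.

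The only mild obstacle is the legitimacy of pulling the $\sigma_{11}$ integral outside and applying Fubini: this needs $E_t[\sigma_{11}] < \infty$, which is an implicit assumption on the prior $t(\cdot)$ already used in Theorems~\ref{Theorem_SK_SMSPE_UnknownParameter} and \ref{Theorem_SK_IMSPE_UnknownParameter}, so it is safe to invoke here. The boundedness of the support of $r(\cdot)$ away from $0$ and $\infty$ guarantees the integrand is bounded on $(\theta_1,\theta_2)$, so no further integrability issue arises. Unlike the $\mathcal{R}_{2}$ argument, no Schur-convexity machinery is needed because the pointwise minimization already holds at every $\theta$; the simultaneous minimization of $W_{sup}(\theta,\max_i d_i)$, $U_{sup}(\theta,\max_i d_i)$, and $1/\qZero(\theta,\xibold)$ by a single design (the equispaced one) is what makes the proof go through directly.
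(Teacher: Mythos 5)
Your proposal is correct, but it takes a genuinely different and in fact more elementary route than the paper. The paper proves this theorem by establishing Schur-convexity of $\mathcal{R}_{3}(\xibold)$ directly: it interchanges differentiation and integration via Leibniz's rule, computes $\partial q_{3}/\partial d_{l} - \partial q_{3}/\partial d_{k}$ by cases (according to whether $d_{l}$ or $d_{k}$ equals $\max_i d_i$), and verifies positivity of each resulting term using the monotonicity of $W_{{sup}}$, $U_{{sup}}$ and the concavity estimate on $\omega$ from the appendix. Your argument skips all of that machinery: you observe that Theorem~\ref{Theorem_OK_SMSPE} already shows the equispaced design minimizes $SMSPE_{ock}(\theta,\sigma_{11},\xibold)$ for \emph{every} fixed $(\theta,\sigma_{11})$ — because $W_{{sup}}(\theta,\max_i d_i)$, $U_{{sup}}(\theta,\max_i d_i)$ and $1/\qZero$ are each nonnegative and each separately minimized by the equispaced design — and then a pointwise-dominated integrand integrated against a nonnegative prior density preserves the inequality. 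This avoids the differentiation-under-the-integral step entirely (and with it the case analysis and the nondifferentiability of $\max_i d_i$ at ties, which the paper's derivative-based argument glosses over), at the cost of proving only that the equispaced design is a global minimizer rather than the stronger Schur-convexity (majorization-monotonicity) property of the risk. For the stated theorem your conclusion is exactly what is needed, and your approach mirrors the simpler monotonicity argument the paper itself uses for $\mathcal{R}_{1}$ in Theorem~\ref{Theorem_SK_SMSPE_UnknownParameter}. One small imprecision: what the integration step requires is nonnegativity of the densities $r(\cdot)$ and $t(\cdot)$ (plus $E_{t}[\sigma_{11}]<\infty$), not nonnegativity of the integrand itself, though both happen to hold here.
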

	\begin{proof}				
		Denoting $\smash{\displaystyle\max_i} \;	  d_{i} = d_{max}$ we have:
		\begin{align}
		SMSPE_{ock} &= \sigma_{11} \; \Big(  W_{{sup}}(d_{max}) +  \dfrac{U_{{sup}}(d_{max})}{\qZero} \Big) \text{ from equation } \eqref{Thm_ock_eq6}. 
		\end{align}
		Let, $ q_{3}(\theta, \; \xibold)   = W_{{sup}}(d_{max}) +  \dfrac{U_{{sup}}(d_{max})}{\qZero}$.
		Then, 
		\begin{align}
		\mathcal{R}_{3}(\xibold) &= \int\limits_{0}^{\infty} \int\limits_{\theta_{1}}^{\theta_{2}} \;
		\sigma_{11}  \;q_{3}(\theta, \; \xibold) \;\;
		r(\theta) \; t(\sigma_{11}) \;
		d(\sigma_{11})\;d(\theta).	\nonumber 
		\end{align}
		Note, that $\mathcal{R}_{3}(\xibold)$ is permutation invariant of $d_{i}$'s. 
		Consider, 
		\begin{align}
		\Delta &=
		\dfrac{\partial \mathcal{R}_{3}(\xibold)}{\partial d_{l}} 
		-  \dfrac{\partial \mathcal{R}_{3}(\xibold)}{\partial d_{k}} \label{SMSPE_OCK_Unknown_Par_Eq_I}\\
		&=  
		\dfrac{\partial }{\partial d_{l}} 
		\int\limits_{0}^{\infty} \int\limits_{\theta_{1}}^{\theta_{2}} 
		\sigma_{11}  \;q_{3}(\theta, \; \xibold) \;\;
		r(\theta) \; t(\sigma_{11}) \;\;
		d(\sigma_{11})\;d(\theta)     \nonumber \\
		&\;\;\;\;\;\;\;        -
		\dfrac{\partial }{\partial d_{k}} 
		\int\limits_{0}^{\infty} \int\limits_{\theta_{1}}^{\theta_{2}} 
		\sigma_{11}  \;q_{3}(\theta, \; \xibold) \;\;
		r(\theta) \; t(\sigma_{11}) \;\;
		d(\sigma_{11})\;d(\theta)        \nonumber   \\
		&=  
		\int\limits_{0}^{\infty} \sigma_{11}   \; \mathbf{t}(\sigma_{11}) \; d(\sigma_{11}) 
		\Bigg[ 	 
		\int\limits_{\theta_{1}}^{\theta_{2}}  
		\Big(
		\dfrac{\partial  q_{3}(\theta, \; \xibold) }{\partial d_{l}} 
		- 	\dfrac{\partial  q_{3}(\theta, \; \xibold) }{\partial d_{k}} 
		\Big) \;\; 
		r(\theta)     \;\;d(\theta)     	    		       
		\Bigg] \nonumber \\
		&\;\;\;\;\;\;\;\; (\text{  Using Leibniz's Rule as in \citet[chapter~8]{Book_Protter2012intermediate}} ) \nonumber \\
		&=  
		E_{t}(\sigma_{11})  
		\Bigg[ 	 
		\int\limits_{\theta_{1}}^{\theta_{2}}  
		\Big(
		\dfrac{\partial  q_{3}(\theta, \; \xibold) }{\partial d_{l}} 
		- 	\dfrac{\partial  q_{3}(\theta, \; \xibold) }{\partial d_{k}} 
		\Big) \; 
		r(\theta)     \;d(\theta)     	    		       
		\Bigg]  . \nonumber 
		\end{align}
		Note, 
		\begin{align}
		\text{for }d_{i} \neq d_{max}, \;\; 
		\dfrac{\partial q_{3}(\theta, \; \xibold) }{\partial d_{i}}  &= - \dfrac{U_{sup}(d_{max})}{(\Fxi)^2} \dfrac{\partial \omega(d_{i}) }{\partial d_{i}}  \nonumber \\
		\text{and, if } d_{i} =  d_{max}, \;\; 
		\dfrac{\partial q_{3}(\theta, \; \xibold) }{\partial d_{i}}  
		&= W_{sup}^{\prime}(d_{max}) + \dfrac{U_{sup}^{\prime}(d_{max})}{\Fxi} - \dfrac{U_{sup}(d_{max})}{(\Fxi)^2} \dfrac{\partial \omega(d_{max}) }{\partial d_{max}} . \nonumber		    
		\end{align}
		Thus,
		\begin{equation} \label{SMSPE_OCK_Unknown_Par_Eq_II}
		\dfrac{\partial q_{3}(\theta, \; \xibold)}{\partial d_{l}}  
		- 	\dfrac{\partial q_{3}(\theta, \; \xibold)}{\partial d_{k}}  = 
		\begin{cases}
		\dfrac{U_{sup}(d_{max})}{(\Fxi)^2} \Big(\dfrac{\partial \omega(d_{k}) }{\partial d_{k}} - \dfrac{\partial \omega(d_{l}) }{\partial d_{l}} \Big)   &\text{ for } d_{k}, d_{l} \neq d_{max} \\
		& \\
		W_{sup}^{\prime}(d_{max}) + \dfrac{U_{sup}^{\prime}(d_{max})}{\Fxi} & \\
		\; + \; \dfrac{U_{sup}(d_{max})}{(\Fxi)^2} \Big(\dfrac{\partial \omega(d_{k}) }{\partial d_{k}} - \dfrac{\partial \omega(d_{max}) }{\partial d_{max}}\Big) &\text{ for } d_{k} \neq d_{l} =   d_{max} 
		\end{cases}
		\end{equation} 
		Note, that for $d_{l} > d_{k}$, 		from $\eqref{Appendixkey1}$ we have $\Big(    \dfrac{\partial \omega(d_{k}) }{\partial d_{k}} - \dfrac{\partial \omega(d_{l})}{\partial d_{l}} \Big) >  0 $ and from Theorems~\ref{Theorem_SK_SMSPE} and \ref{Theorem_OK_SMSPE}, $W_{sup}^{\prime}(.) >0$ and $U_{sup}^{\prime}(.) >0 $. Hence, the terms in equation \eqref{SMSPE_OCK_Unknown_Par_Eq_II} $>0 $. 
		
		So, from equation $\eqref{SMSPE_OCK_Unknown_Par_Eq_I}$ we get $	\dfrac{\partial \mathcal{R}_{3}(\xibold)}{\partial d_{l}} 
		-  \dfrac{\partial \mathcal{R}_{3}(\xibold)}{\partial d_{k}} >0 $ for $d_{l} > d_{k}$, which implies $\mathcal{R}_{3}(\xibold)$ is Schur-convex and is minimized for an equispaced design.
	\end{proof}
	\begin{theorem} \label{Theorem_OK_IMSPE_Unknown_Parameter}
		Consider the reduced bivariate ordinary cokriging model as in Lemma~\ref{LemmaSimpleAndOrdinary}. The parameters $\theta$ and $\sigma_{11}$ are assumed to be unknown and independent with prior probability density functions $r(\cdot)$ and $t(\cdot)$, respectively. The support of $r(\cdot)$ is of the form $(\theta_{1},\theta_{2})$, where $0 < \theta_{1} < \theta_{2} < \infty $. Then, an equispaced design is optimal with respect to the risk function $\mathcal{R}_{4}(\xibold)$.
\end{theorem}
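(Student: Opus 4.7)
The plan is to leverage the closed-form expression for $IMSPE_{ock}$ derived in Theorem~\ref{Theorem_OK_IMSPE}, namely
\begin{align*}
IMSPE_{ock}(\theta, \sigma_{11}, \xibold) = \sigma_{11}\Bigl(1 - \tfrac{n-1}{\theta} + 2\Phi(\xibold) + \tfrac{G(\xibold)}{\Omega(\xibold)}\Bigr),
\end{align*}
and use it, together with the independence of $\theta$ and $\sigma_{11}$, to factor the risk function as
\begin{align*}
\mathcal{R}_{4}(\xibold) = E_{t}[\sigma_{11}]\int_{\theta_{1}}^{\theta_{2}} \Bigl(1 - \tfrac{n-1}{\theta} + 2\Phi(\xibold) + \tfrac{G(\xibold)}{\Omega(\xibold)}\Bigr)\, r(\theta)\, d(\theta).
\end{align*}
The factor $E_{t}[\sigma_{11}]$ is a positive constant that does not influence the minimizer, so the argument reduces to establishing Schur-convexity of the inner integral in $\xibold$ over the simplex $\{(d_{1},\ldots,d_{n-1}): d_{i}\geq 0,\ \sum d_{i}=1\}$.

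Next I would mimic the strategy used in Theorems~\ref{Theorem_SK_IMSPE_UnknownParameter} and \ref{Theorem_OK_SMSPE_Unknown_Parameter}: verify permutation invariance of $\mathcal{R}_{4}$ (immediate from permutation invariance of $\Phi$, $G$, and $\Omega$), then show that for any $d_{l}\geq d_{k}$,
\begin{align*}
\frac{\partial \mathcal{R}_{4}(\xibold)}{\partial d_{l}} - \frac{\partial \mathcal{R}_{4}(\xibold)}{\partial d_{k}} \geq 0.
\end{align*}
Because $r(\cdot)$ has compact support $(\theta_{1},\theta_{2})$ bounded away from zero and infinity, the integrand and its partial derivatives in $d_{i}$ are continuous and bounded on this support, so Leibniz's rule (as in \citet[chapter~8]{Book_Protter2012intermediate}) lets me move both partial derivatives inside the $\theta$-integral. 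The $1-\tfrac{n-1}{\theta}$ piece vanishes under differentiation in $\xibold$, while the $2\Phi(\xibold)$ piece is handled exactly as in Theorem~\ref{Theorem_SK_IMSPE_UnknownParameter}. It then suffices to show that, for each fixed $\theta\in(\theta_{1},\theta_{2})$, the difference
\begin{align*}
\frac{\partial}{\partial d_{l}}\Bigl(\tfrac{G(\xibold)}{\Omega(\xibold)}\Bigr) - \frac{\partial}{\partial d_{k}}\Bigl(\tfrac{G(\xibold)}{\Omega(\xibold)}\Bigr) \geq 0 \quad \text{whenever } d_{l}\geq d_{k}.
\end{align*}

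This last inequality is precisely the Schur-convexity statement for the deterministic ordinary-cokriging $IMSPE$ that underlies Theorem~\ref{Theorem_OK_IMSPE}, which was established via the argument in Theorem 4.2 of \cite{Kriging_Antognini_Zagoraiou_2010}; I would simply invoke it termwise for each $\theta$. Integrating this pointwise inequality against the nonnegative density $r(\theta)$ preserves the sign, yielding Schur-convexity of $\mathcal{R}_{4}(\xibold)$ and hence that the equispaced design $d_{i}=1/(n-1)$ is the minimizer.

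The main obstacle I anticipate is purely technical: the $G(\xibold)/\Omega(\xibold)$ term couples all the $d_{i}$'s through the denominator, so the single-variable Schur-convexity shortcut used for $\Phi$ does not apply directly and one must reuse the pointwise Schur-convexity of the $\theta$-fixed $IMSPE_{ock}$. A secondary point to check carefully is the legitimacy of swapping differentiation and integration --- here the key observation is that on the compact set $\theta\in[\theta_{1},\theta_{2}]$ with $\theta_{1}>0$ the integrand and its $d_{i}$-derivatives are uniformly bounded, so Leibniz's rule applies without further assumptions on $r(\cdot)$ or $t(\cdot)$ beyond integrability of $\sigma_{11}$ under $t$.
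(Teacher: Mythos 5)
Your proposal is correct and follows essentially the same route as the paper, whose own proof is a one-line deferral to the argument of Theorem~\ref{Theorem_SK_IMSPE_UnknownParameter}: permutation invariance, Leibniz's rule to pass the partial derivatives inside the $\theta$-integral (justified by the bounded support $(\theta_1,\theta_2)$ away from $0$ and $\infty$), and then the pointwise-in-$\theta$ Schur-convexity of $IMSPE_{ock}$ from the known-parameter case integrated against the nonnegative density $r(\cdot)$. The only cosmetic imprecision is that what Theorem~\ref{Theorem_OK_IMSPE} actually supplies is Schur-convexity of the full sum $2\Phi(\xibold) + G(\xibold)/\Fxi$ rather than of the quotient term alone, but applying the difference criterion to the whole integrand at each fixed $\theta$ gives the identical conclusion.
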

	\begin{proof}
		Using the same line of proof as in Theorem \eqref{Theorem_SK_IMSPE_UnknownParameter} we can show that the equispaced design is I-optimal for an unknown parameter case as well.		
	\end{proof}

	\section{Case study} \label{Illustration_cokriging_known_parameter}
	In this section, we are interested in using the proposed optimality results in the earlier section to design a river monitoring network for the efficient prediction of water quality. A pilot data set of water quality data from river Neyyar in southern India is used to obtain preliminary information about parameters. We will illustrate how the theory that we developed in Sections~\ref{SectionModelReduction} and \ref{OptimalDesign_KnownParameter} is applied to this problem. The image of the river is shown in Figure \ref{fig1}, where the monitoring stations on the river basin are marked with squares. We will compare the performance of the equispaced design with the given choice of stations for designing a cokriging experiment on this river. 
	\begin{figure}[h]
		\includegraphics[height=10cm, width = 15cm, scale = 5]{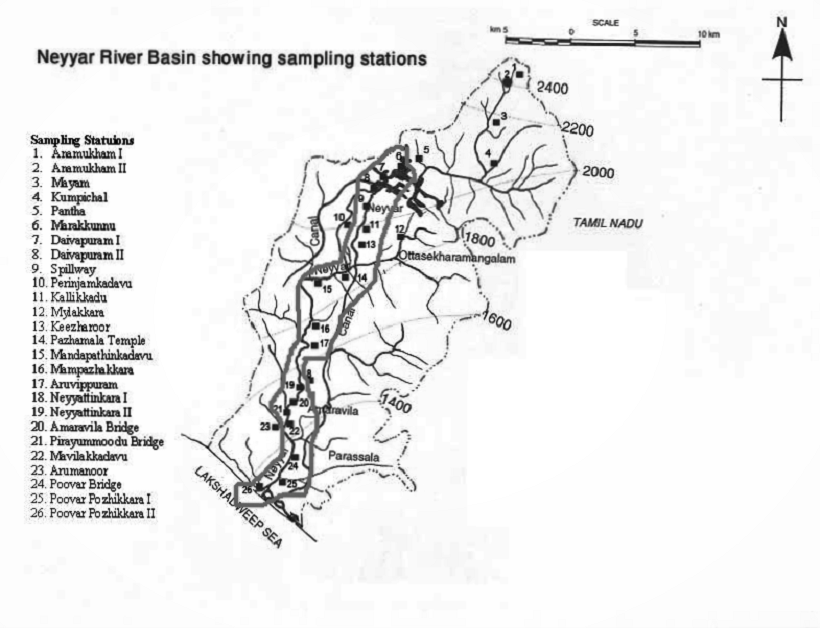}
		\caption {Monitoring station positions on the Neyyar river basin. We use the station locations and data within the encircled area.}
		\label{fig1}
	\end{figure}
	
	%The image of the river is shown in Figure 1 in Supplement II \cite{SupplementTwo} %Figure \ref{fig1} 
	%where the monitoring stations on the river basin are marked in red. We will compare the performance of the equispaced design with the given design of stations. \\
	%\begin{figure}[h]
	%	\includegraphics[height=10cm, width = 15cm, scale = 5]{ImageNeyyarSurface_2}
	%	\caption {Monitoring station postions on the Neyyar river basin. We use the station locations and data within the green area.}
	%	\label{fig1}
	%\end{figure}
	The location of each monitoring station is specified by its geographical coordinates, that is, latitude and longitude. At each of these stations, measurements are taken for two variables: pH and phosphate which are used to measure the quality of water. For carrying out the analysis, that is, gathering information on the direct covariance and cross-covariance functions and parameters of the two responses, we use data from a single branch of the river with 17 stations (see the encircled region in Figure \ref{fig1})%in Supplement II \cite{SupplementTwo}) %\ref{fig1})
	. We denote this branch of the river by $\mathcal{D}_{2} \;(\subseteq \mathbb{R}^{2})$ and in this case we have $n = 17 $.  We denote the set of sampling points on this river branch by $\mathcal{S}_{2} = \{w_{1}, \ldots,w_{17}\} \;(\subseteq \mathcal{D}_{2})$, where each $w_i=(latitude_i,longitude_i),\,i=1,\ldots,17$. Let $w_{1}$ and $w_{17}$ respectively be the starting (station 6) and the end point (station 26) of the river branch, and suppose we assume $w_{i}$ is upstream of $w_{j}$ if $i<j$ for all $i,j=1,\ldots,17$. 
	
	The results that we obtained for determining optimal designs in earlier sections were based on one-dimensional inputs, that is, where the region of interest was denoted by $\mathcal{D} \subset \mathbb{R}$. In fact, without loss of generality we had assumed $\mathcal{D} = [0,1]$. So, we first use a transformation on our two-dimensional input sets $\mathcal{S}_{2}$ and $\mathcal{D}_{2}$ given by:
	%%These random functions are recorded at the set of locations $\mathcal{S}_{1} = \{s_{1}, ...,s_{n}\}$ where each $s_i=(latitude_i,longitude_i),\,i=1,\ldots,17$. Let $s_{1}$ and $s_{n}$ be the starting (i.e., station 6) and the end point (i.e., station 26) of the river branch, and suppose $s_{i}$ is upstream of $s_{j}$ if $i<j$. 
	%%%Although, $\mathcal{S} \subseteq \mathbb{R}^2$, but use the following transformation to translate our problem in $\mathbb{R}$.
	%%		\begin{align*}
	%%		\text{We define a map: }
	%%		\varphi: \mathcal{S}_2 & \longrightarrow [0,1]\\%Although, $\mathcal{S} \subseteq \mathbb{R}^2$, but use the following transformation to translate our problem in $\mathbb{R}$.
	\begin{align*}
	%\text{Define a map: }
	\varphi: \mathcal{D}_2 & \longrightarrow [0,1]\\
	w & \mapsto \dfrac{\mid \mid w-w_{1}\mid \mid}{\mid \mid w_{17}- w_{1}\mid \mid}, 
	\end{align*}
	where $ \mid \mid u- v \mid \mid $ is the geodesic stream distance between the two points $u$ and $v$ along the river and $ u,v \in \mathcal{D}_2 $. The geodesic distance is used to calculate distance on the earth's surface and is discussed in \cite{Book_sudipto_banerjee2014} in detail. The stream distance is the shortest distance between two locations on a stream, where the distance is computed along the stream \citep{Ver_Hoef_2006}. In this case it was not possible to calculate the exact stream distance using solely the coordinates of monitoring points. So, the stream distance between two adjacent points was approximated by the geodesic distance between the two points. 
	
	The transformed region of interest $ \varphi(\mathcal{D}_2)  = \mathcal{D}_{1} = [0,1]$ and the set of sampling points $\varphi(\mathcal{S}_{2})  = \mathcal{S}_{1} $ are one-dimensional. We had to constrain ourselves to a single branch of river as a single branch of river is connected and hence can be considered to be a one-dimensional object. For example, consider stations 10, 18 and 23 which are very close to the main branch, but if these points were included, then the transformation  to a one-dimensional set would not work. The transformed set of observation points is given by $\mathcal{S}_{1} = \{ x_{1}, x_{2}, \ldots,x_{17} \}$ where $\varphi(w_{i}) = x_{i}$ for all $i=1,\ldots ,17$. 
	Also, by definition of the function $\varphi(\cdot)$ $ x_{1}=0$, $x_{17} =1 $ and $x_{i} < x_{j}$ for $ i<j $, and $d_{i}  = x_{i+1} - x_{i}$ for $i=1,\ldots,16$. 
	
	We took the pH level (a scalar with no units) as the primary variable  $Z_{1}(\cdot)$, and phosphate concentration (measured in mg/l) as the secondary variable $Z_{2}(\cdot)$, with both the variables centered and scaled. 

	To investigate the covariance function and corresponding parameters we fitted a model by likelihood maximization, separately for each variable. Below we see Table \ref{tab1}, which was computed using the $likfit$ function with a constant mean (that is, corresponding to unknown mean) from the  $geoR$ package (R-3.6.0  software). %1 in Supplement II \cite{SupplementTwo} %\ref{tab1} 
	The likelihood values in Table~\ref{tab1} suggest that taking the random processes as a zero-mean process with an exponential variance structure and zero nugget effect is a reasonable choice for both variables. Using the information from the univariate analysis of pH and phosphate we next try to set up the appropriate bivariate simple cokriging model. Note that for both variables, we tried to fit a Gaussian covariance structure, however, the algorithm did not converge.

	\begin{table}[h]
		\begin{center}

{\scriptsize
	%\footnotesize
\begin{tabular}{ m{3cm} ll lll   }
	
		\multicolumn{6}{c}{\textbf{pH}} \\
		\hline
		Covariance  Model & $C(h) = \sigma^2 \rho(h)$ &Log-Likelihood & Variance & Parameter  ($\theta$, $\kappa$) & Nugget \\
		\hline
		\multicolumn{6}{l}{Constant but unknown mean}\\
		\hline
		Exponential      & $\sigma^2 exp(-\theta |h|)$ & -20.28     &  0.85 &  16.95 & 0 \\
		%Exponential      & $\sigma^2 exp(-\theta |h|)$ & -23.61     &  0.00 &  $\infty$ &0.94 \\
		%Gaussian	     & $\sigma^2 exp(-(\theta |h|)^2)$ & No Convergence	 & NA  &NA&NA\\
		Spherical	     & $\sigma^2 
		\begin{cases}
		1 - 1.5 h \theta +  
		.5 (h \theta)^3 , &  \textit{ if }  h < \dfrac{1}{\theta}\\
		0,              & \text{otherwise}
		\end{cases}                                      
		$ & -20.74     & 0.96 &  7.90 &0\\
		Matern	         & $\sigma^2 \dfrac{1}{2^{(\kappa-1)} \Gamma (\kappa) } (h \theta)^{\kappa} K_{\kappa}(h \theta)$ & -20.15	   & 0.83 &  (11.09,0.35)&0\\
		\hline
		\multicolumn{6}{l}{Known mean equal to zero}\\
		Exponential      & $\sigma^2 exp(-\theta |h|)$ & -20.29     &  0.85 &  17.12 &0 \\
		\hline \\
		\multicolumn{6}{c}{\textbf{phosphate}} \\
		\hline
		Covariance Model & $C(h) = \sigma^2 \rho(h)$ &Log-Likelihood & Variance & Parameter  ($\theta$, $\kappa$) & Nugget\\
		\hline 
		\multicolumn{6}{l}{Constant but unknown mean}\\
		\hline
		Exponential      & $\sigma^2 exp(-\theta |h|)$ & -23.19          & 0.97 & 38.35 &0 \\
		%Exponential      & $\sigma^2 exp(-\theta |h|)$ &  -23.61          & 	0.00 & 38.31 &.94\\
		%Gaussian	     & $\sigma^2 exp(-(\theta |h|)^2)$ &  No Convergence	&  NA  & NA &NA\\
		Spherical	     & $\sigma^2 
		\begin{cases}
		1 - 1.5 h \theta +  
		.5 (h \theta)^3 , & \textit{ if } h < \dfrac{1}{\theta}\\
		0,              & \text{otherwise}
		\end{cases}                                      
		$ & -23.09         & 0.95 & 19.02 &0\\
		Matern	         & $\sigma^2 \dfrac{1}{2^{(\kappa-1)} \Gamma (\kappa) } (h \theta)^{\kappa} K_{\kappa}(h \theta)$ &  -23.85         & 0.97 & (0.01,0.003) &0\\
		\hline
		\multicolumn{6}{l}{Known mean equal to zero}\\
		Exponential& $\sigma^2 exp(-\theta |h|)$ & -23.29     &  0.96 &  45.94 &0 \\
		\hline
	\end{tabular}
}
	\end{center}
\caption{Results of Likelihood Analysis of pH and Phosphate for Different Covariance Models}
\label{tab1}
	\end{table}
		 %, ii) When we tried to fit an exponential covariance structure with unknown mean and non zero nugget using the $likfit$, the estimates are found not reliable and hence we omit it from the table. iii) For both variables, exponential covariance structure for Known mean equal to zero is calculated using calculated using $optim$ function in R.  

	We consider $Z_{1}(\cdot)$ and $Z_{2}(\cdot)$ to have the exponential parameters $\theta$ and $\phi$, respectively. The results from Table \ref{tab1} %in Supplement II \cite{SupplementTwo} %~\ref{tab1} 
	for pH and phosphate indicate a large difference between $\hat{\theta}$ and $\hat{\phi}$. Thus, it seems more appropriate to assume a generalized Markov-type bivariate covariance rather than proportional covariances in the bivariate cokriging model. %Next, we conducted a bivariate analysis to understand the structure of the data covariance structure \textit{Generalized Markov Type} as in eq.(\ref{GeneralizedMarkovMatrix}).
	Based on the assumption of normal errors, the  log-likelihood function  is: 
	\begin{equation*}l=- \nhalf  log(2 \pi) - \half log[det(\covMat)]  -  \half \VectZ^{T} \covMat^{-1} \VectZ,  
	\end{equation*} where $\VectZ = (\VectZOne ,\VectZTwo )^{T}$, $\covMat =  
	\begin{bmatrix} 
	\MatrixMOne & \rho \MatrixMOne\\
	\rho \MatrixMOne &  \rho^{2} \MatrixMOne+ ( \sigma_{22}-\rho^{2}  \sigma_{11}) \MatrixCR
	\end{bmatrix}$, and  $\MatrixCR$ is chosen to be the identity matrix.%\text{ to be the identity matrix, }$
	
	%As there is no function in R specifically to estimate parameters of generalized Markov type covariance structures, for the bivariate model, we use the $optim$ function in \textit{R Software} to find the MLEs  $\hat{\theta_{1}} = 17.12, $, $\hat{\sigma_{11}} = 0.85$, $\hat{\sigma_{22}} = 0.94,$ and $\hat{\rho} = .25$ with the log-likelihood value - 27.74. 
	Using the $optim$ function in (R-3.6.0 software) we find the MLEs  to be $\hat{\theta} = 17.12$, $\hat{\sigma_{11}} = 0.85$, $\hat{\sigma_{22}} = 0.94,$ $\hat{\rho} = .25$ and $l= - 27.74$. The $likfit$ and $optim$ functions in R-3.6.0 were used for computations.
	%Next we compute the  efficiency of the river network with respect to the  equispaced optimal design. \\
	%	The distance $d_i$'s in the river  monitoring stations are given by the set 
	% In upcoming illustrations we define the efficiency of design for the two design criterion. We also show an equispaced design has advantage over the given design. 
	
	\begin{illustration}
		Relative efficiency when parameter values are known 
	\end{illustration}
	The design given for the pilot monitoring network is denoted by $\xiboldNot$, which is obtained by considering the 17 points on the river (encircled region) and applying the transformation $\varphi(\cdot)$. We computed  $\xiboldNot 
	= (
	0.04,
	0.02,
	0.04,
	0.09,
	0.20,
	0.06,
	0.12,
	0.13,
	0.04,
	0.04,
	0.02, 
	0.05,
	0.04,\\
	0.07,
	0.02,
	0.02
	)$. 
	%	$\xi_0= 
	%		\{0.00,
	%		0.06, 
	%		0.10, 
	%		0.19, 
	%		0.39,$
	%		$0.45, 
	%		0.57, 
	%		0.70, 
	%		0.74, 
	%		0.78, 
	%		0.80, 
	%		0.85, 
	%		0.90, 
	%		0.97, 
	%		0.98, 
	%		1.00\}$.	 
	We also denoted the equispaced design by $\xiboldAst$, where $(\xiboldAst)_{i} = \dfrac{1}{n-1}  = \dfrac{1}{16}$ for all $i = 1,\ldots,17$. The parameter values are taken to be the same as the maximum likelihood estimates. 
	
	Relative efficiency based on IMSPE of design $\xiboldNot$ with respect to the optimal design $\xiboldAst$ is defined as the ratio,
	$\dfrac{ \text{IMSPE($\xiboldAst$)}}{\text{IMSPE($\xiboldNot$)}}$.		
	%		We have seen in $Theorem. \ref{IMSPESimpleCokriginTheoremBothCase}$ that for simple cokriging the $IMSPE$ depends on the parameters of primary variable $\theta_{1}$ and $\sigma_{11}$.    
	For known parameters, using the expression of IMSPE in Theorem~\ref{Theorem_SK_IMSPE}, the relative efficiency of the river network (or design) $\xiboldNot$ is found to be 0.797. Similarly, for the SMSPE criterion we define the ratio as $\dfrac{ \text{SMSPE($\xiboldAst$)}}{\text{SMSPE($\xiboldNot$)}}$. For the SMSPE criterion, using Theorem~\ref{Theorem_SK_SMSPE} the relative efficiency of the river network $\xiboldNot$ is 0.524. Note that relative efficiency values in both cases indicate a sizable increase in prediction accuracy if equispaced designs were used instead.		   
	\begin{illustration}
		Relative efficiency  for unknown parameters 
	\end{illustration}
Consider, $\theta \sim Unif(\theta_{1}, \theta_{2})$ for $0< \theta_{1} <  \theta_{2} < \infty$, a common choice of prior for $\theta$ (see \cite{stehlik2015robust}) %referred to as the kernels
and $\sigma_{11} \sim t(\cdot)$ for some density function $t(\cdot)$.  Note we could have chosen any prior function for $\theta$ other than the uniform distribution as long as it had a finite support. The risks for the uniform prior are,%$\mathcal{R}_{1}(\xi)$ is given by:
	\begin{align}
	\mathcal{R}_{1}(\xibold) & =
	E_{\sigma} \;\; \dfrac{1}{\theta_{2} - \theta_{1}} \dfrac{1}{d_{max}}\Bigg[ 2 ln  \dfrac{1 + e^{- \theta_{2} d_{max}} }{ 1 + e^{- \theta_{1} d_{max}} } + d_{max}(\theta_{2} - \theta_{1})  \Bigg] \label{risk1}
	\end{align}
	and,
	\begin{align}
	\mathcal{R}_{2}(\xibold) & =
	E_{\sigma} \;\; \Bigg[1 - \dfrac{n-1}{\theta_{2} - \theta_{1}} \;\; ln\dfrac{\theta_{2}}{\theta_{1}} + \dfrac{1}{\theta_{2} - \theta_{1}} \sum_{i=1}^{n-1} ln \Big( \dfrac{e^{2 \theta_{2} d_{i}} -1}{e^{2 \theta_{2} d_{i}} } . 
	\dfrac{e^{2 \theta_{1} d_{i}} }{e^{2 \theta_{1} d_{i}} - 1 } \Big) \Bigg] \label{risk2},
	\end{align}
	where $\smash{\displaystyle\max_i}	 (d_i)$ is written as $d_{max}$ and $E_{\sigma}=E_{t}[\sigma_{11}]$. The relative efficiency is then $\dfrac{{\mathcal{R}_i(\xiboldAst)}}{{\mathcal{R}_i(\xiboldNot)}},\,i=1,2$.	Note, these risks in \eqref{risk1} and \eqref{risk2} would differ if we change the prior. However $\xiboldAst$ would remain same.  
	
	Using $\hat{\theta} = 17.12$, we choose $\theta_1$ and $\theta_2$ such that the mean of the interval is  $\hat{\theta}$. Varying the range of values for $\theta_1$ and $\theta_2$, the relative risks are shown in the following Table \ref{Table_Relative Risk}.
	%Table 2 of Supplement II \cite{SupplementTwo}. 
	\begin{table}[h]
	\begin{center}

		\begin{tabular}{ @{}lrrrrcll@{} }
					\hline
			$\theta_{1}$  & $\theta_{2}$  &  $\mathcal{R}_1(\xiboldAst)/E_{\sigma}$ & $\mathcal{R}_1(\xiboldNot)/E_{\sigma}$  & $\dfrac{ \mathcal{R}_1(\xiboldAst)}{\mathcal{R}_1(\xiboldNot)}$ &
			$\mathcal{R}_2(\xiboldAst)/E_{\sigma}$ & $\mathcal{R}_2(\xiboldNot)/E_{\sigma}$  & $\dfrac{ \mathcal{R}_2(\xiboldAst)}{\mathcal{R}_2(\xiboldNot)}$ \\
			\hline
			16.62	&	17.62	&	0.489	&	0.933	&	0.524 &	0.332	&	0.434	&	0.766	\\
			16.12	&	18.12	&	0.489	&	0.933	&	0.524&	0.332	&	0.433	&	0.766	\\
			15.12	&	19.12	&	0.489	&	0.932	&	0.525&	0.332	&	0.433	&	0.766	\\
			12.12	&	22.12	&	0.486	&	0.923	&	0.527&	0.330	&	0.430	&	0.768	\\
			\hline
		\end{tabular}
		\end{center}

		\caption{Relative risk of given design - IMSPE and SMSPE criterion}
			\label{Table_Relative Risk}	
	\end{table}
	From Table %2 \cite{SupplementTwo} %Table
	\ref{Table_Relative Risk}, we note small changes in the relative efficiency for changes in $\theta_1$ and $\theta_2$, suggesting that the criterion is robust to changes in the prior information regarding $\theta$. This robustness persists when we change the values of $\hat{\theta}$. We also checked relative efficiencies for $\hat{\theta} = $ 7.12, 27.12 and 47.12, however the results are not shown here. %. Due to constrain of space we do not add those tables. 
	
		\section{Concluding remarks} \label{SectionConcludingRemarks}
Multivariate kriging models are of particular practical interest in computer experiments, spatial and spatio-temporal applications. Very often, two or more correlated responses may be observed, and prediction from cokriging may improve prediction quality over kriging for each variable separately.

In this article, we identify a class of cross-covariance functions, which in fact includes many popularly used bivariate covariance functions, for which the cokriging estimator reduces to a kriging estimator. Thereafter, we address the problem of determining designs for some of these cokriging models. Since the designs are dependent on the covariance parameters, Bayesian designs are proposed. We prove that the locally and Bayesian optimal designs are both equispaced. Intuitively, this could be explained due to the fact that the locally optimal designs are equispaced for all the values of covariance parameters. So, when we mathematically find the Bayesian optimal designs, both are equispaced. 

As a future extension, we are interested in determining optimal designs for universal cokriging models. However, as illustrated in \cite{Dette_et_al_2008} and \cite{Dette_et_al_2013}, obtaining theoretical designs for such models is difficult. We have also come across situations in cokriging experiments where time and space (or multiple inputs) both may affect the responses. Thus, there is a need to extend optimal designs to cover such scenarios where the input space is a multidimensional grid of points. 	
%	The main results obtained are summarized below:
%	\begin{itemize}
%		\item For a class of cross covariance structures, collocated cokriging models are reduced to kriging models
%		\item A new method to calculate $MSPE_{sk}(\cdot)$ and $MSPE_{ok}(\cdot)$
%		\item Equispaced design minimizes the SMSPE and IMSPE for simple and ordinary kriging models with exponential covariance structure, when covariance parameters are assumed to be known.
%		\item Equispaced design minimizes the SMSPE and IMSPE for simple and ordinary kriging models with exponential covariance structure, when prior distributions are assumed on the covariance parameters. 
%	\end{itemize} 
	
\begin{appendix} 
			\section{Appendix} \label{AppendixCov} 
		\begin{result}\label{GenMarkovValidity}
		Consider two random functions $Z_{1}(\cdot)$ and $Z_{2}(\cdot)$ with respective covariance functions $\mathcal{C}_{ii}(\cdot)$ and spectral densities  $s_{i}(\cdot)$ for $i =1,2$. Consider another valid correlation function $\CovFunctR(\cdot)$ with spectral density $s_{R}(\cdot)$. Then, $ \covMat$ as defined in \eqref{GeneralizedMarkovMatrix} is a valid covariance matrix if and only if $( \sigma_{22}-\rho^{2}  \sigma_{11}) > 0$.
	\end{result}
	
	\begin{proof}
		
		The cross-spectral density matrix $\pmb {S}_p(u)$ is,
		\begin{align*}
		\pmb{S}_p(u) &= 
		\begin{bmatrix} 
		s_{1}(u) & \rho s_{1}(u)\\
		\rho s_{1}(u) & \;\; \rho^{2} s_{1}(u)+ ( \sigma_{22}-\rho^{2}  \sigma_{11}) s_{R}(u)
		\end{bmatrix}, u \in \mathbb{R}
		\end{align*}
		with determinant $s_1(u)( \sigma_{22}-\rho^{2}  \sigma_{11})s_R(u)$. Note, that the matrix $\pmb{S}_{p}(u)$ is positive definite whenever $( \sigma_{22}-\rho^{2}  \sigma_{11}) > 0$,  as $s_{1}(\cdot)$ and $s_{R}(\cdot)$ correspond to the inverse Fourier transforms of the covariance functions $\CovFunctOneOne(\cdot)$ and $\CovFunctR(\cdot)$, respectively. Using the criterion of \cite{Cramer_1940_theory}, $ \covMat$ is then a valid covariance matrix if and only if $( \sigma_{22}-\rho^{2}  \sigma_{11}) > 0$.
	\end{proof}

		\section{Appendix} \label{Appendix} 
		We list down some of the key matrices, vectors and their decomposition required for proving results in Lemma~\ref{LemmaSimpleAndOrdinary} and Theorems \ref{Theorem_SK_SMSPE}, \ref{Theorem_SK_IMSPE}, \ref{Theorem_OK_SMSPE} and \ref{Theorem_OK_IMSPE}. In this article, we have used an exponential covariance matrix $\MatrixP$, where 
		\begin{align*}
		\MatrixP &= 
		\begin{bmatrix}
		1 & e^{- \theta |x_{1}-x_{2}|} &\;\;\;\;\;\;\; &\ldots& e^{- \theta |x_{1}-x_{n}|}  \\
		e^{- \theta |x_{2}-x_{1}|} & 1 &\;\;\;\;\;\;\; &\ldots& e^{- \theta |x_{2}-x_{n}|} \\
		. & .                  &\;\;\;\;\;\;\; &\ldots& .									  \\
		. & .                  &\;\;\;\;\;\;\; &\ldots& .									  \\
		. & .                  &\;\;\;\;\;\;\; &\ldots& .									  \\
		e^{- \theta |x_{n}-x_{1}|}  & e^{- \theta |x_{n}-x_{2}|}  &\;\;\;\;\;\;\; &\ldots& 1
		\end{bmatrix}.
		\end{align*}
		%%% Clipped for length sake
		Considering matrices $\MatrixL$ and $\MatrixD$, as in \cite{Kriging_Antognini_Zagoraiou_2010}, 
		\begin{align*}
		\MatrixL &= 
		\begin{bmatrix}
		1 & 0 & 0 & .&.&.& 0\\
		e^{-\theta d_{1}} & 1 & 0 & .&.&.&0\\
		e^{-\theta \Sigma^{2}_{i=1}  d_{i} } & e^{-\theta  d_{2} } & 1 & .&.&.&0\\
		. & . & . &  .&.& &\\
		. & . & . & .&..&&\\
		e^{-\theta \Sigma^{n-1}_{i=1} d_{i} } & e^{-\theta \Sigma^{n-1}_{i=2} d_{i} } & e^{-\theta \Sigma^{n-1}_{i=3} d_{i} } & .&.&.&1\\
		\end{bmatrix} \text{ and } \MatrixD = diag(1,1-e^{-2\theta d_{1}},\ldots, 1-e^{-2\theta d_{n-1}}),
		\end{align*}
		we wrote $
		\MatrixP = \MatrixL \MatrixD \MatrixL^{T}$. Thus, 
		  %		\begin{align}
%		\MatrixP &= \MatrixL \MatrixD \MatrixL^{T}. \nonumber	 
%		\end{align}
		\begin{align}
		\MatrixP^{-1} &= (\MatrixD^{-1/2} \MatrixL^{-1})^{T} (\MatrixD^{-1/2} \MatrixL^{-1}), \label{CovMatDecmp}
		\end{align}
where 
		\begin{align}
		\MatrixP^{-1}	= & 
		\begin{bmatrix}
		\dfrac{1}{1-e^{- 2 \theta d_{1}}} & \dfrac{- e^{\theta d_{1}}}{1-e^{- 2 \theta d_{1}}} &     & . & .& .& 0 \\ 
		\dfrac{- e^{\theta d_{1}}}{1-e^{- 2 \theta d_{1}}} &  \dfrac{1}{1-e^{- 2 \theta d_{1}}} + \dfrac{ e^{-2 \theta d_{2}}}{1-e^{- 2 \theta d_{2}}} & & 0 & .& .& 0 \\
		.\\
		.\\
		.\\
		0 & . &  & 0 & . & \dfrac{1}{1-e^{- 2 \theta d_{n-2}}} + \dfrac{ e^{-2 \theta d_{n-1}}}{1-e^{- 2 \theta d_{n-1}}} & \dfrac{- e^{\theta d_{n-1}}}{1-e^{- 2 \theta d_{n-1}}} \\
		0&0& &.&.& \dfrac{- e^{\theta d_{n-1}}}{1-e^{- 2 \theta d_{n-1}}}  & \dfrac{1}{1-e^{- 2 \theta d_{n-1}}} 
		\end{bmatrix} \label{CovMatInverse} .
		\end{align}
		%For matrices, 

		\section{Appendix} \label{Appendix_IMSPE_SCKandOCK} \label{AppendixA}
		\noindent Here, we evaluate $\Fxi = \VectOneN^{T} \MatrixP^{-1} \VectOneN $ and show that $\dfrac{1}{\qZero}$ is a Schur-convex function, which is minimized for an equispaced partition. Using equation \eqref{CovMatDecmp} from \ref{Appendix}, we write, 
		\begin{align}
		\VectOneN^{T} \MatrixP^{-1} \VectOneN &= (\MatrixD^{-1/2} \MatrixL^{-1} \VectOneN)^{T} (\MatrixD^{-1/2} \MatrixL^{-1} \VectOneN) = \VectGamma^{T} \VectGamma, \nonumber 
		\end{align}
		where
		\begin{align}
		\VectGamma^{T} &=	( \MatrixD^{-1/2} \MatrixL^{-1} \VectOneN)^{T} \;\;= \Big(1, \dfrac{1-e^{-\theta d_{1}}}{\sqrt{(1-e^{-2\theta d_{1}})}}, \ldots, \dfrac{1-e^{-\theta d_{n-1}}}{\sqrt{(1-e^{-2\theta d_{n-1}})}} \Big).  \nonumber 
		\end{align}
		Hence we have, 
%		\begin{align}
%		\VectOneN^{T} \MatrixP^{-1} \VectOneN  &= 1 +  \sum_{i=1}^{n-1} \dfrac{e^{\theta d_{i}} - 1}{e^{\theta d_{i}} + 1 } \nonumber \\
%		&= \sum_{i=1}^{n-1} \Big[ d_{i} +  \dfrac{e^{\theta d_{i}} - 1}{e^{\theta d_{i}} + 1 }\Big] \;\;\;\;\;\;\;\;\; \text{      (Note that withoiut loss of generality we assumed $\sum_{i=1}^{n-1} d_{i}=1$). } \nonumber
%		\end{align}
		\begin{align}
		\VectOneN^{T} \MatrixP^{-1} \VectOneN  &= 1 +  \sum_{i=1}^{n-1} \dfrac{e^{\theta d_{i}} - 1}{e^{\theta d_{i}} + 1 }.  \nonumber 
		\end{align}
		As without loss of generality we assumed $\sum_{i=1}^{n-1} d_{i}=1$, therefore 
				\begin{align}
		\VectOneN^{T} \MatrixP^{-1} \VectOneN  
		&= \sum_{i=1}^{n-1} \Big[ d_{i} +  \dfrac{e^{\theta d_{i}} - 1}{e^{\theta d_{i}} + 1 }\Big]  .  \nonumber
		\end{align}
		Using the above expression we write  
		\begin{align}
		\Fxi &= \sum_{i=1}^{n-1} \omega(d_{i}), \text{ where, } \omega(d) = d +  \dfrac{e^{\theta d} - 1}{e^{\theta d} + 1 }. \label{F_xi_term}
		\end{align}
		%\begin{align}
		%\text{Consider, } Q(\xi) =& \dfrac{1}{\Fxi} \nonumber \\
		%\text{First see that, }
		%\dfrac{\partial f(\xi)}{\partial d_{i}} & = 1 + \dfrac{2 \theta e^{\theta d_{i}}}{(1 + e^{\theta d_{i}})^2} \nonumber \\
		%\dfrac{\partial^{2} f(\xi)}{\partial d_{i}^{2}} & =  \dfrac{2 \theta^{2} e^{\theta d_{i}} (1-e^{\theta d_{i}})}{(1 + e^{\theta d_{i}})^3} < 0  \label{Appendixkey1} \\
		%\textit{And, } \dfrac{\partial Q(\xi)}{\partial d_{2}} - \dfrac{\partial Q(\xi)}{\partial d_{1}} & = \dfrac{1}{(F(\xi))^{2}} \Big[ \dfrac{\partial f(d_{1})}{\partial d_{1}} - \dfrac{\partial f(d_{2})}{\partial d_{2}}  \Big] \label{Appendixkey2} 
		%\end{align} 
		Next, differentiating $\Fxi$ with respect to $d_{i}$ we obtain
		\begin{align}
		\dfrac{\partial \Fxi}{\partial d_{i}} & = 1 + \dfrac{2 \theta e^{\theta d_{i}}}{(e^{\theta d_{i}} + 1  )^2}, \nonumber \\
		\dfrac{\partial^{2} \Fxi}{\partial d_{i}^{2}} & =  \dfrac{2 \theta^{2} e^{\theta d_{i}} (1-e^{\theta d_{i}})}{(1 + e^{\theta d_{i}})^3} < 0 . \label{Appendixkey1} 
		\end{align}  
%		Hence, for 
%		\begin{align}
%		Q(\xibold) =& \dfrac{1}{\Fxi}, \nonumber \\
%		\dfrac{\partial Q(\xibold)}{\partial d_{l}} - \dfrac{\partial Q(\xibold)}{\partial d_{k}} & = \dfrac{1}{(\Fxi)^{2}} \Big[ \dfrac{\partial \omega(d_{k})}{\partial d_{k}} - \dfrac{\partial \omega(d_{l})}{\partial d_{l}}  \Big] \text{ for } k,l = 1,\ldots,n-1.\label{Appendixkey2} 
%		\end{align}  
		Hence, for $ Q(\xibold) =\dfrac{1}{\Fxi}$ we have 
				\begin{align}\dfrac{\partial Q(\xibold)}{\partial d_{l}} - \dfrac{\partial Q(\xibold)}{\partial d_{k}} & = \dfrac{1}{(\Fxi)^{2}} \Big[ \dfrac{\partial \omega(d_{k})}{\partial d_{k}} - \dfrac{\partial \omega(d_{l})}{\partial d_{l}}  \Big] \text{ for } k,l = 1,\ldots,n-1.\label{Appendixkey2} 
		\end{align}  
		Note that $Q(\cdot)$ is permutation invariant of $d_{i}$'s . Also, $ \dfrac{\partial Q(\xibold)}{\partial d_{l}} > \dfrac{\partial Q(\xibold)}{\partial d_{k}} \text{ for } d_{l} > d_{k} $, where $k,l = 1,\ldots,n-1$ (using equations \eqref{Appendixkey1} and \eqref{Appendixkey2}). So, we can say that $Q(\cdot)$ is a Schur-convex function (from Theorem A.4 in \cite{MarshalOlkinBook}) and hence it is minimized for an equispaced design, that is $d_{i} = \dfrac{1}{n-1}$ for all $i$. \\

		\section{Appendix} \label{AppendixC}
		In this part, we look at some matrix and vector decompositions used for proving results involving the $SMSPE$ for simple and ordinary cokriging models. 
		
		Consider $x_{0} \in [x_{i}, x_{i+1}]$ for some $ i = 1,\ldots,n-1$, recall that $ a = x_{0} - x_{i}$ and let an $n \times n $ diagonal matrix, $\MatrixE = 	diag 
		\begin{pmatrix} 
		e^{- \theta \sum_{l=1}^{i-1} d_{l}}, &
		e^{- \theta \sum_{l=2}^{i-1} d_{l} },&
		\ldots
		1,&
		1,&
		e^{- \theta d_{i+1}},&
		\ldots
		e^{- \theta \sum_{l=i+1}^{n-1} d_{l}}  
		\end{pmatrix} 
		$, such that $(\MatrixE)_{ii} = 1$ and $(\MatrixE)_{i+1\;i+1} = 1$. Also, consider two vectors of length $n$, $\VectEOne =  
		\begin{pmatrix} 
			1 & 
			1&
			\ldots		1 &
			0 &
			0&
			\ldots	0  
		\end{pmatrix} $ and  $\VectETwo = \begin{pmatrix} 
		0&
		0&
		\ldots		0 &
		1 &
		1&
		\ldots	1  
		\end{pmatrix}
		$, such that $(\VectEOne)_{i} = 1 $ and  $ (\VectEOne)_{i+1} = 0 $, and $(\VectETwo)_{i} = 0 $ and $ (\VectETwo)_{i+1} = 1 $. Then, we may write $\sigmaPNot$ as, 
		\begin{align}
	\sigmaPNot  &= \MatrixE \;\;  \Big[  e^{-\theta a } \VectEOne + 	e^{-\theta (d_{i} - a) } \VectETwo\Big].
	 \label{DecompositionSigmaNot}
	\end{align}	
		
%Keep this in thesis 
%Keep this in thesis 
%Keep this in thesis 
%Keep this in thesis 

%That is,
%	\begin{align*}
%	\sigmaPNot  &= 
%	diag \; 
%	\begin{pmatrix} 
%	e^{- \theta \sum_{l=1}^{i-1} d_{l}}\\
%	e^{- \theta \sum_{l=2}^{i-1} d_{l} }\\
%	.\\
%	.\\
%	.\\
%	1 \\
%	1 \\
%	e^{- \theta d_{i+1}}\\
%	.\\
%	.\\
%	.\\
%	e^{- \theta \sum_{l=i+1}^{n-1} d_{l}}  
%	\end{pmatrix} 
%	%	\(
%	\begin{pmatrix}
%			e^{-\theta a }
%			\begin{pmatrix} 
%			1\\
%			1\\
%			.\\
%			.\\
%			.\\
%			1 \\
%			0 \\
%			0\\
%			.\\
%			.\\
%			.\\
%			0  
%			\end{pmatrix} 
%			+ 
%			e^{-\theta (d_{i} - a) }
%			\begin{pmatrix} 
%			0\\
%			0\\
%			.\\
%			.\\
%			.\\
%			0 \\
%			1 \\
%			1\\
%			.\\
%			.\\
%			.\\
%			1  
%			\end{pmatrix}
%		\end{pmatrix} .
%	\end{align*}	
%	
%	
Using the $n \times 1$ vectors $\VectUOne, \VectUTwo, \VectVOne$, and $ \VectVTwo$ defined as:
		\begin{align}
		\VectUOne^{T}	&= 
		\begin{pmatrix} e^{- \theta \sum_{l=1}^{i-1} d_{l}},& 
		e^{- \theta \sum_{l=2}^{i-1} d_{l} } ,& \ldots &  & \;\;,1^{i^{th} pos},& 0,& 0,\ldots& \;\;\;\;\;\;\;\;\;\;\;\;\;\;\;\;\;\;\;\;\;\;\;\;\;\;\;\;\;\; &&\ldots\ldots,0
		\end{pmatrix}, \nonumber \\
		\VectUTwo^{T} &= 
		\begin{pmatrix} 0, & 0,\ldots &	 \;\;\;\;\;\;\;\;\;\;\;\;\;\;\;\;\;\;\;\;\;\;\;\; & 	&\ldots ,0\;\;\;\;&,1^{(i+1)^{th} pos},& e^{- \theta d_{i+1}},  &
		& 	 & \;\ldots\ldots,e^{- \theta \sum_{l=i+1}^{n-1} d_{l}}  
		\end{pmatrix} ,\nonumber \\
		\VectVOne^{T}	&= 
		\begin{pmatrix} 0, & 
		0,\ldots &  & \;\;\; &,0^{(i-1)^{th}pos}, & \dfrac{1}{1-e^{-2\theta d_{i}}},& \dfrac{-e^{-\theta d_{i}}}{1-e^{-2\theta d_{i}}},& 0,& \;\;\;\;\;\;\;\;\;\;\;\;\;\;\;\;\; &\;\;\ldots\ldots\ldots,0
		\end{pmatrix}, \nonumber \\
		\VectVTwo^{T}	&= 
		\begin{pmatrix} 0, & 
		0,\ldots & & \;\;\; &,0^{(i-1)^{th}pos}, & \dfrac{-e^{-\theta d_{i}}}{1-e^{-2\theta d_{i}}},& \dfrac{1}{1-e^{-2\theta d_{i}}},& 0,& \;\;\;\;\;\;\;\;\;\;\;\;\;\;\;\;\;\;\; &\ldots\ldots\ldots,0
		\end{pmatrix} \nonumber 
		\end{align}
		and $\sigmaPNot $ from \eqref{DecompositionSigmaNot} and $\MatrixP^{-1}$ from \eqref{CovMatInverse}  we obtain:
	%Using simple matrix calculations, the value of $\MatrixP^{-1}$ as in equation \eqref{CovMatInverse} and equation \eqref{DecompositionSigmaNot} it could be easily verified that:
		\begin{align}
		\sigmaPNot 
		&= 
		e^{-\theta a } \VectUOne + e^{-\theta (d_{i}-a) } \VectUTwo, \label{decomposition1}\\
		\MatrixP^{-1} \sigmaPNot 
		&= 
		e^{-\theta a } 
		\VectVOne 
		+ e^{-\theta (d_{i}-a) }
		\VectVTwo, \label{decomposition2}\\
		\sigmaPNot^{T} \MatrixP^{-1} \sigmaPNot &= \dfrac{ e^{-2\theta a } -2  e^{-2\theta d_{i}   } + e^{-2\theta (d_{i} - a ) }   }{ 1- e^{-2\theta d_{i} }   },  \text{ and }  \label{decomposition3} \\ %\label{SMSPE_Decomposition_Eq_I}
		\VectOneN^{T} \MatrixP^{-1} \sigmaPNot  &= \dfrac{e^{- \theta a } + e^{- \theta(d_{i}  - a )  } }{ 1 + e^{- \theta d_{i}  }}. \label{decomposition4} %\label{SMSPE_Decomposition_Eq_II} 
		\end{align}
		
	\section{Appendix} \label{AppendixD}
We show here that if $x_0 \in [x_i, x_{i+1}]$ for some $i = 1,...,n-1$ then the $MSPE_{sk}(x_{0})$ is maximized at $x_{0} = x_{i} + \dfrac{d_{i}}{2}$. %which is used in Theorem~\ref{Theorem_SK_SMSPE}.
From Lemma~\ref{LemmaSimpleAndOrdinary}, we have 
	\begin{align}
	MSPE_{sk}(x_{0}) 
	& = \sigma_{11} \dfrac{\left( 1   - e^{-2 \theta a}\right) \left( 1   - e^{-2 \theta (d_{i} - a)}\right)}{\left( 1   - e^{-2 \theta d_{i}}\right)} \label{Thm1_eq1}.	
	\end{align}  
			Since, $x_0 \in [x_i, x_{i+1}]$ and $ a = x_{0} - x_{i} $, therefore $a \in [0,d_{i}]$ for $i=1,\ldots, n-1$. Now, consider the function  		
			\begin{flalign}
			W_{i}:& \;\; [0,d_{i}] \; \to \mathbb{R} \;\;\;\;\;\;\;\;\;\;\;\;\;  \nonumber \\%\mbox{ such that }	 \\
			a & \mapsto	  \dfrac{\left( 1   - e^{-2 \theta a}\right) \left( 1   - e^{-2 \theta (d_{i} - a)}\right)}{\left( 1   - e^{-2 \theta d_{i}}\right)}.  \nonumber	 
			\end{flalign}
			Differentiating $W_{i}(\cdot)$ with respect to $a$ we get, 
			\begin{align}
			& \dfrac{d W_{i}(a)}{ d a }  = \dfrac{2\theta\left( e^{-2 \theta a}   - e^{-2 \theta (d_{i} - a)}\right)}{\left( 1   - e^{-2 \theta d_{i}}\right)},   \nonumber	   
			\end{align}
			where, 
			\begin{align}
			& \dfrac{d W_{i}(a)}{ d a }\Bigg|_{a = d_{i}/2} = 0, \label{Thm1_eq2} 
			\end{align}  
			and 
			\begin{align}
			\dfrac{d^{2} W_{i}(a)}{ d a^{2} }   &=  \dfrac{ -4 \theta^2	 \left( e^{-2 \theta a}   + e^{-2 \theta (d_{i} - a)}\right)}{\left( 1   - e^{-2 \theta d_{i}}\right)} < \; 0.   \label{Thm1_eq3}
			\end{align} 
	From equations \eqref{Thm1_eq2} and \eqref{Thm1_eq3}, for $x_0 \in [x_i, x_{i+1}]$, $W_{i}(\cdot)$ is maximized at $d_{i}/2$ or equivalently $MSPE(x_{0})$ over $[x_i, x_{i+1}]$ is maximized at $x_{0} = x_{i} + \dfrac{d_{i}}{2}$. %, which is the mid-point of the interval $[x_i, x_{i+1}]$. \\
	 Hence, 
	\begin{align}
	\sup_{x_0 \in [x_{i}, x_{i+1}]} MSPE(x_{0}) = & W_{i}(d_{i}/2) \nonumber \\
	 = & \sigma_{11} \frac{1 - e^{-\theta d_i}}{1 + e^{-\theta d_i}}. \label{Thm1_eq4}
	\end{align}
	
	\section{Appendix} \label{AppendixE}
	We prove that $IMSPE_{sk}$ is a Schur-convex function. First note, $IMSPE$ is a symmetric function, that is, it is  permutation invariant in the $d_{i}$'s. Next we find $\dfrac{\partial IMSPE}{\partial d_{i}} $ and show that it is an increasing function in the $d_{i}$'s for $i= 1,\ldots,n$; 
	\begin{align}
	\dfrac{\partial \phi(d)}{\partial d} &= \dfrac{e^{2\theta d}-1-2\theta d e^{2\theta d}}{(e^{2\theta d}-1)^2}  \text{ is an increasing function in } d \in (0,1) \label{SchurConvex_SimpleKriging_III} \\
	%\text{and,  }
	%& \dfrac{\partial^{2} \phi(d)}{\partial d^{2}} = \dfrac{4e^{2\theta d}}{{(e^{2\theta d}-1)}^3} (1+\theta d + e^{2\theta d} (\theta d-1))  \geq 0 \label{PhiConvexEq2}
	\text{since, }
	\dfrac{\partial^{2} \phi(d)}{\partial d^{2}} &= \dfrac{4 \theta e^{2\theta d}}{{(e^{2\theta d}-1)}^3} (1+\theta d + e^{2\theta d} (\theta d-1)) \nonumber \\
	&= \dfrac{4 \theta e^{2\theta d}}{{(e^{2\theta d}-1)}^3} \;\; p(d,\theta)  \geq 0, \text{ for } d \in (0,1), \nonumber 
	\end{align}
	where $p(d,\theta)=(1+\theta d + e^{2\theta d} (\theta d-1))\geq 0$ and $\frac{\partial p(d)}{\partial d}|_{d=0}=\frac{\partial^2 p(d)}{\partial d^2}|_{d=0}=0 $ and $\frac{\partial^2 p(d)}{\partial d^2}>0$ for $d \in (0,1]$.\\% So, $\dfrac{\partial^{2} \phi(d)}{\partial d^{2}} \geq 0 $ for $d \in (0,1).$\\
	%			And $\dfrac{\partial^{2} \phi(d)}{\partial d^{2}} \geq 0 $ because $\frac{\partial p(d)}{\partial d}|_{d=0}=\frac{\partial^2 p(d)}{\partial d^2}|_{d=0}=0 $ and $\frac{\partial^2 p(d)}{\partial d^2}>0$ for $d \in [0,1]$.   \\
	As, $ \dfrac{\partial IMSPE}{\partial d_{i}} = 2  \sigma_{11} \dfrac{\partial \phi(d_{i})}{\partial d_{i}} $ for $i=1,\ldots, n-1$, using \eqref{SchurConvex_SimpleKriging_III} we can say:
	\begin{equation}
	\dfrac{\partial IMSPE}{\partial d_{k}} 
	\leq \dfrac{\partial IMSPE}{\partial d_{l}} \;\;\; \text{ for any }  d_{k} \leq d_{l}.  \label{SchurConvex_SimpleKriging_I} \\	
	\end{equation}
	Thus, using Theorem A.4 from \cite{MarshalOlkinBook}, we can say that $IMSPE$ is Schur-convex. %Hence IMSPE is minimized for an equispaced design, i.e., $d_i=\frac{1}{n-1}$ for all $i=1,\ldots, n-1$.
\section{Appendix} \label{AppendixF}
We show that for $x_0 \in [x_{i},x_{i+1}]$ for some $i = 1,\ldots, n-1$, $\smash{\displaystyle \sup_{x_0 \in [x_{i},x_{i+1}]} } \Big( 1 -  \VectOneN^{T} \MatrixP^{-1} \sigmaPNot \Big)^{2} $ is attained at $x_{0} = x_{i} + \dfrac{d_{i}}{2}$. From \eqref{decomposition4} in \ref{AppendixC} we have,
%Consider the case where $x_0 \in [x_{i},x_{i+1}]$ where, $i = 1,\ldots, n-1$. First we show that $\sup_{x_0 \in [x_{i},x_{i+1}]} \Big( 1 -  \VectOneN^{T} \MatrixP^{-1} \sigmaPNot \Big)^{2} $ is attained at $x_{0} = x_{i} + \dfrac{d_{i}}{2}$. Take $a= x_{0}-x_{i}$, then $a \in [0,d_{i}]$ then,
\begin{align}
\VectOneN^{T} \MatrixP^{-1} \sigmaPNot  &= \dfrac{e^{- \theta a } + e^{- \theta(d_{i}  - a )  } }{ 1 + e^{- \theta d_{i}  }}. \nonumber 
\end{align}
As $a \in [0,d_{i}]$, defining the function $U_{i}(\cdot)$ such that, 	
\begin{align}
U_{i}:&  [0,d_{i}] \to \mathbb{R} \nonumber \\
a & \mapsto \Bigg(1 -  \dfrac{e^{- \theta a } + e^{- \theta( d_{i} - a )  } }{ 1 + e^{- \theta d_{i}  }} \Bigg)^2 \nonumber 		
\end{align}
we obtain 	
%	\begin{flalign}
%	& \dfrac{d U_{i}(a)}{ d a }  = -2 \theta \underbrace{  \Bigg(1 -  \dfrac{e^{- \theta a } + e^{- \theta(d_{i} - a )  } }{ 1 + e^{- \theta d_{i}  }} \Bigg)}_{Term I} \underbrace{ 		\Bigg( \dfrac{ -e^{- \theta a}   + e^{-  \theta (d_{i} - a)}}{ 1   +  e^{- \theta d_{i}}} \Bigg) }_{Term II} & \label{Thm_ock_eq2} \\
%	&	\dfrac{d U_{i}(a)}{ d a }\Bigg|_{a = d_{i}/2} = 0 & \label{Thm_ock_eq3} \\
%	&	\dfrac{d^{2} U_{i}(a)}{ d a^{2} }   =  -4 \theta^{2} \Big( \dfrac{1-e^{-\theta	d_{i}/2}}{1+e^{-\theta d_{i}}} \Big)^2 e^{-\theta d_{i}}< 0 & \label{Thm_ock_eq4}
%	\end{flalign} 
\begin{align}
& \dfrac{d U_{i}(a)}{ d a }  = -2 \theta \underbrace{  \Bigg(1 -  \dfrac{e^{- \theta a } + e^{- \theta(d_{i} - a )  } }{ 1 + e^{- \theta d_{i}  }} \Bigg)}_{Term I} \underbrace{ 		\Bigg( \dfrac{ -e^{- \theta a}   + e^{-  \theta (d_{i} - a)}}{ 1   +  e^{- \theta d_{i}}} \Bigg) }_{Term II}  \label{Thm_ock_eq2} 
\end{align}
where
\begin{align}
&	\dfrac{d U_{i}(a)}{ d a }\Bigg|_{a = d_{i}/2} = 0  \label{Thm_ock_eq3} 
\end{align}
and 
\begin{align}
&	\dfrac{d^{2} U_{i}(a)}{ d a^{2} }   =  -4 \theta^{2} \Big( \dfrac{1-e^{-\theta	d_{i}/2}}{1+e^{-\theta d_{i}}} \Big)^2 e^{-\theta d_{i}}< 0 . \label{Thm_ock_eq4}
\end{align} 
From \eqref{Thm_ock_eq3} and \eqref{Thm_ock_eq4} we see $U_{i}(\cdot)$ attains a local maxima at $a = \dfrac{d_{i}}{2}$ and $U_{i}( \dfrac{d_{i}}{2}) = \Big( 1- \dfrac{2e^{- \theta	d_{i}/2}}{1+ e^{- \theta	d_{i}}}\Big)^2 > 0 $. To find the point of maxima $a = d_{i}/2$ we set $Term\;II$ in    \eqref{Thm_ock_eq2} equal to zero. Any other point $a_{1}$ at which $U^{\prime}(a_{1})=0$ is obtained by setting $Term\;I$ equal to zero; however, those points could not be the maxima as $U_{i}(a_{1})$ is zero. \\
Hence, we have shown that $\smash{\displaystyle \sup_{a \in [0,d_{i}]} }
 U_{i}(a) = \smash{\displaystyle \sup_{x_0 \in [x_{i},x_{i+1}]} } \Big( 1 -  \VectOneN^{T} \MatrixP^{-1} \sigmaPNot \Big)^{2}$ is attained at $a = \dfrac{d_{i}}{2}$ or $x_{0} = x_{i} + \dfrac{d_{i}}{2}$ for some $i=1,\ldots,n-1$, which is the mid-point of the interval $[x_{i},x_{i+1}]$. \\

Hence, we obtain 
\begin{align}
\sup_{x_0 \in [x_{i},x_{i+1}]} \Big( 1 -  \VectOneN^{T} \MatrixP^{-1} \sigmaPNot \Big)^{2} 
&= U_{i}( \dfrac{d_{i}}{2}) = \Big( 1- \dfrac{2e^{- \theta	d_{i}/2}}{1+ e^{- \theta	d_{i}}}\Big)^2 \label{EqA}
\end{align}
 
	\end{appendix}
	
	\section*{Acknowledgements}
	The authors would like to thank Prof. Subhankar Karmakar (Centre for Environmental Science and Engineering, IIT Bombay) for the data and the image of the river.
	
	\section*{Funding }
This project is funded by IITB-Monash Research Academy, India. \\
The work of S. Mukhopadhyay was supported by the Science and Research Engineering Board (Department of Science and Technology, Government of India) [File Number: EMR/2016/005142].

	%\begin{supplement} \label{SupplementTwo}
	%	\textbf{Supplement II}.
	%	This file has tables and figures corresponding to Section~\ref{Illustration_cokriging_known_parameter} and \ref{OptimalDesign_OrdinaryCokrig}. 
	%\end{supplement}

	%%%%%%%%%%%%%%%%%%%%%%%%%%%%%%%%%%%%%%%%%%%%%%%%%%%%%%%%%%%%%
	%%                  The Bibliography                       %%
	%%                                                         %%
	%%  imsart-???.bst  will be used to                        %%
	%%  create a .BBL file for submission.                     %%
	%%                                                         %%
	%%  Note that the displayed Bibliography will not          %%
	%%  necessarily be rendered by Latex exactly as specified  %%
	%%  in the online Instructions for Authors.                %%
	%%                                                         %%
	%%  MR numbers will be added by VTeX.                      %%
	%%                                                         %%
	%%  Use \cite{...} to cite references in text.             %%
	%%                                                         %%
	%%%%%%%%%%%%%%%%%%%%%%%%%%%%%%%%%%%%%%%%%%%%%%%%%%%%%%%%%%%%%
	
	%if your bibliography is in bibtex format, uncomment commands:
	%\bibliographystyle{elsarticle-num} % Style BST file
	%\bibliographystyle{imsart-number} % Style BST file (imsart-number.bst or imsart-nameyear.bst)
%	\begin{thebibliography}{00}
%	\end{thebibliography}

 \bibliographystyle{elsarticle-num} 
\bibliography{citebib__4}

\begin{thebibliography}{}

\bibitem[Antognini and Zagoraiou, 2010]{Kriging_Antognini_Zagoraiou_2010}
Antognini, A.~B. and Zagoraiou, M. (2010).
\newblock Exact optimal designs for computer experiments via kriging
  metamodelling.
\newblock {\em Journal of Statistical Planning and Inference}, 140:2607--2617.

\bibitem[Banerjee et~al., 2014]{Book_sudipto_banerjee2014}
Banerjee, S., Carlin, B.~P., and Gelfand, A.~E. (2014).
\newblock {\em Hierarchical Modeling and Analysis for Spatial Data}.
\newblock Chapman and Hall/CRC.

\bibitem[Baran et~al., 2013]{baran2013optimal}
Baran, S., Sikolya, K., and Stehl{\'\i}k, M. (2013).
\newblock On the optimal designs for the prediction of ornstein-uhlenbeck
  sheets.
\newblock {\em Statistics \& Probability Letters}, 83:1580--1587.

\bibitem[Baran and Stehl{\'\i}k, 2015]{Baran_Stehlik_2015}
Baran, S. and Stehl{\'\i}k, M. (2015).
\newblock Optimal designs for parameters of shifted ornstein-uhlenbeck sheets
  measured on monotonic sets.
\newblock {\em Statistics \& Probability Letters}, 99:114–124.

\bibitem[Bueso et~al., 1999]{Bueso_etal_1999}
Bueso, M., Angulo, J., Cruz-Sanjulian, J., and Garc{\'\i}a-Ar{\'o}stegui, J.
  (1999).
\newblock Optimal spatial sampling design in a multivariate framework.
\newblock {\em Mathematical Geology}, 31:507--525.

\bibitem[Caselton and Zidek, 1984]{CASELTON_ZIDEK_1984}
Caselton, W.~F. and Zidek, J.~V. (1984).
\newblock Optimal monitoring network designs.
\newblock {\em Statistics \& Probability Letters}, 2:223--227.

\bibitem[Chaloner and Larntz, 1989]{Chaloner_Larntz_1989}
Chaloner, K. and Larntz, K. (1989).
\newblock Optimal bayesian design applied to logistic regression experiments.
\newblock {\em Journal of Statistical Planning and Inference}, 21:191--208.

\bibitem[Chiles and Delfiner, 2009]{Book_Chiles_Delfiner}
Chiles, J.-P. and Delfiner, P. (2009).
\newblock {\em Geostatistics: Modeling Spatial Uncertainty}.
\newblock John Wiley \& Sons.

\bibitem[Cram{\'e}r, 1940]{Cramer_1940_theory}
Cram{\'e}r, H. (1940).
\newblock On the theory of stationary random processes.
\newblock {\em Annals of Mathematics}, 41:215--230.

\bibitem[Dette et~al., 2008]{Dette_et_al_2008}
Dette, H., Kunert, J., and Pepelyshev, A. (2008).
\newblock Exact optimal designs for weighted least squares analysis with
  correlated errors.
\newblock {\em Statistica Sinica}, 18:135--54.

\bibitem[Dette et~al., 2013]{Dette_et_al_2013}
Dette, H., Pepelyshev, A., and Zhigljavsky, A. (2013).
\newblock Optimal design for linear models with correlated observations.
\newblock {\em The Annals of Statistics}, 41:143--176.

\bibitem[Dette and Sperlich, 1996]{Dette_Sperlich_1996}
Dette, H. and Sperlich, S. (1996).
\newblock Some applications of stieltjes transforms in the construction of
  optimal designs for nonlinear regression models.
\newblock {\em Computational Statistics \& Data Analysis}, 21:273--292.

\bibitem[Diggle and Lophaven, 2006]{diggle}
Diggle, P. and Lophaven, S. (2006).
\newblock Bayesian geostatistical design.
\newblock {\em Scandinavian Journal of Statistics}, 33:53--64.

\bibitem[Genton and Kleiber, 2015]{Genton_Kleiber_2015}
Genton, M.~G. and Kleiber, W. (2015).
\newblock Cross-covariance functions for multivariate geostatistics.
\newblock {\em Statistical Science}, 30:147–163.

\bibitem[Journel, 1999]{journel1999markov}
Journel, A.~G. (1999).
\newblock Markov models for cross-covariances.
\newblock {\em Mathematical Geology}, 31:955--964.

\bibitem[Kisel'{\'a}k and Stehl{\'\i}k, 2008]{kiselak2008equidistant}
Kisel'{\'a}k, J. and Stehl{\'\i}k, M. (2008).
\newblock Equidistant and d-optimal designs for parameters of
  ornstein-uhlenbeck process.
\newblock {\em Statistics \& Probability Letters}, 78:1388--1396.

\bibitem[Le and Zidek, 1994]{Le_Zidek_1994}
Le, N. and Zidek, J. (1994).
\newblock Network designs for monitoring multivariate random spatial fields.
\newblock {\em Recent Advances in Statistics and Probability}, page 191–206.

\bibitem[Li and Zimmerman, 2015]{Cokriging_Li_Zimmerman_2015}
Li, J. and Zimmerman, D.~L. (2015).
\newblock Model-based sampling design for multivariate geostatistics.
\newblock {\em Technometrics}, 57:75--86.

\bibitem[Madani and Emery, 2019]{madani2019comparison}
Madani, N. and Emery, X. (2019).
\newblock A comparison of search strategies to design the cokriging
  neighborhood for predicting coregionalized variables.
\newblock {\em Stochastic Environmental Research and Risk Assessment},
  33:183--199.

\bibitem[Marshall et~al., 1979]{MarshalOlkinBook}
Marshall, A.~W., Olkin, I., and Arnold, B.~C. (1979).
\newblock {\em Inequalities: Theory of Majorization and Its Applications}.
\newblock Springer.

\bibitem[Myers, 1983]{Myers_1983}
Myers, D.~E. (1983).
\newblock Estimation of linear combinations and co-kriging.
\newblock {\em Journal of the International Association for Mathematical
  Geology}, 15:633–637.

\bibitem[Myers, 1991]{Myers_1991}
Myers, D.~E. (1991).
\newblock Pseudo-cross variograms, positivedefiniteness, and cokriging.
\newblock {\em Mathematical Geology}, 23:805–816.

\bibitem[Mylona et~al., 2014]{Mylona_Goos_Jones_2014}
Mylona, K., Goos, P., and Jones, B. (2014).
\newblock Optimal design of blocked and split-plot experiments for fixed
  effects and variance component estimation.
\newblock {\em Technometrics}, 56:132--144.

\bibitem[Protter et~al., 2012]{Book_Protter2012intermediate}
Protter, M.~H., Charles~Jr, B., et~al. (2012).
\newblock {\em Intermediate calculus}.
\newblock Springer Science \& Business Media.

\bibitem[Santner et~al., 2010]{Book_Santner}
Santner, T.~J., Williams, B.~J., Notz, W.~I., and Williams, B.~J. (2010).
\newblock {\em The Design and Analysis of Computer Experiments}.
\newblock Springer.

\bibitem[Sikolya and Baran, 2020]{sikolya2019optimal}
Sikolya, K. and Baran, S. (2020).
\newblock On the optimal designs for the prediction of complex
  ornstein-uhlenbeck processes.
\newblock {\em Communications in Statistics-Theory and Methods}, 49:4859--4870.

\bibitem[Singh and Mukhopadhyay, 2019]{Singh_Mukhopadhyay_2019}
Singh, R. and Mukhopadhyay, S. (2019).
\newblock Exact bayesian designs for count time series.
\newblock {\em Computational Statistics \& Data Analysis}, 134:157--170.

\bibitem[Singh and Mukhopadhyay, 2016]{Singh_Mukhopadhyay_2016}
Singh, S.~P. and Mukhopadhyay, S. (2016).
\newblock Bayesian crossover designs for generalized linear models.
\newblock {\em Computational Statistics \& Data Analysis}, 104:35--50.

\bibitem[Stehl{\'\i}k et~al., 2015]{stehlik2015robust}
Stehl{\'\i}k, M., L{\'o}pez-Fidalgo, J., Casero-Alonso, V., and Bukina, E.
  (2015).
\newblock Robust integral compounding criteria for trend and correlation
  structures.
\newblock {\em Stochastic Environmental Research and Risk Assessment},
  29:379--395.

\bibitem[Subramanyam and Pandalai, 2004]{subramanyam2004equivalence}
Subramanyam, A. and Pandalai, H. (2004).
\newblock On the equivalence of the cokriging and kriging systems.
\newblock {\em Mathematical Geology}, 36:507--523.

\bibitem[Ver~Hoef and Cressie, 1993]{Hoef_1993}
Ver~Hoef, J.~M. and Cressie, N. (1993).
\newblock Multivariable spatial prediction.
\newblock {\em Mathematical Geology}, 25:219--240.

\bibitem[Ver~Hoef et~al., 2006]{Ver_Hoef_2006}
Ver~Hoef, J.~M., Peterson, E., and Theobald, D. (2006).
\newblock Spatial statistical models that use flow and stream distance.
\newblock {\em Environmental and Ecological statistics}, 13:449--464.

\bibitem[Wackernagel, 2003]{Book_Wackernagel}
Wackernagel, H. (2003).
\newblock {\em Multivariate Geostatistics: An Introduction with Applications}.
\newblock Springer Science \& Business Media.

\bibitem[Woods and Van~de Ven, 2011]{Woods_Ven_2011}
Woods, D.~C. and Van~de Ven, P. (2011).
\newblock Blocked designs for experiments with correlated non-normal response.
\newblock {\em Technometrics}, 53:173--182.

\bibitem[Zagoraiou and Antognini, 2009]{Kriging_Zagoraiou_Antognini_2009}
Zagoraiou, M. and Antognini, B. (2009).
\newblock Optimal designs for parameter estimation of the ornstein–uhlenbeck
  process.
\newblock {\em Applied Stochastic Models in Business and Industry},
  25:583--600.

\bibitem[Zimmerman, 2006]{kriging_Zimmerman_2006}
Zimmerman, D.~L. (2006).
\newblock Optimal network design for spatial prediction, covariance parameter
  estimation, and empirical prediction.
\newblock {\em Environmetrics: The Official Journal of the International
  Environmetrics Society}, 17:635--652.

\end{thebibliography}

	       % Bibliography file (usually '*.bib')
	
	%% or include bibliography directly:
	%\bibliography{citebib__2}
	%\begin{thebibliography}{4}
	%%%
	%\bibitem{r1}
	%\textsc{Billingsley, P.} (1999). \textit{Convergence of
	%Probability Measures}, 2nd ed.
	%Wiley, New York.
	%
	%\bibitem{r2}
	%\textsc{Bourbaki, N.}  (1966). \textit{General Topology}  \textbf{1}.
	%Addison--Wesley, Reading, MA.
	%
	%\bibitem{r3}
	%\textsc{Ethier, S. N.} and \textsc{Kurtz, T. G.} (1985).
	%\textit{Markov Processes: Characterization and Convergence}.
	%Wiley, New York.
	%
	%\bibitem{r4}
	%\textsc{Prokhorov, Yu.} (1956).
	%Convergence of random processes and limit theorems in probability
	%theory. \textit{Theory  Probab.  Appl.}
	%\textbf{1} 157--214.
	%\end{thebibliography}
		\section*{Correspondence}
		Prof. Siuli Mukhopadhyay, \\
		Department of Mathematics, Indian Institute of Technology Bombay,\\
		 Mumbai, Maharashtra, 400076, India\\
	email - siuli@math.iitb.ac.in

\end{document}